\begin{document}

\newcommand\relatedversion{}
\renewcommand\relatedversion{\thanks{This research was supported in part by the Swiss National Science Foundation project 200021--184656 ``Randomness in Problem Instances and Randomized Algorithms'' and the Swiss National Science Fund (SNSF) grant no 200020--182517 “Spatial Coupling and Interpolation Methods for the Analysis of Coding and Estimation”.}} 

\title{\Large An Improved Analysis of Greedy for Online Steiner Forest\relatedversion}
\author{$\acute{\textrm{E}}$tienne Bamas\thanks{$\acute{\textrm{E}}$cole Polytechnique Fédérale de Lausanne.}
\and Marina Drygala\thanks{$\acute{\textrm{E}}$cole Polytechnique Fédérale de Lausanne.}
\and Andreas Maggiori\thanks{$\acute{\textrm{E}}$cole Polytechnique Fédérale de Lausanne.}}

\date{}

\maketitle






\begin{abstract} \small\baselineskip=9pt 
This paper considers the classic Online Steiner Forest problem where one is given a (weighted) graph $G$ and an arbitrary set of $k$ terminal pairs $\{\{s_1,t_1\},\ldots ,\{s_k,t_k\}\}$ that are required to be connected. The goal is to maintain a minimum-weight sub-graph that satisfies all the connectivity requirements as the pairs are revealed one by one. It has been known for a long time that no algorithm (even randomized) can be better than $\Omega(\log(k))$-competitive for this problem. Interestingly, a simple greedy algorithm is already very efficient for this problem. This algorithm can be informally described as follows:

\vspace{2mm}

\textit{Upon arrival of a new pair $\{s_i,t_i\}$, connect $s_i$ and $t_i$ with the shortest path in the current metric, contract the metric along the chosen path and wait for the next pair.}

\vspace{2mm}

Although simple and intuitive, greedy proved itself challenging to analyze and its competitive ratio is a long-standing open problem in the area of online algorithms. The last progress on this problem is due to an elegant analysis by Awerbuch, Azar, and Bartal [SODA~1996], who showed that greedy is $O(\log^2(k))$-competitive.

\vspace{2mm}

In this paper, we identify a natural measure of the ``efficiency" of greedy that we call the \textit{contraction}. The contraction of a pair $\{s_i,t_i\}$ is the ratio between the distance $d_G(s_i,t_i)$ in the graph $G$ and the actual cost that greedy pays for connecting the pair $\{s_i,t_i\}$. Intuitively, a worst-case instance should be an instance on which greedy is very ``inefficient'', i.e. an instance for which all pairs have a relatively small contraction. Indeed, one can remark that all hard instances that appeared in the literature are such that all pairs have a contraction of exactly 1 (which is the smallest contraction possible). Our main result, among others, is to show that greedy is $O(\log(k)\log\log(k))$-competitive on such instances.

At the heart of this new result lies an original use of dual fitting, in which we use the dual solution not only to lower bound the optimum as it is usually the case in competitive analysis, but also to recursively partition the global instance into several disjoint instances of much smaller complexity.

\end{abstract}

\section{Introduction}
One of the most classic problems in network design is arguably the Steiner Tree problem. Given a weighted graph $G=(V,E,w:E\mapsto \mathbb{R}_+)$ and a set $\calT\subseteq V$ of \textit{terminals}, one has to compute the cheapest tree that connects all the terminals. A straightforward generalization of this problem is the so-called Steiner Forest problem in which one is given a set $\calP$ of \textit{pairs} of vertices that are required to be connected. One has to buy the cheapest forest that connects all the pairs of terminals. In a seminal paper, Imase and Waxman \cite{imase1991dynamic} introduced the online version of the Steiner Tree problem. In this version, the terminals in $\calT$ are revealed one by one (in a possibly adversarial order), and the algorithm has to connect all previously arrived terminals before seeing the next one. The challenge lies in the fact that the algorithm is not allowed to remove edges that were bought before. Imase and Waxman provided tight bounds in this scenario by proving that (1) the natural greedy algorithm which simply connects the latest arrived terminal to the closest previously arrived terminal is  $O(\log(|\calT|))$-competitive and (2) no algorithm can be better than $\Omega (\log(|\calT|))$-competitive. Shortly after, Westbrook and Yan \cite{westbrook1995performance} introduced the online Steiner Forest problem. In this variation, the pairs in $\calP$ are revealed one after another (again in a possibly adversarial order). At any point in time, the algorithm has to maintain a feasible solution to the instance made of pairs that arrived so far without discarding previously selected edges. As it is a more general problem, the negative result of Imase and Waxman shows that no algorithm can be better than $\Omega( \log(|\calP|))$-competitive. From now on, with a slight abuse of notation, we will use $k$ to denote either the number of terminals $|\calT|$ in the online Steiner Tree problem or the number of pairs $|\calP|$ in the online Steiner Forest problem. A natural generalization of the greedy algorithm of Imase and Waxman for online Steiner Tree to the case of Steiner Forest can be described informally as follows:

\vspace{2mm}

\textit{Upon the arrival of a new pair $\{s_i,t_i\}$, connect $s_i$ and $t_i$ with the shortest path in the current metric, contract the metric along the chosen path and wait for the next pair.}

\vspace{2mm}

We mention that there are some subtleties about how the metric is contracted exactly, but for the sake of clarity, we will postpone these details to later in the introduction. The reader might think for now that greedy contracts the edges that it selects (i.e. for any edge $e$ selected by greedy, its weight $w(e)$ is set to 0). For now, an algorithm will be considered as ``greedy'' if it always buys the shortest path in the current metric and nothing else. Westbrook and Yan \cite{westbrook1995performance} showed that a wide class of greedy algorithms are $O\left(\sqrt{k}\log(k) \right)$-competitive. This bound was quickly improved by Awerbuch, Azar, and Bartal \cite{SODA96} who showed with an elegant dual fitting argument that greedy algorithms are in fact $O\left(\log^2(k) \right)$-competitive and conjectured that the right bound should be $O(\log(k))$. Since then, their conjecture has remained open. To the best of our knowledge, all lower bounds for greedy that appeared so far in the literature (see \cite{imase1991dynamic,alon1992line,SODA96,chen2010designing}) are instances where the underlying optimum forest is a single tree. Surprisingly, even in this case, nothing better than the $O\left(\log^2(k) \right)$  upper bound is known. We note that the $O(\log(k))$-competitive analysis of Imase and Waxman for online Steiner Tree does not extend to the online Steiner Forest problem, even if we additionally assume that the offline optimum is a single tree. Indeed, the solution constructed by greedy may not be a single connected component even if the offline optimum is. This problem is highlighted by the lower bounds that appear in \cite{SODA96,chen2010designing}, which show the limitations of the current analysis techniques. We defer a more detailed discussion of this phenomenon to Appendix \ref{subsec:appendixexample}.

\paragraph{Previous work on Steiner Forest.} Apart from the two results \cite{SODA96,westbrook1995performance} mentioned above, many papers are related to this problem. The competitive ratio of greedy was first mentioned again in a list of open problems by Fiat and Woeginger \cite{fiat1998online}. Around the same time, Berman and Coulston \cite{berman1997line} designed a more complex (non-greedy) algorithm which they showed to be $O(\log(k))$-competitive. In a nutshell, their algorithm constructs a dual solution in an online manner and uses this dual solution to guide the algorithm on which edges to buy. However, their algorithm is not greedy because it can buy additional edges that could be helpful in the future but are useless right now. Later, Chen, Roughgarden, and Valiant \cite{chen2010designing} applied the result of Awerbuch et al. regarding the greedy algorithm to design network protocols for good equilibrium.
Interestingly, they mention that the non-greedy algorithm of Berman and Coulston was not possible to apply in their setting. Umboh \cite{umboh2014online} introduced a framework based upon tree embeddings to obtain new results in online network design but also to obtain new proofs of known results. For instance, he obtains new proofs that greedy is $O(\log(k))$-competitive in the case of online Steiner Tree and that the Berman-Coulston algorithm is $O(\log(k))$-competitive in the case of online Steiner Forest. However, his framework does not imply any improved bound for greedy algorithms in online Steiner Forest. More recently, the performance of greedy algorithms for online Steiner Forest was further raised as an ``important open problem''  in \cite{dehghani2018greedy} and also cited in \cite{PanigrahiLecture}. 

\paragraph{Further related work.} Online Steiner Tree (or Forest) problems have also attracted attention in various special cases such as outerplanar graphs \cite{matsubayashi2021non}, euclidean metrics \cite{angelopoulos2009competitiveness} or more general cases such as Steiner Tree in directed graphs \cite{angelopoulos2007improved,angelopoulos2008near, faloutsos2002effect}, node-weighted instances \cite{naor2011online,hajiaghayi2013online}, prize collecting or degree-bounded variants \cite{qian2011logn,hajiaghayi2014near,dehghani2016online}. The fact that the competitive ratio of the greedy algorithm is still an open question is reminiscent of a similar situation for the offline Steiner Tree/Forest problems. In the offline case, it was known for a very long time that greedy (i.e. compute a minimum spanning tree of the metric completion on terminals) gives a constant factor approximation to the optimum Steiner Tree. It has only been recently proved by Gupta and Kumar \cite{gupta2015greedy} that a greedy-like algorithm also yields a constant factor approximation in the case of offline Steiner Forest.

\subsection{Our results}

As announced in the introduction, there are some subtleties about how the metric is contracted when running greedy. Hence we will first define three variants of greedy that contract the metric in slightly different ways. However, we emphasize that the best current upper bound for all these three variants is the $O(\log^2(k))$ upper bound of Awerbuch et al. Furthermore, these three variants behave \textit{exactly} the same on the problematic examples of \cite{SODA96,chen2010designing}. In particular, all the discussion so far applies to any of these three variants. Our main theorem will apply to all three variants of greedy, but we will be able to obtain more specialized results for some variants. We believe these specialized results will further demonstrate the interest of our main theorem. Before getting into the precise definition, it is worthwhile to mention that Gupta and Kumar \cite{gupta2015greedy} also discussed some subtleties about contracting the metric (in the offline case) and also defined several algorithms based on that. Hence it is not the first time that altering the contraction procedure has been considered.

\paragraph{Definition of the greedy algorithm.} After greedy connected the latest arrived pair $p=\{s,t\}$, it is clear that the distance between $s$ and $t$ can be set to $0$ in the current metric. This is the only property of the metric contraction that the proof of Awerbuch et al. uses to obtain a good upper bound. As long as we obtain a new graph $G'$ such that $d_{G'}(s,t)=0$, the $O(\log^2 (k))$ upper bound applies. In our paper, each greedy algorithm will formally maintain a graph $G^{(\tau)}=(V,E^{(\tau)})$ that accurately describes the current metric. In any case, greedy will always take the \textit{shortest path} in the current metric $G^{(\tau)}$ to connect a newly arrived pair. After connecting the $\tau$-th pair, the set of edges will be defined as $E^{(\tau)}=E\cup S^{(\tau)}$ where $S^{(\tau)}$ is a set of edges of weight $0$ over vertex set $V$ that we will call \textit{shortcuts}. It will be clear from definition that we will have the natural condition that $\emptyset = S^{(0)}\subseteq S^{(1)}\subseteq \ldots \subseteq S^{(k)}$. The cost incurred by greedy will always be the sum of lengths of all the shortest paths taken for connecting the pairs. We now proceed to define the three contraction rules formally.\\
\vspace{0.6mm}

\noindent\fbox{%
    \parbox{\textwidth}{%
        \textbf{Rule 1:} When greedy connects a pair $\{s,t\}$ through a path $s=v_0,v_1,v_2,\ldots ,v_\ell,t=v_{\ell+1}$, add the following shortcuts:
        \begin{itemize}
            \item For all $0\leq i\leq \ell$ add the edge $\{v_i,v_{i+1}\}$ of weight $0$.
        \end{itemize}
    }%
}
\vspace{1mm}\\
Rule 1 is what Awerbuch et al. intended in their original paper. It can be seen as simply contracting all the edges on the path taken.\\
\vspace{0.6mm}

\noindent\fbox{%
    \parbox{\textwidth}{%
        \textbf{Rule 2:} When greedy connects a pair $\{s,t\}$ add an edge $\{s,t\}$ with weight 0.
    }%
}
\vspace{1mm}\\
Rule 2 is actually how the metric is contracted in \cite{gupta2015greedy} for their main algorithm. Rule 2 might seem much weaker than Rule 1 as fewer shortcuts are added. One can see that for any $u,v\in V$, the distance between $u$ and $v$ can only be smaller when using Rule 1 over Rule 2. Hence the cost of greedy equipped with Rule 2 is always an upper bound on the cost of greedy equipped with Rule 1. However, the proof of \cite{SODA96} already applies in this case. Hence the greedy algorithm that uses Rule 2 is already $O(\log^2(k))$-competitive.\\
\vspace{0.6mm}

\noindent\fbox{%
    \parbox{\textwidth}{%
        \textbf{Rule 3:} When greedy connects a pair $\{s,t\}$ through a path $s,v_1,v_2,\ldots ,v_\ell,t$, let $s=v'_0,v'_1,\ldots ,v'_{\ell'},t=v'_{\ell'+1}$ be the sub-sequence of $s,v_1,v_2,\ldots ,v_\ell,t$ in which we keep only the vertices that appeared in a previous pair (i.e. previously arrived terminals). Then, add the following shortcuts:
        \begin{itemize}
            \item For all $0\leq i\leq \ell'$ add the edge $\{v'_i,v'_{i+1}\}$ of weight $0$.
        \end{itemize}
    }%
}
\vspace{1mm}\\
Intuitively, Rule 3 is in-between rules 1 and 2. It is also reminiscent of the contraction rule of the second algorithm in \cite{gupta2015greedy}. Again, using this rule, it is clear that we obtain shorter paths than with Rule 2. The $O(\log^2(k))$ upper bound also applies when using Rule 3. The formal definition of \greedy follows naturally for any $i\in \{1,2,3\}$.
\begin{algorithm}
\caption{\greedy}\label{alg:greedy}
\begin{algorithmic}[1]
\State Upon arrival of pair $\{s,t\}$, buy the shortest path in the current metric.
\State Update the metric from $G^{(\tau)}$ to $G^{(\tau+1)}$ using Rule $i$ and wait for the next pair.
\end{algorithmic}
\end{algorithm}

As a shorthand, we will denote by $\alg$ the greedy algorithm at hand. If we do not specify which contraction rule we use, it will implicitly mean that the statement that follows holds for any of our three rules.
\vspace{0.9mm}

\paragraph{Our main result.} We can now present our main result and discuss some of its consequences. We introduce an intuitive measure of the efficiency of \greedy. In general, it might be that the cost incurred by the pair $p=\{s,t\}$ is much smaller than $d_{G}(s,t)=d_{G^{(0)}}(s,t)$. Indeed, because of additional shortcuts, it might be that the ratio $d_{G}(s,t)/d_{G^{(\tau)}}(s,t)$ is unbounded when $d_{G^{(\tau)}}(s,t)=0$. We will define this ratio as the \textit{contraction} of pair $p$ and denote it $\alpha(p)$. We have that 
$$1\leq \alpha (p)\leq \infty,$$ for all pairs $p$, in any instance $\I$. Intuitively, a very high contraction means that \greedy did a good job at reusing edges bought before. Following this remark, one can note that all known lower bounds in \cite{imase1991dynamic,alon1992line,SODA96,chen2010designing} have a contraction of exactly 1 for all pairs in the instance (in the case of Steiner Tree instances in \cite{imase1991dynamic,alon1992line}, one can always choose one of the two endpoints of each pair so that this is the case). This seems to confirm the intuitive reasoning that a hard instance should have most of the pairs with low contraction.

After this remark, we use the shorthand $\alg=$ \greedy to denote the greedy algorithm at hand (using any of our three contraction rules). We denote by $\cost_\alg(\I,\calP_{<\alpha})$ the cost incurred by $\alg$ because of pairs of contraction strictly less than $\alpha$ (i.e. pairs $p$ with $\alpha(p)<\alpha$) when running instance $\I$. Note that we do not count the cost incurred by $\alg$ because of pairs with contraction higher than $\alpha$. Furthermore, we will denote $\cost_\alg(\I)$ the total cost incurred by $\alg$. Finally, denote by $w(\OPT(\I))$ the cost of the offline optimum. Our main result is the following.

\begin{theorem}[Main theorem]
\label{thm:main}
Fix a sequence $(\alpha_k)_{k\geq 0}$ with $\alpha_k\geq 1$ for all $k$. Let $\alg$ be a greedy algorithm running on instance $\I$. Then,
$$
    \cost_\alg \left(\I, \calP_{<\alpha_k}\right)=O(\log(k))\cdot  (\log(\alpha_k)+\log\log(k))\cdot w(\OPT(\I)).
$$
\end{theorem}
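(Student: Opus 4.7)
The plan is to adapt the Awerbuch--Azar--Bartal (AAB) dual-fitting argument, but to exploit the contraction hypothesis to get a tighter \emph{per-scale} packing and then to recurse on the offline optimum using the dual itself as a guide, instead of paying a second factor of $\log k$ by a crude scale-wise union bound. Restrict attention to $\calP^{\star} = \calP_{<\alpha_k}$ and assign the standard dual weight $y_p = \tfrac12\,\mathrm{cost}_\alg(p)$ to each $p\in\calP^\star$. The hypothesis $\alpha(p)<\alpha_k$ reads $d_G(s_p,t_p) < 2\alpha_k\, y_p$, so every charged pair is short in the \emph{original} graph $G$ and not merely in the contracted metric at its arrival time $\tau(p)$. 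Partition $\calP^\star$ into $\lceil\log_2 k\rceil$ scales $S_j=\{p:y_p\in[2^j,2^{j+1})\}$.

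\textbf{A refined per-scale packing.} The classical AAB moat lemma says that at the arrival of $p$, no endpoint of a previously connected pair lies in $B_{G^{(\tau(p))}}(s_p,y_p/2)\cup B_{G^{(\tau(p))}}(t_p,y_p/2)$. Intersecting this with $B_G(s_p,2\alpha_k y_p)$ localizes the packing inside a ball of the original graph. The technical core I would prove is that, in any such $G$-ball of radius $O(\alpha_k 2^j)$, the total dual weight of scale-$j$ pairs is at most $O(\log\alpha_k+\log\log k)$ times the Steiner-tree cost (in $G$) of their endpoints, by an Imase--Waxman style argument. The $\log\alpha_k$ factor is the logarithm of the ratio between the $G$-ball radius $\alpha_k 2^j$ and the AAB moat radius $2^j$; the additional $\log\log k$ accommodates a finer sub-scale discretization needed to reconcile the AAB moat geometry with the $G$-geometry.

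\textbf{Recursive decomposition of OPT.} Because every pair of $\calP^\star$ is short in $G$, each such pair is served entirely inside one tree of the optimal forest. Using the dual solution as a guide (as advertised in the abstract), I would recursively split each OPT tree into two sub-instances whose pair counts each drop by a constant factor. At each recursion level the small set of ``crossing'' pairs is charged directly to the current sub-tree via the refined per-scale packing, incurring $O(\log\alpha_k+\log\log k)$ times its weight, and the non-crossing pairs are split among the two children and handled inductively. The recursion has depth $O(\log k)$ and each level pays $O(\log\alpha_k+\log\log k)\cdot w(\OPT)$, which composes to the announced bound.

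\textbf{The main obstacle.} The delicate point is executing AAB moat geometry (which lives in the time-varying contracted metric $d_{G^{(\tau)}}$) inside a ball of the \emph{original} metric $d_G$, even though greedy has been shortcutting in between. The contraction assumption is precisely the bridge that makes the two metrics comparable on $\calP^\star$, but turning this comparability into a recursive statement that localizes to a single OPT sub-tree, without reintroducing a $\log k$ factor at every level of the recursion, is the hard step.
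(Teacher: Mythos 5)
Your ``refined per-scale packing'' lemma is false, and it is contradicted by an example that appears in the paper's Appendix~\ref{subsec:appendixexample}. Take a cubic $n$-vertex graph of girth $g=\Theta(\log n)$ with a BFS spanning tree $T$, set tree edges to weight $1$ and non-tree edges to weight $g/2$, and let the demands be a maximum matching $M$ of non-tree edges. Every pair in $M$ has greedy cost exactly $g/2$, hence $y_p=\Theta(\log n)$, all $|M|=\Omega(n)$ pairs lie in a single scale $j$ with $2^j=\Theta(\log n)$, and every pair has contraction exactly $1$ (so $\alpha_k=1$). Because $T$ is a BFS tree, the diameter of the weighted graph is $O(\log n)$, so every endpoint lies inside a single $G$-ball of radius $O(\alpha_k 2^j)$. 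The total dual weight of scale-$j$ pairs in that ball is $\Omega(n\log n)$, while the Steiner tree of their endpoints costs at most $n-1$ (the spanning tree $T$). The ratio is $\Omega(\log n)=\Omega(\log k)$, not $O(\log\alpha_k+\log\log k)$. The paper states this explicitly: even for a single cost class and a tree optimum, greedy on Steiner Forest can be $\Omega(\log k)$-competitive, which is precisely why the ``one cost class is $O(1)$-competitive'' Imase--Waxman argument (valid for Steiner Tree) does not transfer to Steiner Forest. Your recursion on OPT trees does not rescue this: its per-level charge relies on the packing lemma, and the example above makes that charge already $\Omega(\log k)$ at the top level.

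The paper's route is the reverse factorization and does not try to tighten the per-class packing, which is genuinely tight at $\Theta(\log k)$. Instead it groups the $\Theta(\log k)$ cost classes into $\Theta(\log\alpha_k+\log\log k)$ residue-class groups so that within one group consecutive classes are separated by a $\mathrm{polylog}(k)$ multiplicative gap, keeps the single most expensive group (losing a factor $O(\log\alpha_k+\log\log k)$), and then proves that one such ``well-separated'' group is $O(\log k)$-competitive. That last step is where the recursion happens, and crucially it recurses on dual balls of the larger class, not on OPT trees: a ball $B$ around a pair $p$ of class $j$ either gets its cost charged to the (numerous, cheap) pairs inside $B$ (so $B$ can be deleted), or absorbs the (few, cheap) pairs inside $B$ by charging them to $p$, or else the pairs inside $B$ become a disjoint sub-instance with at most $\mathrm{polylog}(k)\cdot c_j/c_i$ pairs in class $i<j$ and with $w(\OPT)$ bounded by $w(\OPT(\I)\cap B)$. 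Since the sub-instances live in disjoint balls, the AAB per-class loss $\log(k_j)$ shrinks geometrically down the recursion rather than stacking additively, which is what gives the $O(\log k)$ total. The contraction hypothesis is what guarantees both endpoints of a recursed pair lie deep inside $B$, so the sub-instance can be cut out cleanly; that is the one place the low-contraction assumption is genuinely load-bearing.
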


As mentioned, this theorem applies to greedy with any of our three contraction rules. This already implies the theorem of Awerbuch et al. as it is straightforward to see that pairs with contraction at least $k$ can make greedy pay at most $O(1)\cdot w(\OPT(\I))$. To see this, simply denote by $\calP'$ these pairs with contraction more than $k$ and $c$ the greedy cost of the most expensive pair in $\calP'$. Then greedy pays for those pairs in $\calP'$ at most $k\cdot c$ while $\OPT(\I)$ must pay at least $k\cdot c$ to connect the most expensive pair in $\calP'$. With this observation and plugging in $\alpha_k=k$ in our bound, we obtain an upper bound of $O(\log^2(k))$ on the competitive ratio.

A consequence of our result is that if one wants to have an $\Omega  \left(\log^{1+\epsilon}(k) \right)$ lower bound for some small $\epsilon>0$, it must be that the lower bound on the cost incurred by greedy comes from pairs with contraction at least $2^{\Omega(\log^{\epsilon}(k))}$. This already changes the perspective on how to obtain a stronger lower bound (if it exists) and shows that all previous lower bounds (that have contraction $\alpha_k=1$ for all pairs) cannot give anything stronger than $\Omega (\log(k)\cdot \log\log(k))$. The fact that the cost should come from pairs with high contraction seems quite counter-intuitive, and we believe this is strong evidence that the old conjecture of Awerbuch et al. should be true. Unfortunately, it is not clear to us how to formalize such an intuition. However, this result still has a number of additional consequences that are interesting. In the following, $w(T^\star(\I))$ denotes the optimum \textit{tree} solution to instance $\I$ (i.e. we restrict the solution to be a single connected component).

\begin{theorem}
\label{thm:tree}
    Let $\alg$ be the greedy algorithm using Rule 3. Then,
    $$
        \cost_\alg(\I)= O(\log(k)\log\log(k))\cdot w(T^\star(\I)).
    $$
\end{theorem}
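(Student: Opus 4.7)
The plan is to split the pair set into two regimes based on contraction and apply Theorem~\ref{thm:main} to the low-contraction regime while exploiting Rule~3 together with the tree structure of $T^\star$ for the high-contraction regime.

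First, fix a polylogarithmic threshold, say $\alpha_k = \Theta(\log k)$. For pairs with contraction less than $\alpha_k$, Theorem~\ref{thm:main} yields
\[
\cost_\alg(\I, \calP_{<\alpha_k}) = O(\log k)(\log \alpha_k + \log\log k) \cdot w(\OPT(\I)) = O(\log k \log\log k) \cdot w(T^\star(\I)),
\]
where we used $w(\OPT(\I)) \leq w(T^\star(\I))$ (every tree solution is a feasible forest solution) and $\log \alpha_k = O(\log\log k)$.

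It remains to bound the greedy cost incurred by pairs with contraction at least $\alpha_k$. For any such pair $\{s,t\}$, both endpoints lie in $T^\star$, so $d_G(s,t) \leq w(T^\star(\I))$ and the greedy cost for the pair is at most $w(T^\star(\I))/\alpha_k$. Naively summing gives only $k \cdot w(T^\star(\I))/\alpha_k$, which is far too weak, so we must exploit Rule~3's special property that shortcuts are placed only between previously-arrived terminals. Combined with the fact that $T^\star$ is a single tree connecting all terminals, this should enable a charging argument: greedy's cost on the high-contraction pairs should amortize against $T^\star$ with at most a logarithmic blow-up, in the spirit of the Imase--Waxman analysis for online Steiner Tree.

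The main obstacle is making this charging argument rigorous in the Steiner Forest setting. As noted earlier in the paper, the Imase--Waxman analysis does not directly extend to Steiner Forest even when OPT is a single tree, because the greedy solution need not be connected. Rule~3's terminal-to-terminal shortcuts are precisely the tool to get around this: the shortcut graph is supported on the terminal set, and once enough pairs of large contraction have been processed it should become sufficiently aligned with $T^\star$ to support a tree-based charging scheme. The extra $\log\log k$ factor in the final bound comes from the threshold choice $\alpha_k$ in the low-contraction regime, since both regimes end up contributing $O(\log k \log\log k) \cdot w(T^\star(\I))$ after taking the maximum.
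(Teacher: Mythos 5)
Your proposal correctly handles the low-contraction regime via Theorem~\ref{thm:main}, but the high-contraction regime is a genuine gap, as you yourself acknowledge. The "Imase--Waxman style" tree-charging you invoke is exactly the argument the paper explicitly warns does not carry over from online Steiner Tree to online Steiner Forest, even when the optimum is a single tree: greedy's output need not be connected, so there is no MST-type ordering of the terminal connections to charge against, and the example in the appendix shows greedy can already be $\Omega(\log k)$-competitive on a single cost class with OPT a tree. Your intuition that Rule~3's terminal-to-terminal shortcuts should "align" with $T^\star$ is plausible as a heuristic, but you give no mechanism that amortizes the high-contraction costs against $T^\star$, and it is not clear such a mechanism exists in this form.

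The paper's actual route to Theorem~\ref{thm:tree} avoids the contraction split entirely, which is precisely what makes it work. It builds an auxiliary instance $\I'$ by subdividing each pair $p_i$ along the path greedy (Rule~3) actually buys, into consecutive hops between previously-arrived terminals. This is Lemma~\ref{lem:subdivision_works_greedy_3}: the subdivided instance has (a) the same greedy cost, (b) \emph{every} pair with contraction exactly $1$, (c) $O(k^2)$ pairs, and (d) the same terminal set. Applying Theorem~\ref{thm:main} to $\I'$ with $\alpha_{k'}=1$ immediately gives $O(\log k' \log\log k')\cdot w(\OPT(\I'))$, and since the terminal set is unchanged, $T^\star(\I)$ is feasible for $\I'$, so $w(\OPT(\I'))\le w(T^\star(\I))$; finally $k'=O(k^2)$ only changes the logs by constants. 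In other words, the trick is not to partition pairs by contraction and hope to control the hard part, but to \emph{eliminate} high contraction at the source by re-expressing each greedy step as unit-contraction sub-steps, at the acceptable price of squaring the number of pairs. You would need to discover this reduction (or some equivalent device) to close your gap; as written, the second half of your argument does not go through.
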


As mentioned in the introduction, even in the case of a single tree spanning the whole graph, nothing better than the general $O(\log^2(k))$ was known, and all lower bounds in the literature are instances where the optimum is a single component. Theorem \ref{thm:tree} also makes us hopeful for another reason. Gupta and Kumar essentially showed that in the offline case, one can assume that the optimum is a single tree (at the cost of losing a constant factor). Then they showed that under this condition, their algorithm is a constant factor approximation. However, extending this proof idea to the online case does not seem straightforward. By using techniques from \cite{gupta2015greedy} in combination with ours, we can obtain the following last result.

\begin{theorem}
\label{thm:decreasing}
    Let $\alg$ be the greedy algorithm using contraction rule 3. Denote by $c_i$ the cost paid by $\alg$ when connecting pair $p_i=\{s_i,t_i\}$ and define $d_i=d_{G}(s_i,t_i)$ (note that we take the distance in the original graph $G$). If either of the two sequences $(c_1,c_2,\ldots ,c_k)$ or $(d_1,d_2,\ldots ,d_k)$ is non-increasing then 
    $$
        \cost_\alg(\I)= O(\log(k)\log\log(k))\cdot w(\OPT(\I)).
    $$
\end{theorem}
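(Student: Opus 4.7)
The plan is to reduce to Theorem~\ref{thm:tree} by splitting the arrival sequence into contiguous ``scale classes'' given by the non-increasing hypothesis, then combining the per-class tree bounds via a charging argument adapted from \cite{gupta2015greedy}. I focus on the case where $(d_i)$ is non-increasing; the case of $(c_i)$ is analogous, after one first argues that the cost scale of $c_i$ in the current contracted metric tracks a geometric scale closely enough for the same decomposition to apply.

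First I partition the pairs into distance classes $D_j = \{i : d_i \in [2^j,2^{j+1})\}$; after scaling there are only $O(\log k)$ non-empty classes. Because $(d_i)$ is non-increasing, each $D_j$ is a \emph{contiguous} block of pair arrivals. Letting $\tau_j$ denote the time just before the first pair of $D_j$ arrives, greedy processes $D_j$ as an uninterrupted phase on the current metric $G^{(\tau_j)}$. Since greedy's behavior depends only on the current metric, running greedy on the sub-instance $\I_j$ consisting of only the pairs of $D_j$ in their original relative order, but starting from $G^{(\tau_j)}$, gives exactly the $D_j$-portion of the full run; and because $G^{(\tau_j)}$ has only extra zero-weight shortcut edges compared with $G$, this quantity is at most $\cost_\alg(\I_j)$ on the original graph. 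Applying Theorem~\ref{thm:tree} to $\I_j$ then gives
\[
\sum_{i\in D_j} c_i \;\le\; \cost_\alg(\I_j) \;\le\; O\!\left(\log k\log\log k\right)\cdot w(T^\star(\I_j)).
\]

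The technical core is to build, for each class $j$, a feasible tree $T_j$ for $\I_j$ (so that $w(T^\star(\I_j))\le w(T_j)$) satisfying $\sum_j w(T_j) = O(w(\OPT(\I)))$. Starting from the minimal sub-forest $F_j^\star \subseteq \OPT(\I)$ that connects the pairs of $D_j$, one augments $F_j^\star$ into a tree $T_j$ by adding at most $|D_j|-1$ paths of length $O(2^j)$ between its components. The Gupta--Kumar charging scheme, suitably discretized, allocates the augmentation cost of class $j$ to portions of $\OPT$ that are ``new'' at scale $2^j$; the non-increasing ordering is precisely what prevents these portions from being re-charged at coarser scales. Summing,
\[
\cost_\alg(\I) \;=\; \sum_j\sum_{i\in D_j} c_i \;\le\; O\!\left(\log k\log\log k\right)\sum_j w(T_j) \;=\; O\!\left(\log k\log\log k\right)\cdot w(\OPT(\I)).
\]

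\textbf{Main obstacle.} The hard step is the last charging, which must bound $\sum_j w(T_j)$ by $O(w(\OPT))$ rather than the naive $O(w(\OPT)\cdot \log k)$ one gets by treating the classes independently. A ``restrict $\OPT$ to class $j$'' bound alone double-counts each edge up to $\log k$ times, which would cost an extra $\log k$ factor and destroy the improvement over the bound in Theorem~\ref{thm:main}. Eliminating this double-counting is where the non-increasing hypothesis truly enters and where the telescoping amortization of \cite{gupta2015greedy} is imported: each $\OPT$-edge is charged $O(1)$ times total across classes. Adapting their offline, continuous argument to our online, discrete scale decomposition, and verifying that it dovetails cleanly with the per-class reduction to Theorem~\ref{thm:tree}, is the technical heart of the proof. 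For the $(c_i)$-non-increasing case, an additional preliminary step relates the cost scale $c_i$ to a geometric distance scale in the contracted metric so that the same contiguous-class decomposition applies.
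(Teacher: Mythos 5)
Your approach is genuinely different from the paper's, but it has a real gap that I don't think is merely ``technical details.'' The paper's proof does not decompose into distance classes at all, and it does not use Theorem~\ref{thm:tree} as a black box. Instead it applies the subdivision construction of Lemma~\ref{lem:subdivision_works_greedy_3} \emph{once, globally}, to produce an instance $\I'$ with contraction $1$ on which greedy pays the same total cost. Then, starting from $\calF=\OPT(\I)$ and processing the pairs $p_i$ of $\I$ in arrival order, it adds the edges greedy buys for $\calP(\{s_i,t_i\})$ whenever $\calF$ fails to connect them, and shows via the Gupta--Kumar potential $\Phi(\calF,\I)=w(\calF)+\sum_j\wi(T_j,\I)$ that this augmentation never increases $\Phi$: the non-increasing hypothesis guarantees the added cost $c_i$ is at most the width of the smallest component being merged. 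This gives $w(\OPT(\I'))\le 2\,w(\OPT(\I))$ directly, and Theorem~\ref{thm:main} with $\alpha_k=1$ finishes the proof. There is a single, global amortization and no summation over scale classes.

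Your plan replaces this with a per-class reduction to Theorem~\ref{thm:tree}, and this is where it breaks. The step ``one augments $F_j^\star$ into a tree $T_j$ by adding at most $|D_j|-1$ paths of length $O(2^j)$'' is not justified and is in fact false in general. The components of $\OPT(\I)$ containing different pairs of $D_j$ can be arbitrarily far apart: two pairs $p,p'\in D_j$ each have $d_p,d_{p'}\approx 2^j$, but there is nothing forcing the $\OPT$-component of $p$ to be anywhere near the $\OPT$-component of $p'$. So forcing $T_j$ to be a single tree can cost $\gg w(\OPT(\I))$ for a single class, independently of whether a clever amortization exists for the sum. Moreover, even setting feasibility aside, the proposed bound $\sum_j w(T_j)=O(w(\OPT(\I)))$ is exactly where the claimed $\log k$ saving must come from, and you explicitly leave it as ``the technical heart of the proof.'' I don't see how to make your per-class Gupta--Kumar charging work, because the natural lower bound $\sum_j |D_j|\cdot 2^j\approx\sum_i d_i$ is itself the quantity you are trying to bound. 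The paper avoids both problems by never asking for a tree per class: it compares $\OPT(\I')$ (which may be a forest) to $\OPT(\I)$ via one monotone potential, and applies the main theorem in its forest form with $\alpha=1$.
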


This last result can be derived by using our main theorem, combined with the same potential function argument of \cite{gupta2015greedy}. The assumption is that either the costs paid by $\alg$ or the shortest paths in the original metric are non-increasing over time. This last result is interesting for the following reason: All lower bounds in the literature use the same strategy to fool greedy. First ask two terminals that are very far apart, greedy connects them through a path and then the endpoints of the second pair are closer but not on the path that greedy bought. Then one can repeat this process with the endpoints of the pairs that are getting closer and closer. This forces greedy to buy another path for every new pair. Crafting a difficult instance not following this behavior seems very difficult, especially in the light of the result of \cite{gupta2015greedy}. They show that their greedy-like algorithm, that connects the closest terminals which are not yet satisfied (i.e. the opposite behavior of the assumption of Theorem \ref{thm:decreasing}) is a constant factor approximation for the offline Steiner Forest.
\vspace{0.5mm}
\paragraph{} The takeaway message of all our results combined is that an $\omega (\log(k)\cdot \log\log(k))$ lower bound cannot be any of the following:
\begin{itemize}
    \item An instance in which all pairs have a small contraction, or
    \item an instance where a tree is a good solution, or
    \item an instance with non-increasing costs or shortest paths.
\end{itemize}
\vspace{0.5mm}
However, all the known examples providing an $\Omega (\log(k))$ lower bounds satisfy these conditions. We believe our results change the perspective on this problem, and it would be surprising if an $\Omega(\log^{1+\epsilon}(k))$ lower bound exists for some $\epsilon>0$.

\paragraph{} The rest of the paper is organized as follows. In Subsection \ref{subsec:techoverview}, we present an overview of our techniques that lead to new results. In Section \ref{sec:main} we present the proof of our main result, Theorem \ref{thm:main}. In Section \ref{sec:variants}, we present our proof of Theorem \ref{thm:tree} and Theorem \ref{thm:decreasing}. Finally in Section \ref{sec:open} we discuss the questions left open in this work.

\subsection{Our Techniques}
\label{subsec:techoverview}

In this subsection, we give an overview of the techniques to prove Theorem \ref{thm:main}. We will give the necessary technical details but keeping it rather informal. The complete and formal proof of Theorem \ref{thm:main} appears in Section \ref{sec:main}.
\vspace{0.5mm}
\paragraph{Previous techniques.} Before going into our techniques, we will briefly mention the main ingredients of previous proofs. For ease of notation, we will denote by $\calT$ the set of terminals that appear in at least one pair of $\calP$. For a terminal $u \in \calT$, denote  by $\overline{u}$ its \textit{mate} which is the terminal that $u$ should be connected to. Note that we can assume without loss of generality that all terminals have exactly one mate, by duplicating vertices if this is not the case. For ease of presentation, we will assume until the end of the section that for each pair $\{u, \overline{u}\}$ greedy pays exactly the distance of $u$ and $\overline{u}$ in the initial graph, i.e. no previously arrived pair helps greedy in paying less for the newly arrived terminal pair $\{u, \overline{u}\}$. Put otherwise, we assume that the contraction of all the pairs is equal to 1. 

By standard arguments, we can assume that $w(\OPT (\I)) = k$ and that for each pair of terminal greedy pays a cost that belongs to the set $\{ k/ 2^i\}_{1\leq i\leq \log(k)}$, with only the loss of a constant factor. Indeed, we can rescale all the edges in the graph $G$ by the factor $k/w(\OPT(\I))$ and for the second assumption, note that by standard geometric grouping, we can assume that greedy pays a cost that belongs to the set $\{ k/ 2^i\}_{1\leq i\leq \infty}$. It is then straightforward to see that the total cost incurred by greedy for pairs cheaper than $k/2^{\log(k)}=1$ is at most $O(w(\OPT(\I))$. Based on this observation, we will partition $\calP$ into disjoint sets $\calP^{(i)}$ where each set contains pairs of terminals for which greedy paid $k / 2^i$. These sets will be called \textit{cost classes}. Moreover in order to introduce the first observation let $B(v,r)=\{u\in V \mid d_{G}(v,u)<r\}$ be an open ball with center the terminal $u$ and radius $r$. Then a classic observation is the following.

\begin{observation}
\label{obs: disjoint dual balls}
Let $\calB=\{B_1,B_2,\ldots ,B_\ell\}$ be a collection of balls around terminals, such that (1) all the balls are pairwise disjoint and (2) each ball $B_i$ is centered as some terminal $u\in \calT$ and its radius $r_i$ satisfies $r_i<d_{G}(u,\overline{u})$. Then $\sum_i r_i \leq w(\OPT (\I))$.
\end{observation}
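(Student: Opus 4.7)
The plan is to prove this observation by a standard moat-packing / charging argument: we will show that each ball $B_i$ forces the optimum forest to "spend" at least $r_i$ units of weight strictly inside $B_i$, and since the balls are pairwise disjoint, these contributions cannot overlap and sum to at most $w(\OPT(\I))$.

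First I would fix an index $i$ and argue about a single ball. Let $u_i$ denote the center of $B_i$ and $\overline{u_i}$ its mate. Since $\OPT(\I)$ is a feasible Steiner forest, it contains a $u_i$--$\overline{u_i}$ path $P_i$. The defining condition $r_i < d_G(u_i, \overline{u_i})$ guarantees that $\overline{u_i} \notin B_i$, so the path $P_i$ must leave the ball. I would parametrize $P_i$ by arc-length starting at $u_i$, and look at the initial segment $Q_i$ of total length exactly $r_i$ (well-defined because the full length of $P_i$ is at least $d_G(u_i, \overline{u_i}) > r_i$). Every point of $Q_i$ is reachable from $u_i$ via a walk of length strictly less than $r_i$ in $G$, so every vertex of $Q_i$ lies in the open ball $B_i$. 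In particular the weight of the edges forming $Q_i$ is $r_i$ (up to an arbitrarily small slack produced by the single edge of $P_i$ that straddles the boundary, which we can handle by taking the length of $Q_i$ to be $r_i - \varepsilon$ and letting $\varepsilon \to 0$).

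Next I would globalize the argument. Because the balls $B_1, \ldots, B_\ell$ are pairwise disjoint on vertices, the prefix segments $Q_1, \ldots, Q_\ell$ use pairwise disjoint sets of internal vertices, and in particular no edge of $\OPT(\I)$ is counted in two different $Q_i$'s (an edge with both endpoints in $B_i$ cannot have either endpoint in any $B_j$ with $j \ne i$, and the straddling boundary edge can be attributed uniquely to the ball it exits). Summing the per-ball lower bounds then gives
\[
    \sum_{i=1}^{\ell} r_i \;\leq\; \sum_{i=1}^{\ell} w(Q_i) \;\leq\; w(\OPT(\I)),
\]
which is the claim.

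The only genuine obstacle is the careful handling of the boundary edge of $P_i$ that crosses out of $B_i$: such an edge has one endpoint inside $B_i$ and one outside, and a priori its other endpoint could sit inside some other ball $B_j$, tempting a double-charge. The clean fix is the $\varepsilon$-shrinking described above, which forces each $Q_i$ to consist entirely of edges strictly inside $B_i$; disjointness of the $B_i$'s then immediately kills any overlap, and the observation follows by taking $\varepsilon \to 0$.
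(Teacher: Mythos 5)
Your proof is correct and follows essentially the same moat-packing / dual-fitting argument the paper gives: each ball $B_i$ forces $\OPT(\I)$ to buy a path segment of length at least $r_i$ fully inside $B_i$, and disjointness of the balls lets these lower bounds be summed. The only cosmetic difference is in how the boundary-straddling edge is handled: you do this via an $\varepsilon$-shrinking limit, while the paper instead preprocesses the graph in Section~\ref{subsec:maindef} by subdividing every edge into pieces of arbitrarily small weight $\eta$, so that no edge can straddle a ball boundary in the first place.
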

The proof of this observation is straightforward, and we will come back to it later in Section \ref{sec:main}. These balls can be viewed as a solution to the dual of the natural linear programming relaxation of the Steiner Forest problem; hence we will refer to these balls as dual balls. We continue by restating (informally) a key lemma in the analysis of \cite{SODA96}.



\begin{lemma}[\cite{SODA96}]
\label{lem:disjoint balls for each class}
For any cost class $\calP^{(i)}$ (associated with cost $c_i=k/2^i$), it is possible to place disjoint dual balls in $G$ such that these balls are centered around terminals that belong to $\calP^{(i)}$ and they all have a radius of 
$$
    \frac{k}{2^i\cdot \log(|\calP^{(i)}|)}.
$$
Moreover, at the cost of losing a constant factor we can assume that every pair $p=\{s,t\}$ in $\calP^{(i)}$ has at least one dual ball centered around $s$ or $t$. 
\end{lemma}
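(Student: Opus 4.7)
The plan is to first derive a clean distance lower bound from greedy's behaviour, then turn it into disjoint dual balls via a packing/selection argument that loses a logarithmic factor in the radius.

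\textbf{Step 1 (single-pair distance bound).} When a pair $p=\{s,t\}\in\calP^{(i)}$ arrives at time $\tau$, by definition greedy pays $c_i=k/2^i$, so the shortest $s$--$t$ path in $G^{(\tau)}$ has length exactly $c_i$. Since $G^{(\tau)}$ is obtained from $G$ only by adjoining zero-weight shortcut edges, distances in $G^{(\tau)}$ are no larger than those in $G$, hence $d_G(s,t)\ge c_i$. Consequently the open balls $B(s,c_i/2)$ and $B(t,c_i/2)$ in $G$ are disjoint, and either one is an admissible dual ball in the sense of Observation \ref{obs: disjoint dual balls}. In isolation each pair therefore supports a dual ball of radius $c_i/2$; the loss in the lemma comes only from reconciling balls across pairs.

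\textbf{Step 2 (cross-pair distance bound).} For two pairs $p_j,p_{j'}\in\calP^{(i)}$ with $\tau_j<\tau_{j'}$, under any of the three contraction rules the shortcut added upon connecting $p_j$ makes $d_{G^{(\tau_{j'})}}(s_j,t_j)=0$. Hence any $s_{j'}$--$t_{j'}$ path routed through the component of $p_j$ has length equal to $d_{G^{(\tau_{j'})}}(s_{j'},\{s_j,t_j\})+d_{G^{(\tau_{j'})}}(t_{j'},\{s_j,t_j\})$, which by greedy's choice must be at least $c_i$. In particular, for the two endpoints $s_{j'},t_{j'}$ it cannot happen that both are within distance $c_i/2$ of the set $\{s_j,t_j\}$ in $G$. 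This says that the union of $(c_i/2)$-balls around $\{s_{j'},t_{j'}\}$ and the union of $(c_i/2)$-balls around $\{s_j,t_j\}$ cannot ``over-intersect''; one pair is always partially outside the other's reach.

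\textbf{Step 3 (packing to select representatives).} To turn these constraints into genuinely disjoint balls I would apply a doubling/selection argument at radius $r=c_i/\log m$, where $m=|\calP^{(i)}|$. Consider the conflict (multi-)graph $H$ whose vertices are the pairs of $\calP^{(i)}$ and whose edges record that two pairs have endpoints at original-graph distance $<2r$. The goal is to orient $H$ by choosing one endpoint per pair so that the resulting balls of radius $r$ are pairwise disjoint. A natural way to do so is to process pairs in reverse arrival order: by Step 2, the last-arriving pair has the strongest distance guarantee against all earlier pairs, so one of its endpoints can be chosen to avoid conflict; iterating, at each step at least one endpoint of the current pair survives, up to a $\log m$ blow-up in radius absorbed from repeated halving of the scale. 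The final ``every pair is covered'' statement (with the constant-factor slack allowed by the lemma) is obtained by charging any pair that cannot be covered to a nearby already-covered pair, using the cross-pair bound of Step 2 to keep the charging at most constant.

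\textbf{Main obstacle.} The hard part is exactly Step 3: obtaining disjoint balls of radius $\Theta(c_i/\log m)$ while \emph{still} hitting every pair (losing at most a constant factor). The greedy property bounds pairwise interactions between two pairs, but cumulatively many pairs could cluster around a single region; the $1/\log m$ shrinkage must be shown to be enough to decouple such clusters, and the charging argument used to handle uncovered pairs must be compatible with Observation \ref{obs: disjoint dual balls} (so that the aggregated radii still lower-bound $w(\OPT(\I))$). Getting this charging clean while exploiting only the weak constraints of Step 2 is where all the technical work will concentrate.
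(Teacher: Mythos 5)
There is a genuine gap, and you have located it yourself: Step~3 is where the entire content of the lemma lives, and your proposal does not contain the key idea needed to complete it. Your Step~1 (each pair alone supports a ball of radius $c_i/2$) and Step~2 (a pairwise separation constraint between two pairs of the same cost class) are correct but far too weak on their own: a pairwise constraint does not prevent cumulative clustering, which is exactly what you flag as the ``main obstacle.'' The proposed fix --- processing in reverse arrival order and ``charging'' uncovered pairs to nearby covered ones --- is not worked out, and it is unclear how one would bound the number of pairs charged to a single ball using only pairwise information.

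What is missing is the girth argument via Moore's bound, and this is the crux of Awerbuch--Azar--Bartal's proof. Concretely: process pairs in (forward) arrival order, attempting to place a ball of radius $r=c_i/(8\log|\calP^{(i)}|)$ around either endpoint. If neither placement avoids previously placed balls, skip the pair, and add an edge in an auxiliary unweighted graph $G'$ between the two centers of balls it collided with. The decisive claim is that $G'$ has girth at least $2\log|\calP^{(i)}|$: a shorter cycle of length $\ell$ would, by tracing through the zero-distance shortcuts of previously arrived pairs and hopping across balls at cost at most $4r$ per edge, exhibit an $s$--$t$ path of length $4r\ell < 8r\log|\calP^{(i)}| \le c_i$, contradicting greedy's choice. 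Moore's bound then forces $|E(G')|\le 4|V(G')|$, so the number of skipped pairs is at most a constant multiple of the number of placed balls; re-running the placement on what was built gives ``every pair covered, up to a constant factor.'' Your Step~2 gives the single-edge version of this argument; it is precisely the cycle-to-cheap-path contradiction that upgrades it to a global packing bound, and without it the argument does not close.
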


By taking Lemma \ref{lem:disjoint balls for each class} together with Observation \ref{obs: disjoint dual balls}, we obtain that greedy pays at most $O(\log(|\calP^{(i)}|))=O(\log(k))$ times the dual solution for each cost class. This proves that greedy is $O (\log (k) )$-competitive for each cost class $\calP_i$. Hence we see that the previous proof technique has mainly two ingredients:
\begin{enumerate}
    \item[(1)] Partition the set $\calP$ into disjoint cost classes $\calP^{(i)}$ such that $\calP = \bigcup_{i=1}^{\log (k)} {\calP}^{(i)}$ and for each pair in ${\calP}^{(i)}$ greedy paid $k/2^i$.
    \item[(2)] Prove that greedy is $O ( \log (k) )$-competitive for each cost class separately by building a dual solution.
\end{enumerate}
By (1) and (2) we get that greedy is $O (\log (k) \cdot \log (k) ) = O (\log^2 (k))$-competitive. Interestingly, in the case of Steiner Tree, it is possible to improve the second step by showing that greedy is $O(1)$-competitive for each cost class, hence the $O(\log(k))$ competitive ratio in general (see \cite{alon1992line}). Unfortunately, this is impossible in the case of Steiner Forest. Even if the underlying optimum is a spanning tree, it might be that greedy is already $\Omega(\log(k))$-competitive for a single cost class (see Appendix \ref{subsec:appendixexample} for an example). We also note that the Berman-Coulston algorithm \cite{berman1997line} is designed so that the algorithm is $O(1)$-competitive for each cost class, so the analysis of this more complex algorithm cannot apply to greedy.

\paragraph{Our new approach.} The first step of our new proof relies in a different partitioning of the set $\calP$. Indeed we will partition $\calP$ into $N = \Theta(\log \log (k))$ classes $\tilde \calP^{(j)}$ such that each class is defined as follows.
$$
     \tilde \calP^{(j)}=\bigcup_{i \equiv j \Mod{N}}\calP^{(i)}.
$$

Note that we have $\Theta\left(\log\log(k) \right)$ groups with this partitioning, each containing $\Theta\left(\frac{\log(k)}{\log\log(k)} \right)$ cost classes. Inside each group, the cost classes have the nice property that they are \textit{well separated}, that is, the multiplicative gap between two consecutive costs is $\textrm{polylog}(k)$. We will make good use of this property to disentangle the interactions between pairs that have different costs. Using these techniques we prove that the competitive ratio of greedy for each set $\tilde \calP^{(j)}$ is $O (\log (k))$ ending up with a competitive ratio of  $O (\log (k) \cdot  \log \log (k) )$ overall. The main technical challenge lies in proving such a result.

\begin{figure}
\centering
\includegraphics[scale=0.7]{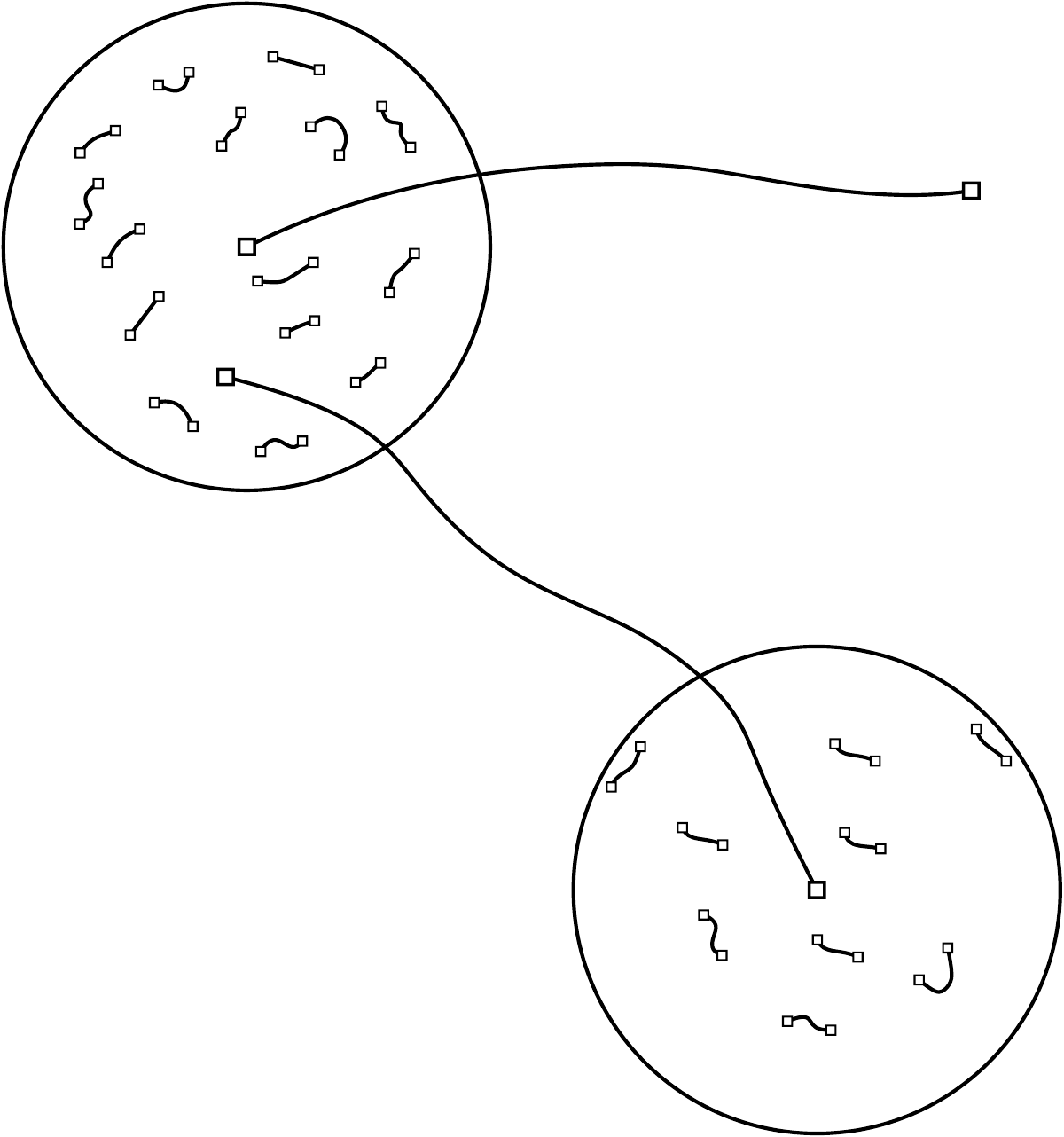}
\caption{An example with two cost classes before charging or clustering}
\label{fig:beforecharge}
\end{figure}
\begin{figure}
    \centering
    \includegraphics[scale=0.7]{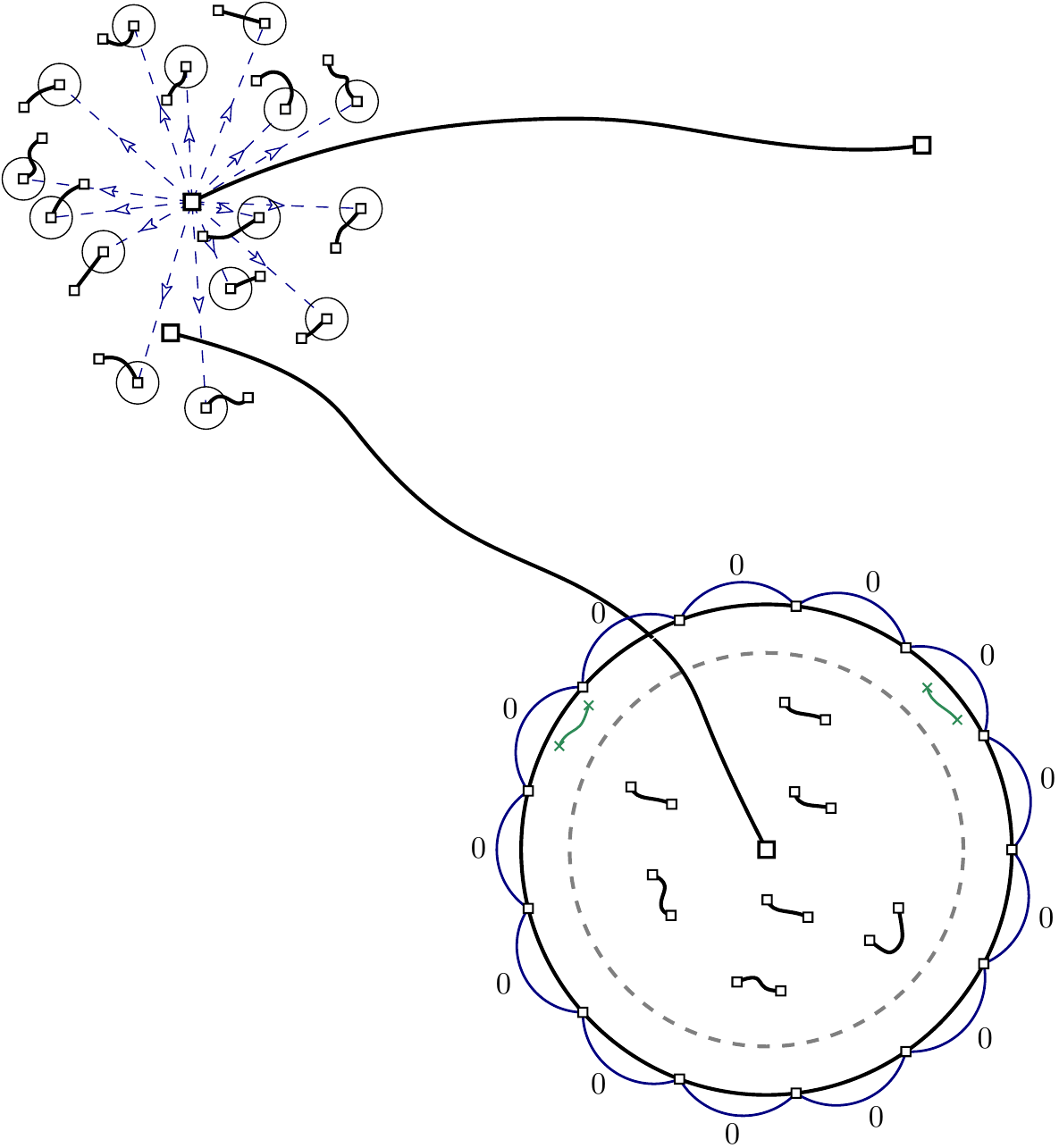}
    \caption{An example with two cost classes after charging or clustering. In the top left corner, smaller pairs are much more expensive than the pairs that created the big dual ball while we have the opposite situation in the bottom right corner.}
    \label{fig:aftercharge}
\end{figure}

If we use Lemma \ref{lem:disjoint balls for each class} to place dual balls around pairs in each cost class $\calP^{(i)}\subseteq \tilde \calP^{(j)}$ (hence creating several collections $\calB^{(i)}$ of balls) it might be that two dual balls $B,B'$ that belong to different sets $\calB^{(i)}$ and $\calB^{(i')}$ overlap. This is the critical issue in the previous proofs, and we will proceed differently. For simplicity, assume we have two cost classes in our set $\tilde \calP^{(j)}$. Let $c_i$ be the cost of the larger class and $c_{i'}$ the cost of the smaller class (hence $i'>i$). We place the dual balls only for the biggest cost class (using Lemma \ref{lem:disjoint balls for each class}). Intuitively, the worst case in the analysis will be when \textit{all} pairs of the smaller cost lie inside the dual balls from the bigger cost class (as depicted in Figure \ref{fig:beforecharge}). If this happens, it will be impossible to place dual balls for the small cost class without intersecting the bigger balls already placed. To overcome this issue, we consider a ball $B$ from the big cost class, and we look at the number of pairs from the small cost class that lie inside this ball. Let the number of such pairs be $k'$. We have two cases.
\begin{enumerate}
    \item[(1)] $k'\cdot c_{i'}\geq \textrm{polylog}(k)\cdot c_i$ which is the easy case. In this case, instead of charging the cost of the big pairs to the dual ball $B$ we can instead charge this cost to the smaller pairs inside $B$. By Lemma \ref{lem:disjoint balls for each class}, the cost that was initially charged to the ball $B$ was $O(c_i)$. Hence if we evenly distribute this cost among all the small pairs inside $B$, each pair will get a cost of roughly.
    $$
        O\left(\frac{c_i}{k'}\right)\leq O\left( \frac{c_{i'}\cdot c_i}{\textrm{polylog}(k)\cdot c_i}\right)=O\left(\frac{c_{i'}}{\textrm{polylog}(k)}\right).
    $$
    Since the cost was transferred to smaller pairs, we can also safely delete the big dual ball $B$, hence making this space available to place the smaller balls. Note that smaller pairs can be charged at most once in this way because the balls in the dual solution $\calB^{(i)}$ for big pairs are pairwise disjoint. This case is depicted in the top left corner of Figure \ref{fig:aftercharge}.
    
    \item[(2)] $k'\cdot c_{i'}<\textrm{polylog}(k)\cdot c_i$. This is the most challenging case and is depicted in the bottom right corner of Figure \ref{fig:aftercharge}. We cannot proceed as in the previous case as we cannot guarantee that small pairs do not get charged too much. Here lies the crux of our proof. First, by re-scaling slightly the ball $B$, we can assume that \textit{almost} \textit{all} the small pairs in $B$ are far from the border of $B$ (a pair $\{s,t\}$ is \textit{far} from the border if one of $s$ or $t$ is at a distance much bigger than $c_{i'}$ from the border of $B$). For simplicity we assume that all the small pairs are far from the border of $B$. From here we construct an instance $\I'$ as follows. Consider the graph $G'$ that is induced by vertices inside $B$. The instance $\I'$ will be composed of the set $\calP'$ containing all the pairs of small cost that are inside $B$, and the metric will be the graph $G'$. Here the assumption that the contraction is 1 implies that both endpoints of each pair in $\calP'$ should be inside $B$ (one endpoint well inside the ball and one outside would cost too much). It also implies that greedy behaves \textit{exactly} the same for the pairs in $\calP'$ in instance $\I'$ as it was behaving for these pairs $\calP'$ in instance $\I$, that is, for each pair $p\in \calP'$ greedy buys exactly the same path to connect $p$ whether instance $\I$ or $\I'$ is running. Recall that we assumed 
    $$
        k'\cdot c_{i'}<\textrm{polylog}(k)\cdot c_i
    $$ hence we have 
    $$
        k'< \frac{\textrm{polylog}(k)\cdot c_i}{c_{i'}}.
    $$
    We know by previous results that greedy is $O(\log(k))$-competitive on a single cost class; hence the competitive ratio of greedy on instance $\I'$ will be bounded by 
    $$
        O(\log(k'))=O\left(\log\left(\frac{\textrm{polylog}(k)\cdot c_i}{c_{i'}}\right)\right)=O(\log\log(k))+O\left(\log\left(\frac{c_i}{c_{i'}}\right)\right).
    $$
    Now the crucial question: What is the value of $w(\OPT(\I'))$? As we defined the graph $G'$ now, it is not clear. But because we assumed all the small pairs are far from the border of $B$, we can allow ourselves to modify the metric on the border of $B$ \textit{without} changing the behavior of greedy. If we consider $V'$ the set of vertices that lie \textit{exactly} on the border of $B$ we will add an edge of length 0 between any pair of vertices in $V'$. This does not change the behavior of greedy on instance $\I'$ because these extra edges are already too far from the pairs in $\I'$ to be used (see Figure \ref{fig:aftercharge}, bottom right corner). The interesting fact is now that 
    $$
        w(\OPT(\I'))\leq w(\OPT(\I)\cap B),
    $$ where we denote by $w(\OPT(\I)\cap B)$ the cost of edges bought by $\OPT(\I)$ inside the ball $B$.
\end{enumerate}

These observations suggest a Top-Down approach where we first try to place dual balls around big pairs. Then proceed by the case distinction described above. Then we move to the next cost class but ignoring all the pairs that got into case (2). We repeat this until we reach the bottom of the cost hierarchy. We end up with dual balls that have different radii but are all pairwise disjoint (because we ignored the pairs that were in case (2) of any iteration). During this process, each pair got into case (1) at most $\log(k)$ times, hence the total additional charge is $O(\frac{\log(k)}{\textrm{polylog}(k)})=O(1)$. It remains to handle all the pairs that were ignored. The idea is now that these pairs can be partitioned into disjoint instances with not too many pairs (recall that we have an upper bound on $k'$ in case (2)) and such that the optimum solution is at most what $\OPT(\I)$ pays inside the ball that created this instance. For instance in the case of two consecutive cost classes $\calP^{(i)},\calP^{(i')}$ (hence $\frac{c_i}{c_{i'}}=\textrm{polylog}(k)$), the total cost of ignored pairs would be:

$$
    \sum_{B\in \calB^{(i)}} \left(O(\log\log(k))+O\left(\log\left(\frac{c_i}{c_{i'}}\right)\right) \right)\cdot w(\OPT(\I)\cap B) = \sum_{B\in \calB^{(i)}}O(\log\log(k))\cdot  w(\OPT(\I)\cap B).
$$

But because the pairs in $\calB^{(i)}$ are pairwise disjoint we have $\sum_{B\in \calB^{(i)}}w(\OPT(\I)\cap B)\leq w(\OPT(\I))$. Hence in total the ignored pairs cost at most $O(\log\log(k))\cdot w(\OPT(\I))$ to greedy. Because we have $\Theta (\frac{\log(k)}{\log\log(k)})$ cost classes inside a set $\tilde \calP^{(j)}$, it feels that a $\log(k)$ competitive ratio for this set is now possible. Of course we took two consecutive cost classes so that $\frac{c_i}{c_{i'}}=\textrm{polylog}(k)$ but this is intuitively the worst case in the analysis. All this is formally handled via a delicate induction that is done in Section \ref{sec:main}.

\section{Proof of Theorem \ref{thm:main}}
\label{sec:main}
This section is devoted to the proof of Theorem \ref{thm:main}. Recall that this theorem applies to the three variants of greedy as defined in the introduction. Hence in this section, $\alg$ will denote \greedy for any $i\in \{1,2,3\}$. This section is organized as follows. In Subsection \ref{subsec:maindef}, we introduce some basic definitions that will be needed. In Subsection \ref{subsec:mainprelim}, we detail some results implied by previous work as well as some pre-processing of the instance needed for the rest of the proof. Namely, we recall the concept of dual fitting used by \cite{SODA96} . In addition, we pre-process the instance so that the different costs greedy pays upon the arrival of different pairs is well-structured (i.e. there is a geometric grouping and a big gap in-between two consecutive cost classes). In Subsection \ref{subsec:mainoverview}, we give an overview of the main body of the proof, and finally, in Subsections \ref{subsec:mainbalanced} and \ref{subsec:maininductive}, we finish the proof.

\subsection{Problem definition and notation}
\label{subsec:maindef}
We will consider a slightly more general problem than Online Steiner Forest. Formally, we are given a weighted graph $G=(V,E,w)$ with weight function $w:E\mapsto \mathbb{R}_{\geq 0}$. Along with graph $G$ we are given an ordered sequence of pairs of vertices $\calP=\{p_1,p_2,\ldots , p_k \}$ revealed one by one, and an ordered sequence of sets $\Se=\{E_1,E_2,\ldots ,E_k\}$ of additional weighted edges. These edges will be made available to the online algorithm $\alg$ over time as follows. Before revealing the first pair $p_1$, the set of edges in $E_1$ is added in the graph $G$ to form the graph $G_1=(V,E\cup E_1)$. These edges in $E_1$ will remain available to the greedy algorithm $\alg$ until the end. Then $\alg$ buys some path and contracts the metric according to its contraction rule as defined in the Introduction to obtain graph $G'_1$. Next, before revealing $p_2$, we add the edge set $E_2$ into the graph $G'_1$ to obtain the graph $G'_2$ (hence $\alg$ updates the metric accordingly). Then $\alg$ sees the pair $p_2$ and so on. In general, if $G^{(\tau)}$ denotes the current metric available to greedy after reading pairs $p_1,p_2,\ldots ,p_\tau$, we first add the edges of $E_{\tau+1}$ to the graph $G^{(\tau)}$ and after this $\alg$ connects the pair $p_{\tau+1}$ via the shortest path contracting the metric according to the chosen contraction rule. We call this variant \textit{online Steiner Forest in decreasing metrics}. This generalizes the classic Online Steiner Forest which is the special case where $E_i=\emptyset$ for all $i$.

The goal is to compare the cost incurred by $\alg$ on the instance $\I=(G,\calP,\Se)$ to the cost of the optimum Steiner forest in the graph $G$ with pairs $\calP$. We insist that the offline optimum is not allowed to use edges from $\Se$ while the algorithm $\alg$ can use these edges in $\Se$ after they are revealed to it. We will denote the optimum cost by $w(\OPT(\I))$. The \textit{size} of an instance $\I$ is the number of pairs in $\calP$. It will be denoted $k$ in the following (hence $k=|\calP|$). For each pair $p=\{s,t\}$, we will naturally call the \textit{endpoints} of the pair $p$ the two vertices $s,t$.

\vspace{0.2cm}
For any subset $S\subseteq \calP$, we will denote by $\cost_{\alg}(\I,S)$ the cost incurred by algorithm $\alg$ on instance $\I$ because of pairs in $S$. By a slight abuse of notation, for a single pair $p$, we will denote by $\cost_{\alg}(\I,p)=\cost_{\alg}(\I,\{p\})$ the cost that $\alg$ pays upon arrival of $p$. 

The \textit{contraction} of a pair with respect to instance $\I$ and algorithm $\alg$ will be the ratio of the shortest path distance in-between the two endpoints of the pair in $G$ and the actual cost paid by $\alg$ for this pair when running instance $\I$. Note that the shortest path is taken in the original graph $G$, without help of edges in $\Se$. Formally, if we denote by $\alpha(p)$ the contraction of pair $p=\{s,t\}$, we have
$$
    \alpha(p)=\frac{d_G(s,t)}{\cost_{\alg}(\I,p)},
$$

with the convention that $\alpha(p)=\infty$ if $\cost_{\alg}(\I,p)=0$. Given a fixed $\alpha\geq 1$ and an instance $\I=(G,\calP,\Se)$, we denote by $\calP_{<\alpha}$ the set of pairs of $\calP$ that have contraction less than $\alpha$ when running $\alg$ (i.e. the pairs $p$ with $\alpha(p)<\alpha$).

\vspace{0.2cm}
For simplicity we will assume that every edge is of weight exactly $\eta$ for some arbitrarily small $\eta>0$. If the graph $G$ does not satisfy this, we subdivide all the edges into chains of smaller edges. Of course, this increases the number of edges and vertices in the graph, but since our competitive ratio is only a function of the number of pairs of terminals, this subdivision will not hurt our analysis. It is also clear that subdividing edges changes neither the optimum solution nor the behavior of the greedy algorithm. It also does not change any of the parameters we just defined above. This assumption will be used for simplicity when constructing balls in the graph; we will assume that no edge in the graph $G$ has an endpoint inside and the other endpoint outside the balls (i.e. edges do not "jump over" the border of any ball). 

\subsection{Preliminary results and preprocessing of the instance}
\label{subsec:mainprelim}
We describe here some key concepts that will be useful in the rest of the proof. We first introduce the following definition that gives much more structure to the instance $\I$.

\begin{definition}[Canonical instance]
\label{def:canonical}
For any $\alpha,\delta \geq 1$, any instance $\I=(G,\calP,\Se)$ of online Steiner Forest in decreasing metrics is said to be $(\alpha,\delta)$-\textit{canonical} with respect to a greedy algorithm $\alg$ if the following holds:
\begin{enumerate}
    \item[(a)] There exist some real number $m>0$ such that for any pair $p\in \calP$, there exists an integer $1\leq   j \leq  \log(k)/\delta$ such that 
    $$
        \cost_{\alg}(\I,p) =  m/2^{j\cdot (\delta+10)}.
    $$
    (i.e. we have some geometric grouping of costs and two consecutive cost classes are separated by a multiplicative factor of at least $2^{\delta+10}$). We will say that cost classes are \textit{well separated}, and define $c_j= m/2^{j\cdot (\delta+10)}$.
    \item[(b)] All pairs in $\calP$ have contraction at most $\alpha$ when running $\alg$ on instance $\I$. We say that all pairs have \textit{low contraction}.
    \item[(c)] For any $i\geq 1$, the set of additional edges $E_i$ contains exactly one edge with the same endpoints of the pair $p_i$ and whose weight is exactly $\cost_{\alg}(\I,p_i)$ (i.e. we can assume $\alg$ connected the pair $p_i$ by simply using the single edge in $E_i$).
\end{enumerate}
\end{definition}
This definition suggests that we partition the set of pairs $\calP$ into \textit{cost classes} $\calP^{(1)},\ldots, \calP^{(j)},\ldots,\calP^{(\log(k)/(\delta+10))}$ where $\calP^{(j)}$ is the subset of pairs in $\calP$ that cost exactly $m/2^{(\delta+10) \cdot j}$. Note that there are at most $M\leq \log(k)/(\delta+10)$ distinct cost classes. Given this definition, we first claim the following lemma. Intuitively, the lemma states that worst-case instances can be reduced to $(\alpha,\delta)$-canonical instances (for some big $\delta$) at a multiplicative loss of $O(\log\log(k) + \log(\alpha))$.

\begin{lemma}
\label{lem:canonical_transform}
 For any instance $\I$ of size $k$, any greedy algorithm $\alg$ and any $\alpha\geq 1$, there exists an $(\alpha,\delta)$-canonical instance $\I'$ of size $k'\leq k$ such that 
 $$
     \cost_\alg(\I')\geq \frac{\cost_\alg(\I,\calP_{<\alpha})}{O (\log\log(k)+\log(\alpha))},
 $$
 $$
     \delta\geq 100\cdot (\log(\alpha)+\log\log(k)), \mbox{ and}
 $$
$$
    w(\OPT(\I'))\leq w(\OPT(\I)).
$$
\end{lemma}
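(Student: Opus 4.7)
The plan is to carry out three reductions of $\I$, each losing at most a controlled factor in the greedy cost on $\calP_{<\alpha}$ while preserving $w(\OPT)$. First, I would restrict attention to $\calP_{<\alpha}$, since condition~(b) only concerns these pairs. Second, because $\alg$ pays at most $d_G(s,t)\leq w(\OPT(\I))$ for a single pair, I would discard every pair with $\cost_\alg(\I,p)<w(\OPT(\I))/k$; their combined contribution is at most $w(\OPT(\I))$ and can be absorbed into the multiplicative factor claimed by the lemma. The remaining greedy costs lie in a range of logarithm $O(\log k)$, so rounding each one down to the nearest power of two produces $O(\log k)$ non-empty cost classes indexed by $\ell := \lfloor \log_2 \cost_\alg(\I,p)\rfloor$.

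Next, I would set $\delta:=100(\log\alpha+\log\log k)$ and $N:=\delta+10$, and partition the cost classes into $N$ residue classes modulo $N$. A pigeon-hole argument yields a residue $J$ for which the selected pairs $\calP'$---those in cost classes $\ell\equiv J\pmod{N}$---carry at least a $1/N$ fraction of $\cost_\alg(\I,\calP_{<\alpha})$. The rounded costs $\tilde c(p):=2^{\lfloor\log_2\cost_\alg(\I,p)\rfloor}$ of pairs in $\calP'$ are then pairwise separated by factors of $2^{\delta+10}$, so after letting $m$ be their maximum one can re-index them as $\tilde c(p)=m/2^{j_p(\delta+10)}$ with $1\leq j_p\leq\log(k)/\delta$, which is precisely condition~(a).

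The canonical instance is then $\I'=(G,\calP',\Se')$, where $\calP'$ appears in its original arrival order and $\Se'$ is constructed by \emph{replaying} $\alg$ on $\I$ and harvesting its shortcuts. Concretely, for the $\ell$-th selected pair $p_{\sigma(\ell)}$, the set $E'_\ell$ contains (i)~every weight-$0$ shortcut that $\alg$ created on $\I$ while processing the pairs strictly between the $(\ell-1)$-st and the $\ell$-th selected pair, together with the extra shortcuts (along intermediate vertices of the chosen path) that $\alg$ produced for $p_{\sigma(\ell-1)}$ in $\I$ but would not reproduce spontaneously in $\I'$ (non-trivial only for Rules~1 and~3); and (ii)~a single edge $\{s_{\sigma(\ell)},t_{\sigma(\ell)}\}$ of weight $\tilde c(p_{\sigma(\ell)})$. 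A straightforward induction on $\ell$ shows that just before $p_{\sigma(\ell)}$ arrives the metric in $\I'$ has the same $s_{\sigma(\ell)}$--$t_{\sigma(\ell)}$ shortest-path length as the metric in $\I$ at that moment, namely $\cost_\alg(\I,p_{\sigma(\ell)})\in[\tilde c(p_{\sigma(\ell)}),\,2\tilde c(p_{\sigma(\ell)}))$; hence $\alg$ on $\I'$ uses the newly added single edge and pays exactly $\tilde c(p_{\sigma(\ell)})$. This delivers condition~(c) and yields $\cost_\alg(\I')=\sum_{p\in\calP'}\tilde c(p)\geq\tfrac{1}{2N}\cost_\alg(\I,\calP_{<\alpha})$. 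The bound $w(\OPT(\I'))\leq w(\OPT(\I))$ is immediate since $G'=G$ and $\calP'$ is a sub-sequence of $\calP$.

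For condition~(b), the contraction of $p\in\calP'$ in $\I'$ equals $d_G(s,t)/\tilde c(p)\leq 2\alpha(p)<2\alpha$, where $\alpha(p)$ is the contraction in $\I$; the spurious factor of $2$ can be absorbed either by invoking the construction with $\calP_{<\alpha/2}$ in place of $\calP_{<\alpha}$ or by enlarging the constant hidden in $\delta$, which changes the final bound only by an additive $O(1)$. The main obstacle I foresee is the inductive metric-matching step: it must be verified separately for each of the three contraction rules, and one has to check---leveraging the edge-subdivision pre-processing of Subsection~\ref{subsec:maindef}---that no pre-added weight-$0$ shortcut has endpoints $\{s_{\sigma(\ell)},t_{\sigma(\ell)}\}$, which would violate the uniqueness clause of condition~(c).
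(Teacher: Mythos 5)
Your proposal follows the same overall blueprint as the paper: restrict to $\calP_{<\alpha}$, discard cheap pairs, geometrically group costs, pigeonhole on residues modulo $\delta+10$ to keep a $1/(\delta+10)$ fraction, then rebuild the instance by attaching one auxiliary edge per surviving pair that greedy will use. The pigeonhole step and the choice $\delta=\Theta(\log\alpha+\log\log k)$ match the paper exactly. The one substantive deviation is your item~(i): you ``harvest'' every zero-weight shortcut that $\alg$ produced on $\I$ (including those from dropped pairs and the Rule~1/Rule~3 internal shortcuts) and replay them into $\Se'$, so as to make the contracted metric of $\I'$ \emph{exactly equal} to that of $\I$ at each step. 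The paper does not do this, and its argument is both simpler and safer: it places \emph{only} the single edge $\{s_i,t_i\}$ of weight $\cost_\alg(\I,p_i)$ in $E_i$, and observes a one-sided containment — the shortcuts available to $\alg$ on $\I'$ at any time are always a subset of those on $\I$ at the corresponding time — so $\cost_\alg(\I',p_i)\geq \cost_\alg(\I,p_i)$, while the single added edge supplies a matching path. This avoids a wrinkle your version creates: condition~(c) of Definition~\ref{def:canonical} is phrased (``the single edge in $E_i$'') so that $|E_i|=1$, and flooding $E_i$ with the harvested shortcuts runs against that reading; moreover your induction requires a case analysis over the three contraction rules that the paper's one-sided argument makes unnecessary. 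Finally, both you and the paper are a little loose in the same two places (the additive $w(\OPT(\I))$ loss from dropping cheap pairs is silently absorbed into the multiplicative factor, and the rounding/$\alpha$-vs-$2\alpha$ slack is swept into constants), so I would not flag those as gaps specific to your write-up — but I would recommend dropping step~(i) entirely and replacing it with the containment argument, which gets you the same conclusion with less machinery and no tension with condition~(c).
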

\begin{proof} By standard geometric grouping arguments we can assume that there are at most $\log(k)$ cost classes $\calP^{(1)},\ldots ,\calP^{(\log(k))}$ such that the greedy algorithm $\alg$ pays a cost of $w(\OPT(\I))/2^{i-1}$ for all pairs in $\calP^{(i)}$. This first transformation already appeared in \cite{PanigrahiLecture} and loses a constant factor. Then we consider these cost classes but we keep only the pairs of contraction at most $\alpha$. Fix $\delta = 100\cdot (\log\log(k)+\log(\alpha))$ We partition the pairs as follows:
$$
    \tilde \calP^{(j)}=\bigcup_{i \equiv j \Mod{\delta+10}}\calP^{(i)}
$$ for all $0\leq j< (\delta+10)$. Since there are $(\delta+10)$ groups, one of them represents at least a fraction $1/(\delta+10)$ of the total cost. Keep only this group and transform the instance by adding additional edges to $\Se$ as follows. Assume that we kept the group $\tilde \calP^{(j)}$ then index the pairs in $\tilde \calP^{(j)}$ by order of arrival i.e. $\tilde \calP^{(j)}=\{p_1,\ldots , p_i,\ldots , p_{k'}\}$. For each pair $p_i\in \tilde \calP^{(j)}$, we define the set of additional edges $E^{(i)}$ as a single edge whose endpoints are exactly the endpoints of the pair $p_i$ and whose length is exactly what $\alg$ paid for this pair in the original instance $\I$. This formally describes the instance $\I'=(G,\tilde \calP^{(j)},\Se)$. We claim that for any pair selected, the greedy algorithm $\alg$ pays exactly the same cost for this pair regardless of which instance $\I$ or $\I'$ is running. We can prove this simple fact by induction of the number of pairs already arrived in $\I'$. If no pair has arrived this is clear. Now consider the next pair $p_i$ to arrive. Note that when running instance $\I'$, a path of length exactly $\cost_\alg(\I,p_i)$ is available to connect $p_i$ since  we added an edge in $E^{(i)}$ of exactly this length connecting the endpoints of $p_i$. We claim that there cannot be a shorter path. Indeed, by induction we assumed that $\alg$ paid the same in instance $\I$ and $\I'$ for previously arrived pairs hence it must be that the greedy algorithm $\alg$ used the additional edges in $\Se$ to connect previously arrived pairs. Because $\alg$ is \greedy for some $i\in \{1,2,3\}$, it must be that the shortcuts added by $\alg$ on instance $\I'$ so far are exactly edges of length 0 with endpoints at the endpoints of pairs arrived before $p_i$. Note that when running greedy on instance $\I$, the endpoints of previously arrived pairs must be at distance 0 when the new pair $p_i$ arrives. Hence all the shortcuts available to $\alg$ when receiving the pair $p_i$ in instance $\I'$ are also available when receiving the pair $p_i$ in instance $\I$. In particular, the shortest path taken by $\alg$ for pair $p_i$ in instance $\I'$ can only be longer than the path taken for pair $p_i$ in instance $\I$. 

One can see that in total we lose a multiplicative factor of at most $O(\delta)$ during the reduction. Finally, it is also clear that $w(\OPT(\I'))\leq w(\OPT(\I))$ and $k'\leq k$ since the graph $G$ has not changed and we keep in $\I'$ only a subset of the pairs in $\I$. This ends the proof of the lemma.
\end{proof}

The rest of the section will be devoted to the proof of the following theorem.
\begin{theorem}
\label{thm:main2}
Let $\alg$ be any greedy algorithm that uses one of our 3 contraction rules. Let $\I$ be an $(\alpha,\delta)$-canonical instance (of size $k$) of online Steiner Forest in decreasing metrics. Assume $\delta\geq 100\cdot (\log(\alpha)+\log\log(k))$. Then,
$$
    \cost_\alg \left(\I\right)\leq O(\log(k))\cdot  w(\OPT(\I)).
$$
\end{theorem}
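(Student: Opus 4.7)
The plan is to prove Theorem \ref{thm:main2} by strong induction on the instance size $k$, processing cost classes top-down and using a clean dichotomy between two recharging regimes. The base case is essentially a single cost class (or an instance with only $O(1)$ pairs), which is handled by directly applying Lemma \ref{lem:disjoint balls for each class} together with Observation \ref{obs: disjoint dual balls}, yielding $O(\log k)$ competitive ratio. For the inductive step, I will build a global dual solution in a greedy top-down fashion and simultaneously define a collection of recursive sub-instances; greedy's cost is bounded by the dual solution plus the cost charged to the sub-instances (handled inductively).

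Concretely, I process cost classes $\calP^{(1)},\calP^{(2)},\ldots ,\calP^{(M)}$ in decreasing order of $c_j$. At step $j$, applying Lemma \ref{lem:disjoint balls for each class} to the pairs still alive at cost class $j$ gives a collection $\calB^{(j)}$ of pairwise disjoint balls of radius $\Theta(c_j/\log k)$, each centered at a terminal of some pair in $\calP^{(j)}$, covering at least half of the remaining pairs in that class (so the rest can be charged with a constant loss). For each ball $B\in\calB^{(j)}$ and each later cost class $j'>j$, let $k'_{j'}(B)$ be the number of still-alive pairs in $\calP^{(j')}$ that have both endpoints strictly inside $B$. I do the following case split, using the threshold $\theta:=2^{\delta/2}$:
\begin{itemize}
    \item \textbf{Easy case:} if some $j'>j$ satisfies $k'_{j'}(B)\cdot c_{j'}\geq \theta\cdot c_j$, then I charge the cost $O(c_j)$ carried by $B$ to the pairs in that class inside $B$; each such pair absorbs at most $O(c_j/k'_{j'}(B))=O(c_{j'}/\theta)$, and a pair is charged at most $M\leq \log(k)/\delta$ times this way over the full execution (once per ball containing it from a class above), so each pair absorbs at most $O(c_{j'}\cdot M/\theta)=o(c_{j'})$ extra cost overall. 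Then $B$ is discarded from $\calB^{(j)}$.
    \item \textbf{Hard case:} otherwise all $k'_{j'}(B)\cdot c_{j'}<\theta\cdot c_j$. In this case I build a recursive sub-instance $\I_B$ as follows. Shrink $B$ slightly (to $B'$, still of radius $\Theta(c_j/\log k)$) so that every pair still alive from some class $j'>j$ lying inside $B$ is at distance $\gg \alpha\cdot c_{j'}$ from the boundary of $B'$, using a standard averaging argument on annuli (there are only $O(\log k)$ classes). Because every such pair has contraction at most $\alpha$, both its endpoints are at pairwise distance $\leq \alpha\cdot c_{j'}$ in $G$, so both endpoints of every such pair lie inside $B'$. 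This lets me identify all vertices of $\partial B'$ with a single point (adding $0$-weight edges) without affecting greedy's execution on the pairs collected inside $B'$. The sub-instance $\I_B$ is the induced metric on $B'$ with the identified boundary, the collected pairs of all later classes, and the canonical additional edges from property (c) of Definition \ref{def:canonical}. The pairs in $\I_B$ are exactly the pairs that fell into the hard case at ball $B$.
\end{itemize}

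The crucial structural facts to verify are: (i) greedy on $\I_B$ behaves identically to greedy on $\I$ restricted to those pairs, because of property (c) and the fact that shortcuts from earlier pairs are irrelevant (their endpoints are outside $B'$ after boundary identification); (ii) $w(\OPT(\I_B))\leq w(\OPT(\I)\cap B)$, because the optimal forest restricted to $B'$ plus $\partial B'$ identification is feasible for $\I_B$; (iii) $\I_B$ is itself $(\alpha,\delta)$-canonical, inheriting the cost-class structure from $\I$; (iv) the size of $\I_B$ is at most $\sum_{j'>j} k'_{j'}(B)\leq M\cdot\theta\cdot c_j/c_{j_{\min}(B)}$, which together with $c_j/c_{j'}\leq 2^{-(\delta+10)}$ for $j'>j$ bounds the sub-instance size strictly below $k$ (in fact by a factor $2^{-\Omega(\delta)}$), so the induction terminates. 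Summing by induction, $\cost_\alg(\I_B)\leq O(\log k_B)\cdot w(\OPT(\I\cap B))$, and summing over all hard-case balls in $\calB^{(j)}$ (disjoint by construction) gives total recursive cost $O(\log k)\cdot w(\OPT(\I))$.

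The remaining accounting is straightforward: the surviving balls $\calB^{(j)}$ (those in the hard case, after discarding easy-case ones) are placed in the graph and, across all $j$, continue to form a pairwise disjoint collection because the hard-case sub-instances live strictly inside their generating ball and do not create new balls outside. Observation \ref{obs: disjoint dual balls} then bounds the total radius-sum by $w(\OPT(\I))$, and each pair charged directly to its own ball pays $O(c_j)=O(\log k)\cdot \text{radius}(B)$, so the direct charge is $O(\log k)\cdot w(\OPT(\I))$. The easy-case redistribution contributes $o(c_{j'})$ per pair and hence $o(\cost_\alg(\I))$ overall, which can be absorbed. The hard-case contribution is $O(\log k)\cdot w(\OPT(\I))$ from the induction. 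Altogether $\cost_\alg(\I)=O(\log k)\cdot w(\OPT(\I))$, proving the theorem.

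The main obstacle will be the hard case: carefully justifying that the recursive sub-instance $\I_B$ really is canonical and that greedy's execution decouples from the outside world, which is where the low-contraction hypothesis is used in an essential way (to force both endpoints of every pair into $B'$ and to allow the boundary identification). A related technical delicacy is choosing the slight shrinkage $B\to B'$ uniformly across all later cost classes $j'$ while losing only a constant factor; this requires an averaging argument over $O(\log k)$ nested annuli and a careful choice of $\theta$ so that the factor $2^{-\Omega(\delta)}$ suppression in the size of $\I_B$ dominates the number $O(\log k)$ of recursion levels, ensuring the induction closes at the target $O(\log k)$ ratio rather than degrading to $O(\log^2 k)$.
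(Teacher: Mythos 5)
Your high-level plan closely mirrors the paper's proof of Theorem~\ref{thm:main2}: top-down placement of dual balls per cost class via Lemma~\ref{lem:AAB}, an easy/hard dichotomy based on how much small-cost mass a ball contains, redistributing easy-case cost to the small pairs inside, and clustering hard-case pairs into sub-instances where the ball boundary is identified to a single point. The gap is in the closing of the induction. You bound the direct (dual-ball) charges at the top level by $C_1\log k\cdot w(\OPT)$ and claim the hard-case contribution is $O(\log k)\cdot w(\OPT)$ ``from the induction,'' but each recursion level adds its \emph{own} $C_1\log k'\cdot w(\OPT)$ term to the ledger. The decrease you identify, $\log(k_B)\leq \log(k)-\Omega(\delta)$, is real, but it only caps the recursion depth at $\Theta(\log k/\delta)=\Theta(\log k/\log\log k)$ levels, and the telescoping sum $C_1\bigl(\log k+(\log k-\Omega(\delta))+\cdots\bigr)$ evaluates to $\Theta(\log^2 k/\delta)=\Theta(\log^2 k/\log\log k)$. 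A uniform $O(\log k)$ induction hypothesis therefore degrades by a factor equal to the recursion depth; the claim that the $2^{-\Omega(\delta)}$ size drop ``ensures the induction closes at the target'' is exactly the step that fails.

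What actually makes the recursion close in the paper (Lemma~\ref{lem:main_induction}) is a strictly more refined two-term hypothesis: $\cost_\alg(\I)\leq A\sum_j\log(k_j)\,w(\OPT\cap\calB^{(j)})+B\sum_j\delta\,(M-j)\,w(\OPT\cap\calB^{(j)})$. The crucial structural consequence of the hard case, which your argument collects but does not exploit, is that in a sub-instance $\I_B$ the re-indexed cost-class sizes satisfy $\log(k'_i)=O(\delta\cdot i)$ (from $k'_{j'}<\theta c_j/c_{j'}$). Plugging the hypothesis into $\I_B$ (which has $M'=M-j$ classes), the first term contributes $O(\delta)\sum_i i\cdot w(\cdot)$ and the second term contributes $B\delta\sum_i(M'-i)\cdot w(\cdot)$; these combine as $i+(M'-i)=M'$ to give precisely $B\delta(M-j)\,w(\OPT\cap B)$, regenerating the second term of the hypothesis at the parent level with the same constants. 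Only at the very end is $\delta M\leq \log k$ used to convert the second term into $O(\log k)\cdot w(\OPT)$. Without tracking both $\log(k_j)$ and the residual depth $(M-j)$ in the induction hypothesis, the argument cannot avoid the extra factor of the recursion depth, and your proof reaches $O(\log^2 k/\log\log k)$ rather than the claimed $O(\log k)$.
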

Note that Theorem \ref{thm:main2} together with Lemma \ref{lem:canonical_transform} imply Theorem \ref{thm:main}. To see this, consider any instance $\I$. By losing a multiplicative factor of $O(\log\log(k)+\log(\alpha))$ and only considering the pairs in $\calP_{<\alpha}$, we transform the instance $\I$ into an $(\alpha,\delta)$-canonical instance $\I'$ using Lemma \ref{lem:canonical_transform}. Then we apply Theorem \ref{thm:main2} on instance $\I'$ and the total competitive ratio for pairs in $\calP_{<\alpha}$ will be $O(\log(k))\cdot O(\log\log(k)+\log(\alpha))$ which is exactly what we wanted to prove.

\paragraph{Dual fitting.} A key technical ingredient in the proof of Theorem \ref{thm:main2} will be dual fitting, which was also used in \cite{alon1992line,SODA96} and is a common technique in competitive analysis. In the case of Steiner Forest, a natural way to do dual fitting without explicitly writing a linear program is to consider a set of balls in the graph $G=(V,E)$. For some vertex $v\in V$ and some radius $r>0$, the ball $B(v,r)$ is the open ball of center $v$ and radius $r$, i.e.
$$
    B(v,r)=\{u\in V \mid d_{G}(v,u)<r\}.
$$
Denote by $\calT$ the set of \textit{terminals} which are the vertices that appear in at least one pair. For a terminal $u\in \calT$, denote by $\overline{u}$ its \textit{mate} which is the terminal that $u$ should be connected to. Note that we can assume without loss of generality that all terminals have a only one mate, by duplicating vertices if this is not the case. Assume we have a collection $\calB=\{B_1,B_2,\ldots ,B_\ell\}$ of balls such that:
\begin{itemize}
    \item All of the balls in $\calB$ are pairwise disjoint, and
    \item Each ball $B_i$ is centered at some terminal $u\in \calT$ and its radius $r_i$ satisfies $r<d_{G}(u,\overline{u})$.
\end{itemize}
Then if we define $y=\sum_{i}r_i$ the sum of radii of these balls it must be that
$$
w(\OPT(\I))\geq y.    
$$
A reason for this is that any feasible solution to the Steiner Forest instance must connect $u$ to $\overline{u}$. If we look at any ball $B_i\in \calB$, then at its center lies at a terminal $u$, and we know that $\overline{u}$ is not in $B_i$. Therefore, to connect $u$ to $\overline{u}$, a feasible solution needs to buy at least a path from the center to the border of $B_i$, which will have length at least $r_i$. Since all balls are pairwise disjoint, we know that these paths will be disjoint, and we can sum the lower bounds on each ball.

An alternative view of this is that the dual balls can be seen as a feasible solution (because the balls are pairwise disjoint) to the dual of the natural LP relaxation of the Steiner Forest problem. Then by weak duality, we know that any feasible solution has cost at least the cost of the dual. Hence we will also refer to a collection of balls as above as a \textit{dual solution}. A dual solution is \textit{feasible} if all the corresponding balls are pairwise disjoint and are all centered around the endpoints of some pairs. Using the proof technique of Awerbuch, Azar, and Bartal, we obtain the following lemma whose proof appears in Appendix \ref{sec:appendix_proofs}.
\begin{lemma}[\cite{SODA96}]
\label{lem:AAB}
Let $\I$ be an instance of online Steiner Forest in decreasing metrics. Consider a cost class $\tilde \calP\subseteq \calP$ of pairs. Let $\calP'\subseteq \tilde \calP$ be an arbitrary subset of $\tilde \calP$. Let $c$ be the constant such that 
$$
    \cost_\alg (\I,p)=c
$$ for all $p\in \tilde \calP$ and $\alg$ a greedy algorithm. Fix an arbitrary radius 
$$
    r\leq \frac{c}{8\log(|\tilde \calP|)}.
$$
Then for any such radius $r$ it is possible to construct a feasible dual solution $\calB$ such that:
\begin{enumerate}
    \item[(a)] All the balls $B\in \calB$ have a radius equal to $r$,
    \item[(b)] $|\calP'|\leq 5\cdot |\calB|$,
    \item[(c)] each pair $p\in \calP'$ has at most one ball $B\in \calB$ (denoted $B(p)$) centered around one of its endpoints, and
    \item[(d)] all balls in $\calB$ are centered around endpoints of pairs in $\calP'$.
\end{enumerate}
\end{lemma}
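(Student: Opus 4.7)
My plan is to adapt the dual-fitting construction of Awerbuch, Azar and Bartal. I will process the pairs of $\calP'$ in reverse order of arrival, maintaining a growing family $\calB$ of balls of radius $r$. When a pair $p=\{s,t\}$ is processed I will first attempt to insert $B_G(s,r)$ into $\calB$; if the resulting collection is no longer pairwise disjoint I retry with $B_G(t,r)$; if neither endpoint yields a disjoint ball I mark $p$ as \emph{discarded}.

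Properties (a), (c), (d) and the feasibility of $\calB$ as a dual solution are immediate from this construction: every ball has radius $r$, is centered at an endpoint of some pair in $\calP'$, there is at most one center per pair, and the collection is pairwise disjoint by design. The constraint $r < d_G(u,\overline u)$ holds at every ball because each pair $p\in\tilde\calP$ costs greedy exactly $c$, so $d_G(s,t)\ge c$, while $r\le c/(8\log|\tilde\calP|)<c$.

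The substantive claim is the packing inequality $|\calP'|\le 5|\calB|$ of item (b), which I plan to prove by a charging argument. Whenever $p=\{s,t\}$ is discarded there must exist (possibly equal) balls $B_G(u_{q_s},r)$ and $B_G(u_{q_t},r)$ in $\calB$ with $d_G(s,u_{q_s})<2r$ and $d_G(t,u_{q_t})<2r$; by reverse-order processing both $q_s,q_t$ arrive after $p$. I will charge a blame of $1/2$ from $p$ to each of $q_s,q_t$, and it then suffices to show that every ball of $\calB$ absorbs total charge at most $2$: this yields at most $4|\calB|$ discards and therefore $|\calP'|=|\calB|+(\#\text{discarded})\le 5|\calB|$.

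The main obstacle is precisely this per-ball charge bound. Fix an accepted $q\in\calB$ with center $u_q$ and suppose pairs $p_1,\ldots,p_L$ arriving before $q$ each charge $q$ through a nominated endpoint $\sigma(p_\ell)\in B_G(u_q,2r)$. Because the metric $G^{(\tau)}$ only shrinks over time and greedy identifies the two endpoints of every processed pair, at the moment $p_L$ arrives the nominated endpoints and their mates of $p_1,\ldots,p_{L-1}$ form a cluster of diameter at most $4r$ in $G^{(p_L-1)}$ located within $2r$ of $u_q$, while $\sigma(p_L)$ itself lies within $2r$ of $u_q$. Using the radius bound $r\le c/(8\log|\tilde\calP|)$ and a doubling argument across the shells $B_G(u_q,2^k r)$, I plan to argue that each additional nominated endpoint inflates the contracted cluster enough that, for sufficiently large $L$, greedy would find a shortcut of length strictly less than $c$ from $s_{p_L}$ to $t_{p_L}$, contradicting the fact that greedy pays $c$ for $p_L$. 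This quantitative conversion of cluster growth into a shortest-path obstruction is where the logarithmic slack in the radius is consumed and where I expect the main technical difficulty.
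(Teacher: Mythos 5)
The construction you set up (try one endpoint, then the other, discard if both fail) matches the paper's, up to a change of processing order; both directions make the feasibility claims (a), (c), (d) immediate. The gap is exactly where you flag it: your per-ball charge bound is false, and no amount of shell-doubling will repair it, because the correct structural constraint is global, not local.

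Concretely, consider an instance with one ``hub'' location $u_0$ and $L$ ``spoke'' locations $u_1,\dots,u_L$, all pairwise far apart (say at mutual distance $\gg c$). Let $p_1,\dots,p_L$ arrive first, where $p_\ell$ has one endpoint within $2r$ of $u_0$ and the other within $2r$ of $u_\ell$, and let $q_0,q_1,\dots,q_L$ arrive afterwards, with $q_0$ having an endpoint near $u_0$ and $q_\ell$ an endpoint near $u_\ell$. When $p_\ell$ arrives, the contracted pairs $p_1,\dots,p_{\ell-1}$ let greedy travel cheaply from a neighbourhood of $u_0$ to the neighbourhoods of $u_1,\dots,u_{\ell-1}$, but $u_\ell$ is a fresh far-away location, so greedy still pays about $d_G$ of the pair, i.e., $c$; this is consistent with a single cost class. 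Running your construction in reverse arrival order, all of $q_0,\dots,q_L$ are accepted, all of $p_1,\dots,p_L$ are discarded, and every $p_\ell$ sends $1/2$ to $q_0$'s ball, so $q_0$'s ball absorbs charge $L/2$. The ``contracted cluster'' around $u_q=u_0$ that you describe does form in $G^{(p_L-1)}$, but the far endpoint $\overline{\sigma(p_L)}$ of $p_L$ sits near the \emph{new} location $u_L$ which is nowhere near the cluster, so there is no cheap shortcut for $p_L$ and no contradiction. The obstruction you are implicitly encoding is that the auxiliary graph on accepted-ball centers, with one edge per discarded pair, has girth $\Omega(\log|\calP'|)$ (if it had a short cycle, walking around it through the contracted earlier pairs would give a path of length $<8r\log|\calP'|\le c$, contradicting greedy). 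A large-girth graph on $n$ vertices has $O(n)$ edges by Moore's bound, which is exactly what gives $|\calP'|\le 5|\calB|$; but it can still contain a vertex of degree $\Theta(n)$ (a star has infinite girth). So the edge count can be bounded while your per-ball degree bound cannot, and you need the global Moore's-bound counting rather than a local charging argument.
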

This lemma is actually the crux of the previous analysis that gives $O(\log^2(k))$ in general. We will use this lemma as a starting point for our improved analysis. We are now ready to start the overview of our main proof.

\subsection{Overview of the proof}
\label{subsec:mainoverview}

Recall that we aim to prove Theorem \ref{thm:main2}. By assumption we have that all pairs have contraction at most $\alpha$ and that the cost classes are well separated. Recall that this means the multiplicative gap between two consecutive cost classes is at least $2^\delta$ for some $\delta\geq 100\cdot (\log\log(k) + \log(\alpha))$. We use $M$ to denote the number of cost classes where class $j$ is denoted by $\calP^{(j)}$ for $1\leq j \leq M$.

The goal will be to construct a feasible dual solution $\calB$ that has some special properties. This dual solution will be constructed by taking a subset of the dual balls in each of the $\log(k)$ dual solutions constructed with the technique of Awerbuch, Azar, and Bartal \cite{SODA96}. In the end we will charge a portion of the cost that $\alg$ pays to the dual solution $\calB$. The remaining portion of the cost $\alg$ pays will be handled by an inductive argument. To this end we will have a charging scheme, $\ch:\calP\mapsto \mathbb{R}_+$ that will redistribute $cost_A(\I)$ amongst the terminal pairs.

Precisely,  the total cost that a pair $p$ carries will be
$$
    \ch(p)\cdot \cost_{\alg}(\I,p).
$$
Hence we see the charge as an additional multiplicative factor on the cost of a given pair (initially, the charge is set to 1 for all pairs). 

Note that we might sometimes decrease or increase the charge of a pair or transfer the cost, but we will always make sure that when a charge of a pair $p$ is decreasing, the charge of some other pairs are increasing accordingly so that no cost is lost. For any set $S$ of terminal pairs, we will let $\cost_\alg(\I,S,\ch)$ denote the total \textit{charged} cost carried by pairs in $S$. Formally,
$$
    \cost_\alg(\I,S,\ch) = \sum_{p\in S}\ch(p)\cdot \cost_\alg(\I,p).
$$

In the proof the pairs in $\calP$ will be classified into three types: \textit{surviving} pairs, \textit{charged} pairs, and \textit{dangerous} pairs. The \textit{surviving} pairs will contain pairs $p$ such that there is a ball $B(p)\in \calB$ centered around one of the endpoints. Intuitively these pairs are good for us since we can charge their cost directly to the dual ball $B(p)$. The other pairs will be by default classified as \textit{non-surviving}. Non-surviving pairs are further partitioned into two subsets, charged or dangerous pairs. \textit{Charged} pairs are those pairs that have their charge set to 0 (i.e. $\ch(p)=0$). Intuitively, they are also in an excellent situation for us since it means that we were able to transfer their cost entirely to other pairs. We do not need to count them in our total cost anymore. Finally, the \textit{dangerous} pairs are those pairs have neither a charge equal to $0$ nor a dual ball in $\calB$ centered at one of the endpoints. These pairs will be handled carefully via an inductive argument since we cannot charge them to the dual solution nor to some other pair. To keep careful track of all these elements, we will store a triple $(\calB,\ch,D)$ where $\ch$ is the charge function as described above, $\calB$ is a feasible dual solution and $D\subseteq \calP$ a set of \textit{dangerous} pairs. The family of dual balls will be a union of subsets of dual balls $\calB^{(j)}$ for $1\leq j \leq M$. Each of the balls in $\calB^{(j)}$ will account for a subset of pairs in $\calP^{(j)}$ and have some radius of roughly 
$$
    r_j=\frac{c_j}{8\log(k_j)},
$$ where $c_j$ is the cost of pairs in $\calP^{(j)}$ and $k_j=|\calP^{(j)}|$. This choice of radius is coming from previous work, summarized in Lemma \ref{lem:AAB}. 

Note that in our procedure, it might be that some pairs are not yet classified into one of the three categories (surviving, charged, or dangerous). However, at the beginning of iteration $j$, all pairs in cost classes $j'<j$ will be classified. The procedure contains two main steps.

\paragraph{Step 1.} In this step, we start taking into account interactions in-between cost classes. Informally, we do an iterative procedure from $j=1$ to $M$, where we try to build the dual solution from top to bottom. When we start iteration $j$ of this procedure we have a feasible dual solution composed of $\calB^{(1)}, \dots, \calB^{(j-1)}$ (of radius $\Theta(r_j)$ with $r_j$ specified above) centered around pairs in $\calP^{(1)}, \dots, \calP^{(j-1)}$. All the pairs in $\calP^{(j)}$ that are not yet classified are guaranteed to be far from the dual balls already in place. We then look at pairs in $\calP^{(j)}$ that are not yet classified, and build a dual solution $\calB^{(j)}$ around these pairs using Lemma \ref{lem:AAB}. Because unclassified pairs are far from previously placed balls in $\calB^{(1)}, \dots, \calB^{(j-1)}$, it is guaranteed that this new dual solution $\calB^{(j)}$ will not overlap with the previous dual solution. We then proceed as follows. For any ball $B\in \calB^{(j)}$ that we just added, we let $B_\calP$ denote the set of pairs of $\bigcup_{j'>j}\calP^{(j')}$ (note that we only consider pairs of smaller cost) such that one of its endpoints is at a distance at most
$$
    r\cdot \left(1+\frac{1}{200\cdot \log^2(K)} \right)
$$
from the center of $B$ (denoted $p(B)$). Here we note that for a technical reason, $K$ is only an upper bound on the real number of pairs $k$ (i.e. $K\geq k$). We will let $\partial B_\calP$ denote the set of pairs of $B_\calP$ that have one endpoint at a distance of at least 
$$
    r\cdot \left(1-\frac{1}{200\cdot \log^2(K)} \right)
$$
from the center of $B$. These pairs are on the border of $B$ hence the choice of notation. With a similar analogy, we will denote the interior of $B_\calP$ by $\mathring B_\calP$. This is the complement of $\partial B_\calP$ in $B_\calP$, i.e. 
$$
    \mathring B_\calP=B_\calP\setminus \partial B_\calP.
$$
Then for the current ball $B\in \calB^{(j)}$ at hand, centered at an endpoint of $p$, we look at the total charged cost of the pairs in $\mathring B_\calP$ and make a case distinction based on this value. If this cost is more than $\textrm{polylog}(K)$ times the charged cost of the pair $p$, then we can safely set the charge of $p$ to $0$, delete the dual ball $B$, and increase the charge of pairs in $B_\calP$ to account for this lost cost. Note that the charges of pairs in $\mathring B_\calP$ increase by at most a multiplicative $(1+\frac{1}{\textrm{polylog}(K)})$). This is pictured in the top left corner of Figure \ref{fig:aftercharge}. Since the number of cost classes is at most $M<\log(K)$ such accumulation of charges is not a problem (note that the charge of each pair can increase at most once per cost class in this way, since we only charge pairs inside a dual ball and the dual balls in a single cost class are pairwise disjoint). 

On the contrary, if the charged cost of pairs in $\mathring B_\calP$ is less than $\textrm{polylog}(K)$ times the charged cost of the pair $p$, we first halve the radius of $B$ to get $B'$. If the charged cost of the pairs inside $B'_\calP$ is at most a constant factor times the charged cost of $p$, we classify all the pairs in $B'_\calP$ as \emph{charged} and charge their cost to the pair $p$. Note that the charge of $p$ only increases by a constant factor when doing this, and this happens at most once per pair $p$ (when we place the dual ball around $p$). In addition we update $B$ to be $B'$, and we classify the pair $p$ as \emph{surviving}. If on the other hand, the charge of the pairs inside $B'_\calP$ is larger than a specified constant factor times the charge of $B$, we scale the radius of $B'$ up until we reach a point  where most of the cost in $B'_\calP$ is carried by $\mathring B'_\calP$ and not $\partial B_\calP'$. Then we mark all the pairs in $B'_\calP$ as \emph{dangerous} and add them to the set $D$. We update $B$ to be the ball $B'$ and classify $p$ as \emph{surviving}. This case is pictured in the bottom right corner of Figure \ref{fig:aftercharge}. Note that if the ball $B$ is not deleted, then all the pair in $B_\calP$ will be classified as either charged or dangerous. In particular, we will never try to place a dual ball around these pairs in the following iterations. We do this procedure for all the balls $B\in \calB^{(j)}$ and then move to iteration $(j+1)$. This step is handled in Subsection \ref{subsec:mainbalanced}.

\paragraph{Step 2.}  After Step 1, we end with a feasible dual solution $\calB$ consisting of the balls placed around surviving pairs, a charge function , and a set $D$ of dangerous pairs. Additionally, we guarantee that no pair is overcharged.

The pairs that are not dangerous are easily accounted for by the dual solution $\calB$. Indeed, the surviving pairs still have a dual ball around an endpoint, and the charged pairs have their cost entirely redistributed to other pairs. The only problem might come from dangerous pairs. However, because of how we constructed the dual solution $\calB$ and the set $D$, we will be able to cluster the dangerous pairs into disjoint sub-instances that are contained in dual balls corresponding to bigger cost classes. These instances are disjoint, and the crux of the argument is to show a statement of the form:

\vspace{0.1cm}
\textit{If the greedy algorithm were to run separately on each sub-instance, then the cost greedy would pay for these pairs would be the same cost that it was paying for these pairs in the bigger instance $\I$.}

Hence we can argue that the total cost incurred for dangerous pairs is at most the sum of costs paid by greedy on each sub-instance separately. This helps because we only put pairs in $D$ in the case that their charged cost was bounded by $\textrm{polylog}(K)$ times the charged cost of the pair that created the ball $B$ that contains them. As a result, we have a strong upper bound on the number of pairs $k'$ in each smaller sub-instance $\I'$. To finish the proof, we need to bound the cost of the offline optimum for each sub-instance $\I'$. We note that because all the pairs in $D$ are in the \textit{interior} of $B$ (i.e. far from the border of $B$), we can modify the metric of the graph $G$ at the border of $B$. This will not change the behavior of greedy for the pairs in $D$ because the border is way too far from the interior of $B$ for greedy to be tempted to use the modified metric (recall that greedy always takes the shortest path). We will define a new graph $G'$, which is the graph induced by vertices in $B$. We also say that all the vertices \textit{exactly} on the border of $B$ are all at a distance $0$ from each other (see bottom right corner of Figure \ref{fig:aftercharge}). With this modification, it becomes clear that the offline optimum cost on instance $\I'$ is at most the cost paid by $\OPT(\I)$ inside $B$, which we will denote by $w(\OPT(\I)\cap B)$. Hence the offline optimum cost for each sub-instance is at most what the global optimum pays locally inside the ball that created the sub-instance. Since all balls in $\calB$ are disjoint, these areas never overlap; hence the \textit{sum} of all local optima is at most the global optimum of instance $\I$. Using this observation, we handle the cost incurred by pairs in $D$ via a delicate induction hypothesis on the number of cost classes in the instance. This induction is described formally in Subsection \ref{subsec:maininductive}.

\subsection{Building a balanced dual solution}
\label{subsec:mainbalanced}
We formalize here Step 1 of the previous subsection. We give a formal definition of all the properties that our triple $(\calB,\ch,D)$ should satisfy. Note that we are also given an upper bound $K$ on the real number of pairs $k$ (this is for technical reasons for handling the induction in the next subsection). Recall that for a ball $B\in \calB^{(j)}$ we denote by $B_\calP$ the set of pairs of $\bigcup_{j'>j}\calP^{(j')}$ such that one of its endpoint is at distance at most
$$
    r\cdot \left(1+\frac{1}{200\cdot \log^2(K)} \right)
$$ from the center of $B$. We also have similar definitions for $\partial B_\calP$ and $\mathring B_\calP$. Now we can state the main definition of this subsection. Intuitively, conditions (a) and (b) state that $\calB$ is a feasible dual solution whose dual balls have radii large enough. Condition (c) states that the total charged cost of dangerous pairs inside the ball $B$ is never much more than $\mathrm{polylog}(K)$ times the charged cost of the pair that created the ball $B$. Similarly, condition (d) states that the charged cost of dangerous pairs on the border of $B$ is not more than $1/\log(K)$ times the charged cost of dangerous pairs strictly inside $B$. The last condition (e) states that no pair was charged too many times.  
\begin{definition}[Balanced dual solution]
\label{def:balanced}
 A balanced dual solution for an $(\alpha,\delta)$-canonical instance $\I$ with respect to algorithm $\alg$ is a quadruple $(\calB,\ch,D,K)$ such that:
 \begin{enumerate}
     \item[(a)] All balls $B\in \calB$ are pairwise disjoint and $D\subseteq \bigcup_{B\in \calB} B_\calP$. Moreover, $\calB$
 is partitioned into $M$ sub-collections of balls $\calB^{(1)},\calB^{(2)},\ldots ,\calB^{(M)}$ such that,
    \item[(b)] For every $j\geq 1$, each ball in $\calB^{(j)}$ has a radius $r'$ that satisfies $r_j/2\leq r' \leq r_j=c_j/8\log(k_j)$, with $c_j$ the cost associated to cost class $\calP^{(j)}$ and $k_j=|\calP^{(j)}|$.
    \item[(c)] For every ball $B\in \calB^{(j)}$
    $$
        \cost_\alg(\I,\mathring B_\calP\cap D,\ch)\leq 10\cdot \ch(p(B))\cdot c_j \cdot \log^{10}(K),
    $$
    \item[(d)] For any ball $B\in \calB^{(j)}$
    $$
        \cost_\alg(\I,\partial B_\calP\cap D,\ch)\leq \frac{10\cdot \cost_\alg(\I,\mathring B_\calP\cap D,\ch)}{\log(K)},
    $$
    \item[(e)] For any $j>0$, any pair $p\in \calP^{(j)}$,
    $$
        \ch(p) \leq 
    \begin{cases}
     55\cdot e^5 \mbox{ if $p$ is a surviving pair},\\
     0\mbox{ if $p$ is a charged pair},
    \end{cases}
    $$
    $$
        \ch(p) \leq \left(1+\frac{5}{\log(K)} \right)^{j-1}\mbox{ if $p$ is a dangerous pair that belongs to $B_\calP$ for some $B\in \calB^{(j)}$}, and
    $$
    $$
        \ch(p) \geq 1\mbox{ if $p$ is a surviving or dangerous pair.}
    $$
 \end{enumerate}
\end{definition}

The main result of this subsection will be that it is always possible to find a balanced dual solution.
\begin{lemma}
\label{lem:balanced}
Given an $(\alpha,\delta)$-canonical instance $\I$ with respect to $\alg$, a balanced dual solution $(\calB,\ch,D,K)$ always exists provided that $\delta\geq 100\cdot (\log(\alpha)+\log\log(K))$, the number of cost classes $M$ satisfies $M\leq \log(K)$, and the number of pairs $k$ satisfies $k\leq K$.
\end{lemma}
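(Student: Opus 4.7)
The plan is to construct the balanced dual solution iteratively, processing cost classes from the largest cost (iteration $j=1$) to the smallest (iteration $j=M$), and carrying along the evolving triple $(\calB, \ch, D)$ while maintaining the invariants of Definition \ref{def:balanced} as well as the auxiliary invariant that at the start of iteration $j$, every pair in $\calP^{(j)}$ that is still unclassified lies at distance strictly greater than $r \cdot (1 + 1/(200 \log^2 K))$ from the center of every already-placed ball $B \in \calB^{(1)} \cup \ldots \cup \calB^{(j-1)}$. Initially $\calB = \emptyset$, $D = \emptyset$, $\ch \equiv 1$, and all pairs are unclassified.

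At iteration $j$, I would first invoke Lemma \ref{lem:AAB} on the set of unclassified pairs in $\calP^{(j)}$ with target radius $r_j/2$, producing a candidate family of disjoint balls, each centered at an endpoint of some unclassified pair. These candidate balls are automatically disjoint from every previously placed ball: a previously placed ball in $\calB^{(j')}$ with $j' < j$ has radius at most $r_{j'}$, and the well-separation of cost classes together with $\delta \geq 100(\log\alpha + \log\log K)$ forces $r_j \ll r_{j'} / (200 \log^2 K)$, so the slack $r_{j'}/(200 \log^2 K)$ between the previously placed ball and any still-unclassified pair is larger than the new radius $r_j$. For each candidate ball $B$ centered at $p = p(B)$ I would then run the case distinction outlined in Step 1 of Section \ref{subsec:mainoverview}: if the charged cost carried by the currently surviving or dangerous pairs inside $\mathring B_\calP$ already exceeds some threshold of the form $\Theta(\log^{10} K) \cdot \ch(p) \cdot c_j$, then I discard $B$, set $\ch(p) := 0$, classify $p$ as charged, and redistribute the lost cost $\ch(p) \cdot c_j$ uniformly among the existing dangerous pairs inside $\mathring B_\calP$; otherwise I keep $B$, classify $p$ as surviving, and either (sub-case A) all pairs in $B_\calP$ have small total charged cost, in which case I absorb their cost into $\ch(p)$ (incurring only a constant multiplicative blow-up) and mark them as charged, or (sub-case B) I expand the radius of $B$ within $[r_j/2, r_j]$ to a value $r'$ where the border annulus is light, mark all pairs in $B_\calP$ as dangerous, and add them to $D$.

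The only nontrivial existence step is selecting the expanded radius $r'$ in sub-case B so that condition (d) holds. I would argue this by a pigeonhole / averaging argument: subdivide the radius interval $[r_j/2, r_j]$ into $\Theta(\log K)$ disjoint concentric annuli of width $2 r_j / (200 \log^2 K)$; the total charged cost contained in all these annuli is at most the charged cost within the ball of radius $r_j$, and by sub-case B this is already bounded (otherwise we would have fallen into the deletion case). Hence at least one annulus carries at most a $1/\log(K)$ fraction of the charged cost of $\mathring B_\calP$, and choosing $r'$ to place the border annulus $\partial B_\calP$ inside this light annulus yields condition (d). Checking invariants (a)–(b) is immediate from Lemma \ref{lem:AAB} and the disjointness argument above; checking (c) follows from the threshold chosen in the case distinction; and checking (e) reduces to counting how many times each pair can have its charge updated: charged pairs have $\ch = 0$ by definition, a surviving pair $p$ gets absorbed at most once when its own ball is created (constant factor) plus at most once per preceding cost class when it lies in $\mathring B_\calP$ for a deletion step (each giving a multiplicative factor at most $1 + 1/\log(K)$, and the balls in a single $\calB^{(j')}$ are pairwise disjoint so at most one such charging per $j'$), and an analogous bound holds for dangerous pairs, so telescoping over $j \leq M \leq \log K$ gives the $(1 + 5/\log K)^{j-1}$ bound and the constant $55 e^5$ bound in (e).

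I expect the subtlest point to be verifying condition (e) for surviving pairs together with the deletion redistribution: one must show that during the "heavy outside interior" case the lost cost $\ch(p) \cdot c_j$ is absorbed by pairs whose total charged cost is large enough that each individual charge grows by at most a $(1 + 1/\log K)$ factor, and that a given pair is touched by this redistribution at most once per cost class (which follows from disjointness of balls within a single $\calB^{(j')}$). The well-separation assumption $\delta \geq 100(\log\alpha + \log\log K)$ is used both here (to ensure the redistribution threshold $\log^{10}(K) \cdot c_j$ is small compared to $c_{j'}$ for $j' > j$) and in showing non-overlap of balls across iterations, so the quantitative slack in $\delta$ is what makes the whole induction close.
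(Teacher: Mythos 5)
Your plan follows the paper's structure closely (iterative construction from $j=1$ to $M$, invoke Lemma~\ref{lem:AAB}, three-way case split, invariants), but two of the key steps are implemented in a way that does not close, and the second is a genuine missing idea.

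First, you invoke Lemma~\ref{lem:AAB} with target radius $r_j/2$ and then in your sub-case~B \emph{expand} the ball to some $r' \in [r_j/2, r_j]$. Lemma~\ref{lem:AAB} only guarantees pairwise disjointness of the balls in $\calB^{(j)}$ at the radius at which it was invoked, so two sibling balls that are disjoint at radius $r_j/2$ may well overlap after both are expanded towards $r_j$. Moreover your deletion decision (condition~(c)) is tested at radius $r_j/2$, so after expansion the charged cost inside $\mathring B_\calP$ could exceed the threshold $10\ch(p)c_j\log^{10}K$, and condition~(c) of Definition~\ref{def:balanced} would then fail for the final ball. The paper avoids both problems by invoking Lemma~\ref{lem:AAB} at the full radius $r_j$ and performing the threshold check there; the ball is then shrunk to $r_j/2$ and grows back only to a value strictly below $r_j$, so disjointness and the condition-(c) bound are inherited monotonically from the radius-$r_j$ check.

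The more serious issue is the pigeonhole argument for condition~(d). You subdivide $[r_j/2,r_j]$ into concentric annuli and argue that one annulus carries a $1/\log K$ fraction of the charged cost, but this gives an \emph{absolute} bound on the border charged cost, whereas condition~(d) requires a bound \emph{relative to the interior at the chosen radius}. In sub-case~B you only know the interior charged cost is $> 10\,\ch(p)c_j$ while the ball charged cost at radius $r_j$ can be as large as $10\,\ch(p)c_j\log^{10}K$; there are at most $O(\log^2 K)$ annuli (with width $\Theta(r_j/\log^2 K)$), so the lightest annulus can still carry charged cost on the order of $\ch(p)c_j\log^8 K$, which is far above $(10/\log K)\cdot 10\,\ch(p)c_j$. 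Averaging therefore misses by roughly a $\log^9 K$ factor and cannot give condition~(d). The paper's argument is fundamentally different: it shows that if condition~(d) fails for six consecutive increments then the charged cost of $B^{(t)}_\calP$ grows by a factor $(1+10/\log K)$, so after $O(\log K\log\log K)$ increments the total inside the ball would exceed $\Sigma'\cdot\log^{10}K \geq \Sigma$, contradicting the Case-2 assumption. This geometric amplification is exactly what bridges the $\log^{10} K$ gap between the interior lower bound and the ball upper bound, and it is the one idea your proposal is missing.

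Your treatment of the charge accounting for condition~(e) — charge increase bounded multiplicatively per cost class, at most once per class since the balls in $\calB^{(j')}$ are disjoint, telescoping to $(1+5/\log K)^{j-1}$ and the constant $55e^5$ — matches the paper and is correct.
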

\begin{proof} We build the solution quadruple $(\calB,\ch, D,K)$ with an iterative procedure from $j=1$ to $j=M$ (recall that $M$ is the number of cost classes in $\I$) that will maintain the following invariants at the beginning of any iteration $j$:
\begin{enumerate}
    \item[(i)] For any $j'<j$, the dual balls in $\calB^{(j')}$ are already fixed and all pairs in $\calP^{(j')}$ are already classified as either surviving, charged, or dangerous. The dual balls of $\calB^{(j')}$ satisfy conditions (b), (c), (d) of Definition \ref{def:balanced}.
    \item[(ii)] For any $j'<j$, the charge of pairs in $\calP^{(j')}$ satisfy condition (e) of Definition \ref{def:balanced}.
    \item[(iii)] For any $j'\geq j$, the pairs in $\calP^{(j')}$ can be classified as either dangerous or charged, or not be classified yet. No pair of $\calP^{(j')}$ is classified as surviving yet. We also have $\calB^{(j')}=\emptyset$.
    \item[(iv)] For any $j'\geq j$, all pairs in $\calP^{(j')}$ classified as either dangerous or charged satisfy condition (e) in Definition \ref{def:balanced}. The pairs $p$ that are not yet classified satisfy
    $$
        1\leq \ch(p)\leq \left(1+\frac{5}{\log(K)} \right)^{j-1}.
    $$
    \item[(v)] For any $j'\geq j$, all pairs in $\calP^{(j')}$ not yet classified do not belong to any set $B_\calP$ for some already existing ball (i.e. unclassified pairs are far from the balls already placed).
\end{enumerate}
We start iteration $j$ by considering the pairs in $\calP^{(j)}$ that are not yet classified. To avoid confusion, let us denote by $\calP'^{(j)}\subseteq \calP^{(j)}$ this set of unclassified pairs. Using Lemma \ref{lem:AAB}, we get a dual solution $\calB^{(j)}$ of balls all of radius
$$
    r_j=\frac{c_j}{8\log(k_j)}
$$
that are all centered around endpoints of pairs in $\calP'^{(j)}$ and such that no pair in $\calP'^{(j)}$ has more than one ball centered around an endpoint.  All the pairs in $\calP'^{(j)}$ that do not have a dual ball can already be classified as charged. We decrease the charge of these pairs to 0 increase the charge of the other pairs in $\calP'^{(j)}$ to compensate. Since we have that $|\calP'^{(j)}|\leq 5\cdot |\calB^{(j)}|$ and invariant (iv), we have that the charge of the remaining pairs in $\calP'^{(j)}$ will be at most

\begin{equation}
\label{eq:inc_charge_delete_large_ball}
    5\cdot \left(1+\frac{5}{\log(K)} \right)^{(j-1)}\leq 5\cdot \left(1+\frac{5}{\log(K)} \right)^{M}\leq 5\cdot e^{5}
\end{equation}
after this step (we use the assumption that $M\leq \log(K)$ and $(1+x)\leq e^x$). Then for each new ball $B\in  \calB^{(j)}$ that we created around the endpoint of a pair $p$ we consider the following process with two cases. We consider the pairs of $\mathring B_\calP$ (i.e. pairs of smaller cost in the \textit{interior} of $B$) and proceed to a case distinction on the value of their total charged cost 
$$
    \cost_\alg(\I,\mathring B_\calP,\ch)
$$ that we denote $\Sigma$.

\paragraph{Case 1 (Top left corner in Figure \ref{fig:aftercharge}).} If 
\begin{equation}
\label{eq:lemma_balanced_charge1}
    \Sigma > 10\cdot \ch(p)\cdot c_j \cdot \log^{10}(K),
\end{equation}
that we can rephrase intuitively  as "$B$ does not satisfy condition (c)" then we simply erase the dual ball $B$ and charge all the cost $\ch(p)\cdot c_j$ to the pairs in $\mathring B_\calP$ proportionally to their weight. Note that by doing this, all the pairs in $\mathring B_\calP$ see their charge increasing by a multiplicative $\left(1+\frac{1}{(10\log^{10}(K))}\right)<\left(1+\frac{5}{\log(K)} \right)$. The pair $p$ is now classified as charged. Note that the pairs in $\calP^{(j'>j)}$ have now a charge that is at most $\left(1+\frac{5}{\log(K)} \right)^{j}$ by applying invariant (iv). This will be the only place where their charge can increase during iteration $j$ and note that this happens only once per iteration since only the pairs inside a ball $B\in \calB^{(j)}$ are charged and the balls in $\calB^{(j)}$ are pairwise disjoint. Therefore invariant (iv) will hold at the beginning of iteration $j+1$ for these pairs.

\paragraph{Case 2.} In this case we assume that $\Sigma \leq  10\cdot \ch(p(B))\cdot c_j \cdot \log^{10}(K)$. We consider the ball $B'$ with the same center as $B$ but with only half the radius of $B$ (i.e. $B'$ is a ball centered at $p(B)$ with radius $r=r_j/2$). We look at the total charged cost of pairs in $B'_\calP$ and proceed to a case distinction on this value (note that we do not consider only the strict interior of $B$ but also its border). Denote by $\Sigma'$ this new value, i.e.
$$
    \Sigma' = \cost_\alg(\I,B'_\calP,\ch).
$$
We proceed by a sub-case distinction of the value of $\Sigma'$.

\paragraph{Sub-case 2(a).} If $\Sigma'\leq 10\cdot \ch(p(B))\cdot c_j$ then we simply charge the cost of all the pairs in in $B'_\calP$ to the pair $p$. In this case the pairs in $B'_\calP$ become charged pairs. Note that the charge of $p$ in that case is multiplied by at most $11$. We also set the dual ball $B$ to $B'$ (i.e. we scale down the radius of $B$ by a factor 2). By Equation \ref{eq:inc_charge_delete_large_ball}, we get that the charge of $p$ will be at most
$$
    \ch(p)\leq 11 \cdot 5\cdot e^5 = 55 e^5.
$$ This will be the final charge of this pair hence invariant (ii) will be satisfied.

\paragraph{Sub-case 2(b) (Bottom right corner in Figure \ref{fig:aftercharge}).} If $\Sigma'> 10\cdot \ch(p(B))\cdot c_j$, then we will try to increase the radius $r$ of $B'$ by an increment of
$$
    \Delta r = \frac{r_j}{200\cdot \log^2(K)},
$$
until we find a radius which satisfies condition (d) for the ball $B'$. We denote by $B^{(t)}$ the ball obtained after increasing the radius of $B'$ $t$ times by $\Delta r$. Hence the radius of $B^{(t)}$ is $r_t=r_j/2+t\cdot (\Delta r)$. Each time we increase $r$, we check if 
 \begin{equation}
 \label{eq:radiuschoice}
        \cost_\alg(\I,\partial B^{(t)}_\calP,\ch)\leq \frac{10\cdot \cost_\alg(\I,\mathring B^{(t)}_\calP,\ch)}{\log(K)}.
\end{equation}
If this is the case we stop and add all the pairs in $B^{(t)}_\calP$ to $D$. The pairs in $B^{(t)}_\calP$ become in this case dangerous pairs. If not we continue increasing the radius $r$ until Equation \ref{eq:radiuschoice} holds. We claim that this process will stop before reaching $r_t=r_j$. To see this we first prove that $ B^{(t)}_\calP \cap \partial B^{(t+6)}_\calP=\emptyset$ for any $t\geq 0$. Indeed if a pair $p_i\in \calP^{(j')}$ with $j'>j$ belongs to $ B^{(t)}_\calP \cap \partial B^{(t+6)}_\calP$ then it must be that one of the two endpoints of $p_i=\{s_i,t_i\}$ belongs to $B^{(t)}$ while the other belongs to the border $\partial B^{(t+6)}$ since $B^{(t)}\cap \partial B^{(t+6)} = \emptyset$. Hence the distance between the two endpoints of $p_i=\{s_i,t_i\}$ satisfies:
$$
    d_G(s_i,t_i)\geq \left(r_t+6\Delta r \right)\cdot \left(1-\frac{1}{200\cdot \log^2(K)} \right)-r_t\cdot  \left(1+\frac{1}{200\cdot \log^2(K)} \right)\geq 5\Delta r-\frac{2r_t}{200\cdot \log^2(K)}\geq \frac{r_j}{200\cdot \log^2(K)}. 
$$ where the last inequality comes from the fact that $r_t\leq r_j$. However, we assumed that the instance $\I$ is $(\alpha,\delta)$-canonical with $\delta\geq 100\cdot (\log(\alpha)+\log\log(K))$ and this implies that all pairs have contraction at most $\alpha$ which means that the pair $p_i$ should have costed $\alg$ at least 
$$
    \frac{r_j}{\alpha (200\cdot \log^2(K))} \geq \frac{c_j}{\alpha (8\cdot 200\cdot \log^3(K))}>\frac{c_j}{2^{10}\cdot \alpha^{100}\log^{100}(K)}>\frac{c_j}{2^{\delta+10}}=c_{j+1},
$$
which is a contradiction. Thus $B^{(t)}_\calP \cap \partial B^{(t+6)}_\calP=\emptyset$. Since we assume the process does not stop, we have that 
 $$
     \cost_\alg(\I,\partial B^{(t+6)}_\calP,\ch)> \frac{10\cdot \cost_\alg(\I,\mathring B^{(t+6)}_\calP,\ch)}{\log(K)}\geq \frac{10\cdot \cost_\alg(\I, B^{(t)}_\calP,\ch)}{\log(K)}.
 $$ Together with the fact that $B^{(t)}_\calP \cap \partial B^{(t+6)}_\calP=\emptyset$ we get 
$$
     \cost_\alg(\I,B^{(t+6)}_\calP,\ch)\geq \cost_\alg(\I,B^{(t)}_\calP,\ch)+ \cost_\alg(\I,\partial B^{(t+6)}_\calP,\ch)\geq \cost_\alg(\I,B^{(t)}_\calP,\ch)\cdot \left(1+\frac{10}{\log(K)} \right).
 $$
 This implies that, for all $t\geq 0$,
 $$
 \cost_\alg(\I, B^{(t)}_\calP,\ch) \geq \cost_\alg(\I, B^{(0)}_\calP,\ch)\cdot \left(1+\frac{10}{\log(K)}\right)^{\left\lfloor t/6 \right\rfloor} = \Sigma'\cdot \left(1+\frac{10}{\log(K)}\right)^{\left\lfloor t/6 \right\rfloor}.
 $$
However, we have by assumption in our case that
$$
    \frac{\Sigma}{\Sigma'} \leq \log^{10}(K),
$$
thus for $t=60\log(K) \log\log(K)$, we can write simultaneously
\begin{equation}
\label{eq:balancedlemma2}
    \cost_\alg(\I, B^{(t)}_\calP,\ch) \geq \Sigma'\cdot \left(1+\frac{10}{\log(K)}\right)^{10 \log(K)\log\log(K)}> \Sigma' \cdot \log^{10}(K) \geq \Sigma,
\end{equation}
and
\begin{equation}\label{eq:balancedlemma3}
    r_t = r_j/2+t\cdot (\Delta r) = r_j\cdot \left(\frac{1}{2}+ \frac{60\log(K) \log\log(K)}{200\cdot \log^2(K)}\right)<r_j.
\end{equation}
This is a contradiction since by Equation \ref{eq:balancedlemma3} we should have $B^{(t)}_\calP\subseteq B_\calP$ and by Equation \ref{eq:balancedlemma2} we have $\cost_\alg(\I, B^{(t)}_\calP,\ch)>\cost_\alg(\I, B_\calP,\ch)$. Thus the process must stop before reaching $t=60\log(K) \log\log(K)$, hence before reaching $r_t=r_j$.

 \paragraph{Correctness.} We note that Sub-case 2(b) is the only case in which we create dangerous pairs. By construction, properties (c) and (d) will hold for any ball. This is because if a ball $B$ remains around a pair $p$ at the end of the procedure, either all the pairs inside are charged to $p$, or they become dangerous pairs. In the case where the pairs inside become dangerous the procedure described in Sub-case 2(b) stops exactly when properties (c) and (d) are satisfied. Property (b) is also satisfied since the only place where a radius can be modified is in Sub-case 2(b). In this case, we show that after halving the radius, it cannot grow for too long before the number of pairs in the border is less concentrated than in the interior. This means that since the balls in $\calB^{(j)}$ were disjoint when taken with radius $r_j$, they also have to be disjoint if the radius is slightly smaller. Condition (a) is also satisfied. To see this, note that the balls in $\calB^{(j)}$ cannot intersect balls installed before since in Case 2, which is the only case where a ball survives, for any ball $B\in \calB^{(j'<j)}$, all the pairs in $B_\calP$ become classified as either dangerous or charged (invariant (v)). In particular, we will never try to place a ball around these pairs since we only place dual balls around pairs that are not classified yet. Since $B_\calP$ encompasses a region slightly bigger than the ball $B$, the smaller dual balls will not intersect $B$. We also have that $D\subseteq \bigcup_{B\in \calB} B_\calP$ since dangerous pairs are only created in Sub-case 2(b) where we consider pairs inside a ball $B$. It remains to show that condition (e) holds. Note that for charged pairs, this is clear. For surviving pairs, we showed that our process maintains invariant (ii); hence after the end of the procedure, condition (e) is satisfied. For dangerous pairs, note that once a pair becomes dangerous, its charge will never increase anymore. Hence this shows that invariant (ii) also holds for these pairs, and in particular, at the end of the process, condition (e) holds.
 \end{proof}

\subsection{Inductive proof using balanced dual solutions}
\label{subsec:maininductive}

Given the previous subsection, we are ready to state the main induction. Recall that in a balanced dual solution $(\calB,\ch,D)$, all the pairs except the dangerous pairs (the set $D$) are accounted for by the dual balls in $\calB$. Our induction will precisely take advantage of this. In the following, for any ball $B$ and instance $\I$, we denote by $w\left(\OPT(\I)\cap B\right)$ the total cost of edges that are contained in $B$ and bought by $\OPT(\I)$ (note that we assume that edges in $G$ are arbitrarily small, so no edge is crossed by the border of $B$). Note that we also have the straightforward bound $r\leq w\left(\OPT(\I)\cap B\right)$ (with $r$ being the radius of $B$) since the optimum solution needs to connect at least the center of the ball $B$ to its border. For a collection of disjoint balls $\calB$, we define
$$
    w\left(\OPT(\I)\cap \calB\right)=\sum_{B\in \calB} w\left(\OPT(\I)\cap B\right).
$$

\begin{lemma}
\label{lem:main_induction}
Let $\I$ be an $(\alpha,\delta)$-canonical instance with respect to algorithm $\alg$. Assume that its size is $k\leq K$, and $\delta\geq 100\cdot (\log(\alpha)+\log\log(K))$. Assume there are $M\leq \log(K)$ distinct cost classes when running $\alg$ on instance $\I$. Let $\left(\calB=\bigcup_{j=1}^M\calB^{(j)},\ch,D,K\right)$ be a balanced dual solution. Then we have that
$$
    \cost_\alg(\I) \leq (880 e^{5})\cdot  \left( \sum_{j=1}^M \log(k_j)\cdot w\left(\OPT(\I)\cap \calB^{(j)}\right)\right) + e^{200+20M/\log(K)}\cdot \left(\sum_{j=1}^{M} \delta \cdot (M-j)\cdot w\left(\OPT(\I)\cap \calB^{(j)}\right)\right),
$$
where $k_j$ is the number of terminals in cost class $j$ for $j=1, \dots, M$.
\end{lemma}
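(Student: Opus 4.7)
My plan is to proceed by induction on the number of cost classes $M$. The base case $M\leq 1$ is direct: if $M=0$ then $\calP=\emptyset$; if $M=1$ then no ball of any class $j<M$ exists, so $D=\emptyset$ and only the surviving contribution remains. For the inductive step, since the charging scheme is mass-preserving I would decompose
\begin{equation*}
    \cost_\alg(\I) \;=\; \cost_\alg(\I, S, \ch) + \cost_\alg(\I, D, \ch),
\end{equation*}
where $S$ is the set of surviving pairs and charged pairs contribute zero since $\ch(p)=0$. The surviving term will reproduce the first summand of the target bound exactly, and the dangerous term will be handled by recursion on sub-instances with strictly fewer cost classes.

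\paragraph{Surviving pairs.} Each surviving pair $p\in S\cap \calP^{(j)}$ is in bijection with a ball $B(p)\in \calB^{(j)}$. By property (b), $B(p)$ has radius at least $r_j/2=c_j/(16\log k_j)$; by property (a) balls in $\calB^{(j)}$ are pairwise disjoint, and every ball $B$ obeys $w(\OPT(\I)\cap B)\geq r(B)$ because the optimum must reach the boundary from the center. Combining these with the cap $\ch(p)\leq 55 e^5$ from property (e) gives
\begin{equation*}
    \cost_\alg(\I,S,\ch) \;\leq\; 55 e^5 \sum_{j=1}^{M} c_j |\calB^{(j)}| \;\leq\; 880 e^5 \sum_{j=1}^{M} \log(k_j)\, w(\OPT(\I)\cap \calB^{(j)}),
\end{equation*}
which matches the first term of the lemma.

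\paragraph{Sub-instances for dangerous pairs.} Partition $D=\bigsqcup_{B} D_B$, where $D_B$ is the set of pairs added to $D$ in Sub-case 2(b) of Lemma~\ref{lem:balanced} during the processing of ball $B\in \calB^{(j)}$ for some $j<M$. For each such $B$, I would build the sub-instance $\I'_B$ by taking the subgraph of $G$ induced on the vertices inside $B$, connecting every pair of boundary vertices of $B$ by a zero-length edge, keeping the pairs $D_B$ in their original arrival order, and inheriting their dedicated additional edges from $\Se$. Three structural claims then drive the induction: (i) greedy executes identically on $\I'_B$ as on $\I$ restricted to $D_B$, hence $\cost_\alg(\I'_B)=\cost_\alg(\I,D_B)$; (ii) $w(\OPT(\I'_B))\leq w(\OPT(\I)\cap B)$, since the restriction of $\OPT(\I)$ to $B$ together with free boundary identifications yields a feasible forest for $D_B$ in $\I'_B$; and (iii) $\I'_B$ is $(\alpha,\delta)$-canonical with at most $M-j$ cost classes and size $|D_B|\leq |B_\calP|=O(2^{\delta}\log^{10}(K))$, derived from properties (c), (d) and the gap $c_{j+1}=c_j/2^{\delta+10}$; in particular $\log(k'_{B,j'})=O(\delta)$ for every class $j'$ of $\I'_B$.

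\paragraph{Closing the induction and main obstacle.} I would then apply Lemma~\ref{lem:balanced} to obtain a balanced dual solution for $\I'_B$ and invoke the inductive hypothesis with $M'=M-j$. Using claim (i) and the charge cap $\ch(p)\leq (1+5/\log K)^{j-1}$ from property (e), this yields
\begin{equation*}
    \cost_\alg(\I,D_B,\ch) \;\leq\; (1+5/\log K)^{j-1}\bigl[\,O(A\delta) + e^{200+20(M-j)/\log K}\,\delta(M-j)\,\bigr]\, w(\OPT(\I)\cap B),
\end{equation*}
where $A=880 e^5$. Summing over $B\in \calB^{(j)}$ replaces $w(\OPT(\I)\cap B)$ by $w(\OPT(\I)\cap \calB^{(j)})$ by disjointness, and the algebraic check $5(j-1)+20(M-j)\leq 20M$ gives $(1+5/\log K)^{j-1}\cdot e^{200+20(M-j)/\log K}\leq e^{200+20M/\log K}$, folding the recursive contribution into the second summand of the target bound; the residual $O(A\delta)$ term is absorbed into the large slack inside $e^{200}$ whenever $M-j\geq 1$. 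The most delicate step of the argument is claim (i): showing that collapsing the boundary vertices of $B$ into a zero-distance clique does not alter greedy's path choices for pairs in $D_B$. This uses the interior-concentration guarantee of property (d) (which forces every pair of $D_B$ to lie strictly inside $B$, far from the boundary) together with the bounded-contraction condition of canonicity (which rules out any greedy path long enough to exit through and re-enter via the boundary clique), while condition (c) of Definition~\ref{def:canonical} ensures that the dedicated additional edge for each pair of $D_B$ remains available in $\I'_B$ at its original cost.
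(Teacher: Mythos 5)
Your overall decomposition (induction on $M$, separate the surviving/charged contribution from the dangerous one, restrict the optimum inside each ball and recurse on the sub-instance) is the same as the paper's, and claims (i) and (ii) about the sub-instance correspond to Claim \ref{cla:subinstance}. However, two of your quantitative/structural claims are wrong in ways that matter for closing the induction.

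First, your claim (iii) misstates the size bound. Property (c) gives $\cost_\alg(\I,\mathring B_\calP\cap D,\ch)\le 10\,\ch(p(B))\,c_j\log^{10}(K)$, and since a pair in cost class $i>j$ costs $c_i = c_j/2^{(\delta+10)(i-j)}$ and carries charge at least $1$, the number $k'_i$ of class-$i$ dangerous pairs inside $B$ can be as large as $\Theta(\log^{10}(K)\cdot 2^{(\delta+10)(i-j)})$. After re-indexing this is $\log(k'_{i'}) = O(\delta\cdot i')$, a quantity that \emph{grows} with the depth $i'$ of the class, not $O(\delta)$ uniformly, and $|D_B|$ can be as large as $2^{\Theta(\delta(M-j))}$, not $O(2^{\delta}\log^{10}K)$. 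This is not cosmetic: the paper exploits the exact $i$-dependence so that the first inductive term contributes $\delta\cdot i$ and the second contributes $\delta\cdot(M'-i)$, and they sum to $\delta\cdot M'$ term-by-term. Your version replaces $\delta\cdot i$ by a flat $O(\delta)$; repairing it to the correct $O(\delta\cdot i)$ and then crudely upper-bounding by $O(\delta(M-j))$ leaves a sum of the form $\delta(M-j)\bigl[O(A)+e^{200+\cdots}\bigr]$, which only folds into $e^{200+20M/\log K}\delta(M-j)$ via an absorption step that must be paid at every level of the induction; the telescoping identity the paper uses avoids this entirely.

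Second, your handling of $\partial B_\calP \cap D$ is incorrect. You assert that property (d) ``forces every pair of $D_B$ to lie strictly inside $B$,'' but property (d) only bounds the \emph{charged cost} of $\partial B_\calP\cap D$ by a $10/\log K$ fraction of the interior cost; border pairs can exist and can have an endpoint outside $B$. Consequently claim (i) — that greedy executes identically on $\I'_B$ — cannot be invoked for pairs in $\partial B_\calP\cap D$: the induced-subgraph-plus-boundary-clique construction does not even contain one of their endpoints. The paper's fix is to define the sub-instance only on $\mathring B_\calP\cap D$, apply the induction to it, and then multiply the resulting bound by $(1+10/\log K)$ to account for the border pairs; this, together with the charge factor $(1+5/\log K)^{j-1}$, is exactly what produces the $e^{20j/\log K}$ multiplier that pushes the exponent from $200+20M'/\log K$ to $200+20M/\log K$. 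You will need to make this split explicit rather than assume the border set is empty.
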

Intuitively, the first term of the right-hand side of the inequality corresponds to the pairs that are either surviving or charged and can be charged to the dual. The second term corresponds to the cost paid by $\alg$ because of dangerous pairs in $D$. The proof of Lemma \ref{lem:main_induction} will be done by induction on the number of cost classes $M$.
\begin{proof}
The base case of the induction is for $M=0$ in which case the statement is vacuously true (the instance is empty). Hence assume $M>0$ and that the statement is true for any $(\alpha,\delta)$-canonical instance with $M'<M$ cost classes, and satisfying the conditions of the lemma.

Recall that in a balanced dual solution $(\calB=\bigcup_{j=1}^M \calB^{(j)},\ch,D,K)$ (that exists by Lemma \ref{lem:balanced}) we have that $\ch (p)\leq 55e^{5}$ for all surviving pairs $p$, that $\calB$ is a feasible dual solution, and that the radius of each ball $B\in \calB^{(j)}$ is at least $1/(16\log(k_j))$ times the cost of corresponding pair. Hence we have that the total cost of surviving or charged pairs (denote this set $\calP'$) is at most
$$
    \cost_\alg(\I,\calP') \leq (16\cdot 55 e^{5})\cdot  \left( \sum_{j=1}^M \log(k_j)\cdot w\left(\OPT(\I)\cap \calB^{(j)}\right)\right) = (880 e^{5})\cdot  \left( \sum_{j=1}^M \log(k_j)\cdot w\left(\OPT(\I)\cap \calB^{(j)}\right)\right).
$$

All that remains to do is to upper bound the total cost incurred by $\alg$ because of pairs in $D$. By property (a) of Definition \ref{def:balanced}, it suffices to consider each ball $B\in \calB$ and the dangerous pairs it contains. Hence fix a ball $B\in \calB^{(j)}$ of radius $r$. We consider the instance $\I'=(G',\calP',\Se')$ that is defined as follows (see bottom right corner of Figure \ref{fig:aftercharge}).
\begin{itemize}
    \item The graph $G'$ is the graph induced by $G$ on the vertex set $B$, in which we add an edge of cost $0$ between any pair of vertices at a distance exactly $r$ from the center of $B$. We denote by $E'$ this set of additional edges of weight 0 (see Figure \ref{fig:aftercharge} in the bottom right corner).
    \item $\calP'$ is the set of pairs in $\mathring B_\calP\cap D$, given in the same relative order to algorithm $\alg$. Note that these pairs must have both endpoints inside $B$ because all pairs have a low contraction, and the cost paid by these pairs is much smaller than the radius of the ball. Hence traveling from the interior of $B$ to the border is already way too far. We also emphasize that we ignore pairs in $\partial B_\calP\cap D$.
    \item The set of additional edges $\Se'$ contains all the edges in $\Se$ that we revealed just before reading a pair in $\calP'$. Moreover, the edges in $\Se'$ are revealed in the same way to $\alg$, that is, if an edge $e\in \Se$ is revealed just before a pair $p \in \calP'$, then it is also revealed just before $p$ in instance $\I'$.
\end{itemize}

We then make the following claim.
\begin{claim}
\label{cla:subinstance}
Given instance $\I'$ we have that:
\begin{enumerate}
    \item[(a)] For all $p\in \calP'$, $\cost_\alg(\I',p)=\cost_\alg(\I,p)$. In particular, the cost incurred by $\alg$ because of pairs in $\calP'$ is \textit{exactly} the same as the cost that $\alg$ would pay when running on instance $\I'$.
    \item[(b)] $w\left(\OPT(\I') \right)\leq w\left(B\cap \OPT(\I) \right)$.
\end{enumerate}
\end{claim}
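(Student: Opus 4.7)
The overall strategy is straightforward: for part (a), leverage the canonical structure (condition (c) of Definition \ref{def:canonical}) to argue that greedy behaves identically on $\I$ and $\I'$ for pairs in $\calP'$; for part (b), use the optimum solution of $\I$ restricted to $B$, together with the free border edges $E'$, as a feasible solution for $\I'$.

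\textbf{Plan for part (a).} I will proceed by induction on the order of arrival of pairs in $\calP'$. Fix a pair $p = \{s, t\} \in \calP' \cap \calP^{(j')}$ with $j' > j$. By the canonical property (c), the set $E_p \subseteq \Se' $ contains a direct edge between $s$ and $t$ of weight exactly $\cost_\alg(\I, p)$, and this edge is revealed in $\I'$ at precisely the same time as in $\I$. This already gives the upper bound $\cost_\alg(\I', p) \leq \cost_\alg(\I, p)$. For the matching lower bound, I will show the following two facts by induction. First, since $\alg$ uses only the direct edge in $\I$ (canonical property (c)), all three contraction rules add exactly the weight-$0$ edge $\{s', t'\}$ for each previously connected pair $p' = \{s', t'\}$; hence the shortcuts added in $\I'$ for pairs in $\calP'$ are identical to those added in $\I$ for the same pairs, and their endpoints lie strictly in the interior of $B$ (since $\calP' \subseteq \mathring B_\calP$). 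Second, because (i) each $p'\in \calP'$ has both endpoints at distance at most $r(1 - 1/(200 \log^2 K))$ from the center of $B$, (ii) all shortcuts are between such interior vertices, and (iii) the edge-subdivision assumption prevents any edge from ``jumping'' the ball boundary, any path in $G'^{(\tau)}$ between an interior vertex and the border $\partial B$ must cross an annulus of original-graph distance at least $r/(200\log^2(K))$. Since $\delta \geq 100(\log\alpha + \log\log K)$ forces $c_{j'} \leq c_{j+1} = c_j/2^{\delta + 10} \ll r/(200\log^2 K)$, reaching the border and using the free edges $E'$ costs strictly more than $c_{j'}$. Hence no path in $G'^{(\tau)}$ from $s$ to $t$ shorter than the direct edge can exist, yielding $\cost_\alg(\I', p) = \cost_\alg(\I, p)$ and closing the induction.

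\textbf{Plan for part (b) and main obstacle.} Let $F$ be the set of edges of $\OPT(\I)$ that lie entirely within $B$, so $w(F) = w(\OPT(\I) \cap B)$. I claim $F \cup E'$ is a feasible solution for $\I'$ in the graph $G'$. Indeed, for any pair $p = \{s, t\} \in \calP'$, the unique path from $s$ to $t$ in $\OPT(\I)$ enters and leaves $B$ some number of times at vertices on $\partial B$; the pieces of this path that lie inside $B$ are in $F$, and the free edges $E'$ (which connect every pair of border vertices at weight $0$) glue these pieces together into a connected $s$--$t$ path inside $G'$. Since $w(E') = 0$, this yields $w(\OPT(\I')) \leq w(F) + 0 = w(\OPT(\I) \cap B)$, as desired. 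The main obstacle in the whole argument is the interior/border separation in part (a): one must carefully track that every shortcut ever inserted by greedy in $\I'$ remains between vertices deep inside $B$, which in turn requires the large separation $\delta$ between consecutive cost classes (guaranteed by the canonical structure) together with the fact that $\calP' \subseteq \mathring B_\calP$ ensures endpoints are bounded away from $\partial B$. Once this is in place, both equalities follow cleanly from the direct-edge structure of canonical instances.
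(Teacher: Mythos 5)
Your plan is correct and follows essentially the same route as the paper for both parts: part (b) glues $\OPT(\I) \cap B$ using the free border edges $E'$ (the paper picks out the first exit and last re-entry vertices on each $\OPT(\I)$-path and replaces the outside portion by a single zero-cost $E'$ edge, which is what your ``glue the pieces'' step amounts to after subdivision); part (a) proceeds by the same induction, observing that the canonical property makes all three contraction rules add only the single weight-$0$ shortcut per connected pair, that endpoints of pairs in $\calP' \subseteq \mathring B_\calP$ (and hence all shortcuts) lie well inside $B$, and that the annulus of width $\Theta(r/\log^2 K)$ together with $\delta \geq 100(\log\alpha + \log\log K)$ makes any detour through $E'$ strictly more expensive than the direct $\Se'$ edge, so the shortest path is unchanged. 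One small remark: the paper invokes the contraction bound $d_G(s,t) \le \alpha\,\cost_\alg(\I,p)$ to rule out a shortcut endpoint escaping to near the border, whereas you assert directly that both endpoints of a pair in $\mathring B_\calP$ are at distance at most $r(1-1/(200\log^2 K))$ from the center; these are two phrasings of the same containment and both lead to the same annulus estimate, so the arguments coincide.
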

\begin{proof}
To see (b), let us buy the edge set $(\OPT(\I)\cap B)\cup E'$. It is clear that this costs at most $w\left(B\cap \OPT(\I) \right)$ since all edges in $E'$ have cost 0. We claim that this is a feasible solution for the instance $\I'$. Consider any pair $p=\{s,t\}\in \calP'$. It must be that $\OPT(\I)$ contains a path between $s$ and $t$. If this path does not leave the ball $B$ then it is contained in the edge set $(\OPT(\I)\cap B)\cup E'$. Otherwise, denote this path by its sequence of vertices:
$s,v_1,v_2,\ldots ,v_k, t$. Denote by $v_f$ and $v_\ell$ the first and last vertices on this path that are outside $B$. Then it is clear that the edge set $(\OPT(\I)\cap B)\cup E'$ contains the path $s,v_1,\ldots, v_{f-1},v_{\ell+1},\ldots ,v_k, t$. In both case, $s$ and $t$ are connected and $(\OPT(\I)\cap B)\cup E'$ is a feasible solution to instance $\I'$.

To see (a), we prove by induction on the number of pairs $\calP'$ arrived so far that the claim holds. If no pair arrived yet, this is vacuously true. Otherwise, consider the arrival of pair $p$ and assume (a) holds for all pairs that arrived before. First note that the set of shortcuts added by $\alg$ when connecting any pair $p'=\{s,t\}$ in instance $\I$ is just one edge $e=\{s,t\}$ with weight 0. This is because we assumed that just before $p'$ arrived, an edge of $\Se$ between $s$ and $t$ arrived with weight exactly $\cost_\alg(\I,p')$. Hence we can assume that $\alg$ went through this edge to connect $p'$. Here we also use the fact that when $\alg$ goes through only one edge to connect a pair, the three contraction rules behave exactly the same. 

Now by induction hypothesis, the set of shortcuts bought by $\alg$ so far on instance $\I'$ have exactly the same property. Hence this set of shortcuts is a subset of the shortcuts added by $\alg$ on instance $\I$. We claim that when pair $p=\{s,t\}$ arrives, the shortest path is again through the corresponding edge of $\Se'$. Assume this is not the case. As a shorthand, define $c=\cost_\alg(\I,p)$. Denote by $\mathring B$ the set of vertices in $B$ that are at distance at most 
$$
    r\cdot \left(1-\frac{1}{100\cdot \log^2(K)} \right)
$$ from the center of $B$. We first claim that all the shortcuts added so far by $\alg$ in instance $\I'$ have both endpoints in $\mathring B$ (i.e. far from the border of $B$). To see this, note that all the pairs in $\calP'$ are in $\mathring B_\calP$  hence they have one endpoint at distance at most 
$$
    r\cdot \left(1-\frac{1}{200\cdot \log^2(K)} \right)
$$ from the center of $B$. If the other endpoint was not in $\mathring B$, the distance $d$ between the two endpoints would be at least
$$
    d\geq \frac{r}{200\cdot \log^2(K)},
$$ and since pairs have contraction at most $\alpha$, we would have that the cost paid for this pair in $\I$ is at least 
$$
    \frac{d}{\alpha}\geq \frac{r}{200\cdot (\alpha\log^2(K))}
$$
but we assume that cost classes would be separated by at least a multiplicative $2^{\delta+10}>2^{10}\cdot \alpha^{100}\cdot \log^{100} (K)$ by assumption on $\delta$. Hence we would have a contradiction. Hence all pairs in $\calP'$ have both endpoints in $\mathring B$, but this also implies that all shortcuts added so far by $\alg$ in instance $\I'$ have both endpoints in $\mathring B$.

Now remark that because the instance is $(\alpha,\delta)$-canonical, we have that $d_G(s,t)\leq \alpha c$, and since the distance between a point in $\mathring B$ and the exact border of $B$ is much bigger than $\alpha c$, it cannot be that the shortest path uses some edges in $E'$ (just to reach them is already too expensive because all previously bought shortcuts have endpoints in $\mathring B$). Hence the shortest path only uses edges that were available to $\alg$ when connecting $p$ in instance $\I$. Therefore the shortest path available in $\I'$ can only be longer than the shortest path for the same pair in $\I$. Since the corresponding edge of $\Se'$ of weight exactly $c$ is available, the shortest path is in fact exactly the same.
\end{proof}

By Claim \ref{cla:subinstance}, we know that the cost incurred by $\alg$ because of pairs in $\mathring B_\calP$ is equal to the cost that $\alg$ would pay on the instance $\I'$. Denote by $k'_i$ the number of pairs in $\calP'^{(i)}$ in $\I'$. Note that by Claim \ref{cla:subinstance} we have $\calP'^{(i)}=\calP^{(i)}\cap \mathring B_\calP\cap D$. Because of properties (c) and (e) of Definition \ref{def:balanced}, it must be that the following inequalities hold
\begin{equation}
\label{eq:induction1}
    \cost_\alg(\I,\mathring B_\calP,\ch) \leq 10\cdot \ch(p(B)) \cdot c_j \cdot \log^{10}(K) \leq (550e^5) \cdot c_j \cdot \log^{10}(K)
\end{equation} where $c_j$ is the cost of the pair that created the ball $B$, and
\begin{equation}
\label{eq:induction2}
    k'_i\cdot c_i\leq \cost_\alg(\I,\mathring B_\calP,\ch).
\end{equation}
Putting Equations \ref{eq:induction1} and \ref{eq:induction2} together we obtain 
\begin{equation}
    \label{eq:induction3}
    k'_i\leq (550e^5) \cdot \log^{10}(K) \cdot \frac{c_j}{c_i} = (550e^5) \cdot \log^{10}(K) \cdot 2^{(\delta+10)\cdot (i-j)}\leq 2^{3(\delta+10)\cdot (i-j)},
\end{equation} for all $i>j$.

Note that in instance $\I'$, we have at most $M'=M-j<M$ cost classes. Finally, it is clear that $\I'$ is still an $(\alpha,\delta)$-canonical instance , $\alg$ behaves the same on the relevant pairs, and all conditions of Lemma \ref{lem:main_induction} are satisfied (we keep the same upper bound $K$ on the number of pairs). Hence we can apply the induction hypothesis on the sub-instance $\I'$ to obtain the following upper bound on the cost of dangerous pairs inside $B$. 
$$
    \cost_\alg(\I')\leq (880e^5) \left( \sum_{i=1}^{M'} \log(k'_i)\cdot w\left(\OPT(\I')\cap \calB'^{(i)}\right)\right) + e^{200+20M'/\log(K)}\cdot \left(\sum_{i=1}^{M'} \delta \cdot (M'-i)\cdot w\left(\OPT(\I')\cap \calB'^{(i)}\right)\right),
$$ with $k'_i\leq  2^{3(\delta+10)\cdot i}$ by re-indexing cost classes from 1 to $M'$ in Equation \ref{eq:induction3}. The first term on the right-hand side is

\begin{align*}
    &(880e^5)\cdot  \left( \sum_{i=1}^{M-j} \log(k'_i)\cdot w\left(\OPT(\I')\cap \calB'^{(i)}\right)\right)\\
    &\leq (880e^5)\cdot \left( \sum_{i=1}^{M-j} \log\left(2^{3(\delta+10)\cdot i}\right)\cdot w\left(\OPT(\I')\cap \calB'^{(i)}\right)\right)\\
    &\leq (2640e^5)\cdot (\delta+10)\cdot \sum_{i=1}^{M-j} i\cdot w\left(\OPT(\I')\cap \calB'^{(i)} \right)\\
    &\leq \left(\delta e^{200+20M'/\log(K)}\right)\cdot  \sum_{i=1}^{M-j} i\cdot  w\left(\OPT(\I')\cap \calB'^{(i)} \right).
\end{align*}

The second term in the right-hand side is less than
\begin{align*}
    &\left(e^{200+20M'/\log(K)}\right) \cdot \left(\sum_{i=1}^{M-j} \delta \cdot(M-j-i)\cdot w\left(\OPT(\I')\cap \calB'^{(i)}\right)\right).
\end{align*}
By summing both terms, we obtain a cost of at most 
\begin{align*}
    &\left(e^{200+20M'/\log(K)}\right)\cdot  \sum_{i=1}^{M-j} (i\delta) \cdot w\left(\OPT(\I')\cap \calB'^{(i)} \right) + \\
    &\left(e^{200+20M'/\log(K)}\right) \cdot \left(\sum_{i=1}^{M-j} \delta \cdot(M-j-i)\cdot w\left(\OPT(\I')\cap \calB'^{(i)}\right)\right)\\
    &\leq \left(\delta e^{200+20M'/\log(K)}\right)\cdot \left(\sum_{i=1}^{M-j} (M-j)\cdot w\left(\OPT(\I')\cap \calB'^{(i)}\right)\right).
\end{align*}

Now recall that in a balanced dual solution, all balls are pairwise disjoint hence we obtain the upper bound
\begin{equation}
\label{eq:induction4}
    \left(\delta e^{200+20M'/\log(K)}\right)\cdot  (M-j)\cdot  w(\OPT(\I'))
\end{equation}
on the cost of pairs in $\mathring B_\calP\cap D$ when summing on all $i$ in the previous inequality.

By Claim \ref{cla:subinstance}, we also have $w(\OPT(\I'))\leq w\left(B\cap \OPT(\I) \right)$. Now we need to take into account the charges that were put on dangerous pairs in $\mathring B_\calP\cap D$ by the balanced dual solution in the big instance $\I$. By property (d) of Definition \ref{def:balanced}, the total charged cost incurred by $\alg$ for pairs in $\partial B_\calP \cap D$ is a most $10/\log(K)$ fraction of the cost of pairs in $\mathring B_\calP\cap D$. Additionally, the charge of each pair in $p\in \mathring B_\calP\cap D$ is at most (by property (e))
$$
    \left(1+\frac{5}{\log(K)} \right)^{j-1}\leq e^{5(j-1)/\log(K)},
$$ if the ball that contains this pair is in $\calB^{(j)}$. Hence we only need to increase the upper bound given by Equation \ref{eq:induction4} by a multiplicative term $e^{(5(j-1)+10)/\log(K)}\leq e^{20j/\log(K)}$ to get a final upper bound of 
\begin{equation}
\label{eq:induction5}
    \left(\delta e^{200+(20(M-j)+20j)/\log(K)}\right)\cdot  (M-j)\cdot  w(\OPT(\I')) \leq \left(\delta e^{200+20M/\log(K)}\right)\cdot  (M-j)\cdot  w(\OPT(\I'))
\end{equation}
on the total charged cost incurred by $\alg$ (in instance $\I$) because of pairs in $B_\calP \cap D$. To finish the proof, we sum these upper bounds over all $B\in \calB^{(j)}$ and all $j$ to obtain indeed the second term of the induction hypothesis.
\end{proof}

Given Lemma \ref{lem:main_induction}, it is now straightforward to prove Theorem \ref{thm:main2}. Lemma \ref{lem:main_induction} applied to the main canonical instance $\I$ states that $\alg$ pays at most 
\begin{align*}
     &(880 e^{5})\cdot  \left( \sum_{j=1}^M \log(k_j)\cdot w\left(\OPT(\I)\cap \calB^{(j)}\right)\right) + e^{200+20M/\log(K)}\cdot \left(\sum_{j=1}^{M} \delta \cdot (M-j)\cdot w\left(\OPT(\I)\cap \calB^{(j)}\right)\right)\\
     &\leq O\left(\sum_{j=1}^M \log(k_j)\cdot w\left(\OPT(\I)\cap \calB^{(j)}\right) +\sum_{j=1}^{M} \delta \cdot (M-j)\cdot w\left(\OPT(\I)\cap \calB^{(j)}\right)\right)
\end{align*}
For the first inequality, we use that $M\leq \log(K)$. We obtain that the first term is at most 
$$
    \log(k)\cdot \sum_{j=1}^M w\left(\OPT(\I)\cap \calB^{(j)}\right) \leq \log(k) \cdot w(\OPT(\I))
$$ since all balls in $\calB$ are pairwise disjoint. The second term, is at most 
$$
    \sum_{j=1}^{M} \delta \cdot (M-j)\cdot w\left(\OPT(\I)\cap \calB^{(j)}\right) \leq (\delta M) \cdot w(\OPT(\I))\leq \log(k) \cdot w(\OPT(\I)),
$$ since we have that $M\leq \log(k)/\delta$ (choosing the upper bound $K=k$ is a valid choice). Hence, if $\I$ is an $(\alpha,\delta)$-canonical instance, we indeed have that the cost paid by $\alg$ on this instance is at most $O(\log(k))\cdot w(\OPT(\I))$, which proves Theorem \ref{thm:main2}, and Theorem \ref{thm:main} if we combine it with Lemma \ref{lem:canonical_transform}.

\section{Proof of Theorem \ref{thm:tree} and Theorem \ref{thm:decreasing}}
\label{sec:variants}
The aim of this section is to establish Theorem \ref{thm:tree} and Theorem \ref{thm:decreasing}. To this end we prove Lemma \ref{lem:subdivision_works_greedy_3}. We recall that in this section, all results apply only to $\mathrm{Greedy}_3$ so $\alg$ will be a shorthand for $\mathrm{Greedy}_3$ in this whole section, unlike in Section \ref{sec:main} where $\alg$ would mean that we could use any of the contraction rules.

\begin{lemma}
\label{lem:subdivision_works_greedy_3}
Suppose we are given an instance $\I$ of Steiner Forest, with $k$ pairs of terminals. Then we can construct an instance $\I'$ of Steiner Forest that satisfies the following, where $\alg = \mathrm{Greedy}_3$:

 \begin{enumerate}
     \item[(a)] $\cost_\alg(\I) = \cost_\alg(\I')$.
     \item[(b)] The contraction of all pairs in instance $\I'$ when running $\alg$ is exactly equal to 1.
     \item[(c)] $\I'$ has $k'= O(k^2)$ terminal pairs.
     \item[(d)] The set of terminals (vertices that appear in a pair) is the same for instance $\I$ and $\I'$.
 \end{enumerate}
\end{lemma}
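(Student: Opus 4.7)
The plan is to construct $\I'$ by subdividing each pair of $\I$ along the sub-sequence of previously-arrived terminals that $\alg$ visits when connecting it. For every pair $p_i = \{s_i, t_i\}$ of $\I$, let $P_i$ be the path taken by $\alg$ on $\I$ and let $s_i = u_0^{(i)}, u_1^{(i)}, \dots, u_{m_i+1}^{(i)} = t_i$ be the Rule-$3$ sub-sequence of previously-arrived terminals on $P_i$. Let $J_i \subseteq \{0,\dots,m_i\}$ index the segments of positive cost in $\I$, i.e.\ those for which no direct $0$-weight shortcut exists between $u_j^{(i)}$ and $u_{j+1}^{(i)}$ at the moment $p_i$ is processed in $\I$. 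Because the sub-path of $P_i$ between two consecutive terminals of the sub-sequence uses no shortcut (shortcuts exist only between terminals, and by construction no other terminal lies on this sub-path), sub-path optimality yields $c_{i,j} = d_G(u_j^{(i)}, u_{j+1}^{(i)})$ for every $j \in J_i$, where $c_{i,j}$ denotes the cost that $\alg$ pays on the $j$-th segment of $P_i$ in $\I$. I define $\I'$ to be the ordered list of pairs $\{u_j^{(i)}, u_{j+1}^{(i)}\}$ for $i = 1,\dots,k$ and $j \in J_i$, with the sub-pairs of $p_i$ preceding those of $p_{i+1}$ and appearing in increasing $j$ within each $p_i$. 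Properties (c) and (d) are then immediate: each $p_i$ contributes at most $m_i + 1 \leq 2(i-1)+1$ sub-pairs so that $k' \leq k^2 = O(k^2)$, and every endpoint of a sub-pair is already a terminal of $\I$.

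To establish (a) and (b) simultaneously, I induct on the processing order in $\I'$, maintaining the invariant that at the instant $\alg$ is about to handle $\{u_j^{(i)}, u_{j+1}^{(i)}\}$ in $\I'$: (i) the current shortcut set of $\I'$ equals the shortcut set of $\I$ just before $p_i$ is processed, augmented by the edges $\{u_l^{(i)}, u_{l+1}^{(i)}\}$ with $l < j$, $l \in J_i$ inserted while handling the earlier sub-pairs of $p_i$; and (ii) the shortest path from $u_j^{(i)}$ to $u_{j+1}^{(i)}$ in this augmented metric has length exactly $d_G(u_j^{(i)}, u_{j+1}^{(i)})$. The bookkeeping for (i) rests on the simple observation that each shortcut in $\I$ is created the first time its endpoints appear consecutively as a positive-cost segment on some greedy path, which is precisely when the matching sub-pair is processed in $\I'$. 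Given the invariant, breaking ties in favour of the direct $G$-shortest path — which, by the definition of the sub-sequence, visits no previously-arrived terminal other than its endpoints — $\alg$ pays exactly $c_{i,j}$ on this sub-pair and Rule~3 inserts exactly the shortcut $\{u_j^{(i)}, u_{j+1}^{(i)}\}$, matching the shortcut produced by $\I$ for the same segment and propagating the invariant. Summing over all sub-pairs gives $\cost_\alg(\I') = \sum_{i,\, j \in J_i} c_{i,j} = \cost_\alg(\I)$, and each sub-pair has contraction exactly $1$.

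The main obstacle is the distance statement (ii). Shortcuts inherited from sub-pairs of $p_1,\dots,p_{i-1}$ cannot reduce the distance below $d_G(u_j^{(i)}, u_{j+1}^{(i)})$, because sub-path optimality of $P_i$ in $\I$ already forces the shortest distance from $u_j^{(i)}$ to $u_{j+1}^{(i)}$ in $\I$'s pre-$p_i$ metric to be exactly $d_G(u_j^{(i)}, u_{j+1}^{(i)})$. For each additional shortcut $\{u_l^{(i)}, u_{l+1}^{(i)}\}$ with $l < j$, I apply the shortest-path monotonicity rule: inserting a weight-$0$ edge between $a$ and $b$ updates the $u$-to-$v$ distance to $\min\bigl(d_{\mathrm{old}}(u,v),\ d_{\mathrm{old}}(u,a) + d_{\mathrm{old}}(b,v),\ d_{\mathrm{old}}(u,b) + d_{\mathrm{old}}(a,v)\bigr)$. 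Setting $(u,v) = (u_j^{(i)}, u_{j+1}^{(i)})$ and $(a,b) = (u_l^{(i)}, u_{l+1}^{(i)})$, sub-path optimality evaluates $d_{\mathrm{old}}(u_j^{(i)}, u_l^{(i)}) = \sum_{m=l}^{j-1} c_{i,m}$ and $d_{\mathrm{old}}(u_{l+1}^{(i)}, u_{j+1}^{(i)}) = \sum_{m=l+1}^{j} c_{i,m}$, whose sum equals $c_{i,l} + c_{i,j} + 2 \sum_{m=l+1}^{j-1} c_{i,m} \geq c_{i,j} = d_G(u_j^{(i)}, u_{j+1}^{(i)})$; the symmetric expression is bounded in the same way. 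A short induction on the order in which the extra shortcuts are inserted then shows that the distance from $u_j^{(i)}$ to $u_{j+1}^{(i)}$ stays equal to $d_G(u_j^{(i)}, u_{j+1}^{(i)})$, which closes the proof.
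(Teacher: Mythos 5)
Your construction of $\I'$ (subdividing each pair $p_i$ along its Rule-3 sub-sequence of previously-arrived terminals, keeping only the positive-cost segments) and the induction on arrival order are exactly the paper's. The paper establishes that $\alg$ pays $c_{i,j}$ for the sub-pair $\{u_j^{(i)}, u_{j+1}^{(i)}\}$ by a ``first failure'' contradiction, but it elides the interaction with the zero-weight shortcuts $\{u_l^{(i)}, u_{l+1}^{(i)}\}$ ($l<j$) that $\I'$ has already inserted while handling earlier sub-pairs of the same $p_i$ (and which are not yet present in $\I$'s metric $G^{(i-1)}$); your explicit check, via the single-edge update formula and sub-path optimality, that these insertions never reduce $d(u_j^{(i)}, u_{j+1}^{(i)})$ below $c_{i,j}$ makes that step airtight. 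So the proof is correct and follows the paper's route, with one technical point handled a bit more carefully.
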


\begin{proof}
The idea is to subdivide each pair $p_i$ in instance $\I$ into $O(k)$ pairs that are made of pairs of previously arrived terminals that lie on the path $P$ that is used to connect $p_i$. It will be clear from construction that Property (d) holds. Formally, we will construct an instance $\I'$ of Steiner Forest as follows. Beginning with the empty set of terminal pairs, for each pair $\{s_i, t_i\}$ that arrives in $\I$ we will construct a list of pairs $\mathcal{P}(\{s_i,t_i\})$ and add them to our instance $\I'$. The order in which we add pairs to $\I'$ determines the arrival order. We next describe how to construct $\mathcal{P}(\{s_i,t_i\})$. Suppose that when $\{s_i, t_i\}$ appears in $\I$ the algorithm $\alg$ connects the pair using the path $P=s_i, v_{1}, \dots, v_{\ell}, t_i$. Let $s_i, v_{1}', \dots, v_{\ell'}', t_i$ be the sub-sequence of $P$ specified by the contraction rule 3. Recall that this sub-sequence contains only $s_i,t_i$ and previously arrived terminals. We add to $\mathcal{P}(\{s_i,t_i\})$ the subset  of $\{\{s_i, v_{1}'\}, \{v_{1}', v_{2}'\},\ldots,  \{v_{\ell'-1}', v_{\ell'}'\}, \{v_{\ell'}', t_i\}\}$ containing all the pairs such that their distance in the contracted metric is not yet 0 (here we consider the contracted metric just before connecting the pair $p=\{s_i,t_i\}$ in instance $\I$). For any pair $\{s',t'\}$ in $\mathcal{P}(\{s_i,t_i\})$ we say that $\{s_i,t_i\}$ is the parent of $\{s',t'\}$.

Now that we have constructed $\I'$, we argue that it satisfies the properties of the lemma. To show that $\cost_\alg(\I) = \cost_\alg(\I')$ we will prove the following stronger statement. For any pair $p_i=\{s_i,t_i\}$,
\begin{enumerate}
    \item $\alg$ spends exactly the same cost for the pairs in $\mathcal{P}(\{s_i,t_i\})$ in instance $\I'$ that it was spending for the pair $p_i=\{s_i,t_i\}$ in instance $\I$, i.e.
    \begin{equation*}
        \cost_\alg(\I,p_i)=\cost_\alg(\I',\calP(\{s_i,t_i\})).
    \end{equation*}
    \item The contracted metric $G^{(i)}$ after revealing all the pairs up to pair $p_i$ in instance $\I$ is exactly the same as the contracted metric $G'^{(\tau)}$ after revealing all the pairs $\bigcup_{i'\leq i}\calP(\{s_{i'},t_{i'}\})$ in instance $\I'$.
\end{enumerate}

We will prove this statement by induction on $i$. It is vacuously true for $i=0$ since no pair appeared so far in both instances. Assume this is true up to pair $i-1$, and let us prove the statement for step $i$. Let $v'_0, v'_1, \dots, v'_{\ell'}, v'_{\ell'+1}$ be the sub-sequence of previously arrived terminals in the path $P$ that $A$ uses to connect $s_i$ and $t_i$ in instance $\I$, where $v'_0=s_i$ and $v'_{\ell'+1}=t_i$.
Suppose that for some $0\leq j \leq \ell'$ the pair $\{v'_j, v'_{j+1} \}$ is not in $\mathcal{P}(\{s_i,t_i\})$. Then by construction we know that $v'_j$ and $v'_{j+1}$ were already at distance $0$ in the contracted metric before the arrival of $\{s_i,t_i\}$ during the execution of $\alg$ on $\I$, i.e.
\begin{equation*}
    d_{G^{(i-1)}}(v'_j, v'_{j+1})=0.
\end{equation*}
As a result the portion of the path $P$ between $v'_j$ and $v'_{j+1}$ contributes $0$ to the value $\cost_\alg(\I,p_i)$ hence no cost is lost. Now suppose that the pair $\{v'_j, v'_{j+1}\}$ is in $\mathcal{P}(\{s_i,t_i\})$, and that when presented with the pair $\{v'_j, v'_{j+1}\}$ in instance $\I'$, $\alg$ uses a path $P'$ to connect the pair. Denote by $P_{[v'_j,v'_{j+1}]}$ the restriction of the path $P$ to vertices that appear in-between vertices $v'_j, v'_{j+1}$ (included). We claim that $P'=P_{[v'_j,v'_{j+1}]}$. To see this, let us consider the first time this is not the case, this would mean that the path $P'$ costs less to $\alg$ in instance $\I'$ than what $P_{[v'_j,v'_{j+1}]}$ costed to $\alg$ in instance $\I$. But since we assumed that the contracted metrics were identical up to pair $p_{i-1}$, $\alg$ could have replaced the path $P_{[v'_j,v'_{j+1}]}$ by $P'$ to pay less. This is a contradiction to the fact that $\alg$ always takes the shortest path. Hence the total cost for the pair $p_i$ is preserved and we have that 
\begin{equation*}
    \cost_\alg(\I,p_i)=\cost_\alg(\I',\calP(\{s_i,t_i\})).
\end{equation*}
To see why we have the second property, now note that we have that $P'=P_{[v'_j,v'_{j+1}]}$, in particular on the path $P'$ there is no previously arrived terminal other than $v'_j,v'_{j+1}$. By contraction rule 3, it means that $\alg$ simply adds an edge $\{v'_j,v'_{j+1}\}$ of weight $0$ in the contracted metric. But by the definition of contraction rule 3, this edge was also added by $\alg$ when connecting the pair $p_i$ in instance $\I$. Reciprocally, it is clear that if a shortcut was added by $\alg$ when connecting the pair $p_i$ in instance $\I$, it must be between two previously arrived terminals $v'_{j'},v'_{j'+1}$ that are consecutive on the path $P$. If these two previously arrived terminals were already at distance $0$ in the contracted metric, then the shortcut does not change the metric. Otherwise if $v'_{j'}$ and $v'_{j'+1}$ were not at distance $0$ then we have that $\calP(\{s_i,t_i\})$ also contains the pair $\{v'_{j'},v'_{j'+1}\}$ hence this shortcut will also be added in the metric when running instance $\I'$. This proves that the contracted metrics are indeed the same after step $i$. We also have Property (b) since, by construction, there is no previously arrived terminal on the path $P'$. Hence it must be that $\alg$ pays the cost of the path $P'$ in the original metric (the only way to pay less than the length in the original metric is to use a shortcut that was added before, but such a shortcut must connect previously arrived terminals).

Lastly, we argue that the number of terminals $k'$ in $\I'$ is $O(k^2)$. Since there are $k$ pairs in instance $\I$, we know that there are at most $2k$ terminals in total. Thus there are at most $2k$ terminals along any path $\alg$ takes to connect a terminal pair in $\I$ (going twice through the same terminal cannot happen since $\alg$ takes the shortest path). Summing this bound over all terminal pairs in $\I$, we get that $\I'$ has at most $2k^2$ terminals.
\end{proof}

With Lemma \ref{lem:subdivision_works_greedy_3}, we are able to prove Theorem \ref{thm:tree}.

\begin{proof}[Proof of Theorem \ref{thm:tree}]
Construct $\I'$ from $\I$ as in Lemma \ref{lem:subdivision_works_greedy_3}. Then we can write 
\begin{align*}
    \cost_\alg(\I) = \cost_\alg(\I') & \leq  O(\log(k')\cdot \log\log(k'))\cdot w(\OPT(\I')) \\
    & \leq O(\log(k')\cdot \log\log(k'))\cdot w(T^\star(\I)) \\
    & \leq  O(\log(k)\cdot \log\log(k))\cdot w(T^\star(\I)),
\end{align*}
where the first equality follows by Property (a) of Lemma \ref{lem:subdivision_works_greedy_3}, the second inequality from Theorem \ref{thm:main} applied to instance $\I'$ (using that contraction of all pairs is 1 by Property (b) of Lemma \ref{lem:subdivision_works_greedy_3}), the third inequality from the fact that the optimum tree solution to instance $\I$ denoted $T^\star(\I)$ is also a feasible solution to instance $\I'$ (because by Property (d) of Lemma \ref{lem:subdivision_works_greedy_3} the set of terminals does not change) and the last inequality from the fact that $k'=O(k^2)$ by Property (c) of Lemma \ref{lem:subdivision_works_greedy_3}.
\end{proof}

Next we prove Theorem \ref{thm:decreasing}. For this, we require some additional notation. Let $\I$ be an instance of Steiner Forest and $\calF$ a feasible solution consisting of trees $T_1, \dots, T_{m}$. Then the width of a tree $T_j$ with respect to an instance $\I$, denoted $\wi(T_j,\I)$, is defined to be the largest distance between any terminal pair in $T_j$ in the original metric, i.e.
\begin{equation*}
    \wi(T_j,\I) = \max_{u\in \calT \cap T_j} d_{G}(u,\overline{u}).
\end{equation*} where $\overline{u}$ is the terminal that $u$ should be connected to in instance $\I$, and $\calT$ is the set of all terminals in instance $\I$. This is exactly the same definition of width as in \cite{gupta2015greedy}. For such a forest $\calF=(T_1,\ldots, T_q)$ we define the potential 
$$\Phi(\calF,\I) = w(\calF) + \sum_{j=1}^{q}\wi(T_j,\I).$$ 

Where $w(\calF)$ is the cost of the forest. Note that $w(\calF)\leq \Phi(\calF,\I) \leq 2w(\calF)$ for any forest $\calF$ that is a feasible solution to instance $\I$. We will use this property to prove Theorem \ref{thm:decreasing}.

\begin{proof}[Proof of Theorem \ref{thm:decreasing}]
We construct the instance $\I'$ from instance $\I$ exactly as how we transformed the instance in Lemma \ref{lem:subdivision_works_greedy_3}. We show that if the costs of shortest paths are non-increasing over time, it must be that
\begin{equation*}
    w(\OPT(\I')) = O(w(\OPT(\I)).
\end{equation*}
Once this is established, the result follows as we have

\begin{align*}
    \cost_\alg(\I) = \cost_\alg(\I') & = O(\log(k')\cdot \log\log(k'))\cdot w(\OPT(\I')) \\
    & = O(\log(k)\cdot \log\log(k))\cdot w(\OPT(\I)) \\
\end{align*}
again by Lemma \ref{lem:subdivision_works_greedy_3} and Theorem \ref{thm:main} (recall that subdividing pairs to create instance $\I'$ as in Lemma \ref{lem:subdivision_works_greedy_3} guarantees that the contraction is 1 for all pairs in the instance $\I'$).

We argue that $w(\OPT(\I')) = O(w(\OPT(\I))$, using essentially the same potential function argument than that of \cite{gupta2015greedy}. The idea is to begin with the solution $\OPT(\I)$ and add additional connections to produce a feasible solution $\calF$ to $\I'$, where $w(\calF) \leq 2 w(\OPT(\I))$. Since $w(\OPT(\I')) \leq w(\calF)$ if we succeed the proof is complete.

As previously stated, we initialize $\calF$ to be $\OPT(\I)$. We will only add edges to $\calF$ hence it will be clear that $\calF$ will always be a feasible solution to instance $\I$. Now for each terminal pair $\{s_i, t_i\}$ that arrives in $\I$ we construct $\calF'$ from $\calF$ as follows. Suppose that all the pairs in $\mathcal{P}(\{s_i,t_i\})$ are already connected by the current solution $\calF$, then nothing needs to be done.

Suppose otherwise that the solution $\calF$ does not connect all pairs in $\mathcal{P}(\{s_i,t_i\})$. By re-indexing let $T_1, \dots, T_{q}$ be the components of $\calF$ that contain the terminals that appear in the pairs of $\mathcal{P}(\{s_i,t_i\})$, ordered so that $$\wi(T_1) \geq \dots \geq \wi(T_q).$$ 

Construct $\calF'$ from $\calF$ by adding the edges that $\alg$ uses to connect the pairs in $\mathcal{P}(\{s_i,t_i\})$ in instance $\I'$.  We claim that the total cost of these edges (which is exactly equal to $\cost(\I,p_i)$) is at most the width of $T_{q}$ (with respect to instance $\I$), i.e.
\begin{equation*}
    \cost(\I,p_i)\leq \wi(T_q,\I).
\end{equation*}
To see this first note that the tree $T_q$ either contains the pair $\{s_i,t_i\}$ or a pair that appeared even before $\{s_i,t_i\}$ in instance $\I$ (here we use that $\calF$ is a feasible solution to instance $\I$). Hence if the sequence of shortest paths $d_1=d_G(s_1,t_1), \dots ,d_k=d_G(s_k,t_k)$ is non-increasing, then

$$\wi(T_q,\I) \geq  \min_{i'\leq i}d_{i'} = d_i \geq c_i = \cost(\I,p_i).$$

For the same reason if the sequence  of costs $c_1 = \cost(\I,p_1) ,\dots, c_k = \cost(\I,p_k)$ is non-increasing, then $$\wi(T_q,\I) \geq \min_{i'\leq i}d_{i'} \geq  \min_{i'\leq i}c_{i'} = c_{i} = \cost(\I,p_i).$$ We use here that $d_i\geq c_i$ for all $i$. This proves our claim.

We now note that, if we needed to add edges because the current solution $\calF$ did not connect all the pairs in $\mathcal{P}(\{s_i,t_i\})$, then we obtained a new solution $\calF'$ such that
\begin{equation*}
    w(\calF')\leq w(\calF) + \cost(\I,p_i),
\end{equation*}
and at least two of the components in $T_1,T_2,\ldots ,T_q$ were merged into one component. Hence
\begin{equation*}
    \wi(\calF',\I)\leq \wi(\calF,\I)-\min_{1\leq i\leq q}\wi(T_i,\I)\leq \wi(\calF,\I)-\wi(T_q,\I).
\end{equation*}
In particular we obtain
$$\Phi(\calF',\I)-\Phi(\calF,\I) \leq  \cost(\I,p_i) -\wi(T_q,\I) \leq 0,$$ by the above remarks. We then update $\calF$ to be $\calF'$.

After completing the process we have that $\calF$ connects all pairs in $\I'$, as required and that the potential function did not increase. Therefore we have that
\begin{equation*}
    w(\calF')\leq \Phi(\calF',\I)\leq \Phi(\calF,\I) \leq 2w(\calF),
\end{equation*}
as desired.
\end{proof}

\section{Open Problems and Future Directions}
\label{sec:open}
In this last section, we discuss problems that are left open by our analysis. The obvious open problem is to prove that greedy is $O(\log(k)\cdot \log(k)\log(k))$-competitive on general instances. The immediate idea to prove the latter would be to improve Theorem \ref{thm:main} and show that the assumption on the contraction is not needed. However, this assumption is used crucially in one place of our proof. That is, during the inductive proof of Lemma \ref{lem:main_induction} where we argue that we can ``cluster'' pairs of terminals inside a big dual ball and pretend that we could run a greedy algorithm on them independently of the rest of the instance. In this argument we crucially rely on the fact that these pairs have \textit{both} endpoints inside the ball. This has to be the case if the contraction of pairs is small (because to leave the ball they have to cross the border $\partial B$ which induces a path much longer than the cost of the pairs at hand). However if the contraction is very big, it might be that the other endpoint is very far from the ball $B$. In this case, the inductive step as it is now does not work. Lemma \ref{lem:main_induction} is the only place where the assumption on contraction is really needed, we believe the rest of Section \ref{sec:main} works without this assumption. Hence circumventing this issue without breaking anything else in the proof would immediately prove that greedy is $O(\log(k)\cdot \log\log(k))$-competitive (this would apply to any of the three contraction rules).

Another way around this issue would be to use Theorem \ref{thm:main} as a black box and argue that it implies an improved bound in general. This seems to be a promising approach and we conjecture that pairs with contraction bigger than $\textrm{polylog}(k)$ should not be an issue. In fact, we are not even aware of an instance for which the cost of pairs with contraction \textit{greater} than $\textrm{polylog}(k)$ is not within constant factor of the total cost of pairs with contraction \textit{smaller} than $\textrm{polylog}(k)$ and such that greedy is $\omega(1)$-competitive on this instance! Unfortunately, such an argument has remained elusive to us. Another direction would be to use Theorem \ref{thm:tree} along with some more involved potential function arguments that show that worst-case instances are in fact single tree instances. Interestingly, it is easy to show that for $\textrm{Greedy}_2$ the worst-case instance is indeed one where the optimum solution is a single tree, but we are not aware of a way to prove Theorem \ref{thm:tree} for this version of greedy.

Another intriguing question is that, even circumventing the issue above, it seems our techniques will reach their limits at the ratio $O(\log(k) \cdot \log\log(k))$. But the exact conjecture of Awerbuch, Azar, and Bartal is that greedy is $\Theta(\log(k))$-competitive. Hence one would need to get rid of the $\log\log(k)$ in our proof but this seems challenging. This additionnal factor is coming from the proof of Awerbuch, Azar, and Bartal in which they scale down the dual balls by a factor of roughly $\log(k)$ to be able to apply some girth argument. Hence to cluster the pairs inside these balls, we need to look essentially $\log\log(k)$ cost classes further and we can only get a $1/\log\log(k)$ fraction of the total cost by doing this. Hence it seems that scaling down by $\log(k)$ is a bad idea in our case. One would like to scale down only by a constant factor but the problematic examples of \cite{SODA96,chen2010designing} show that it might be that one has to lose all but a $1/\log(k)$ fraction of the dual balls. Such an argument seems very unclear to us if the girth argument is broken, which would be the case if we scale down the dual balls by only a constant factor.

\appendix
\section{Deferred proofs}
\label{sec:appendix_proofs}
\subsection{A lower bound of $\Omega(\log(k))$ with a single cost class and such that all pairs have contraction equal to 1}
\label{subsec:appendixexample}
We present here the example given by \cite{chen2010designing,gupta2015greedy} which shows that the analysis of Steiner Forest for a single cost class is exactly tight. That is, it might be that the cost incurred by greedy on one cost class is already $\Omega(\log(k))\cdot w(\OPT(\I))$. Consider an unweighted cubic graph with $n$ vertices and girth $g=c\log(n)$ for some constant $c$ (see \cite{biggs1998constructions} for a construction of such graphs). Fix a spanning $T$ of $G$ and let $E'=E\setminus E(T)$ be the non-tree edges. Set the length of edges in $E(T)$ to $1$ and the length of edges in $E'$ to $g/2$.

Finally, consider $M$ a maximum matching in $G'=(V,E')$. The demand set $\calP$ will be simply the matching $M$, and the metric the weighted graph as defined above. By induction on the number of pairs already arrived, we show that greedy pays exactly $g/2$ for all pairs in $M$ (regardless of the contraction rule chosen for greedy). Assume this is true so far, that is we revealed a subset $M'\subseteq M$ of the matching to greedy, and greedy bought exactly those edges in $M'$. Now reveal the next edge $e=\{s,t\}$ in the matching. To connect $s$ to $t$, greedy can either buy the edge $\{s,t\}$ which costs $g/2$, or try some other path. However, because the girth in $G$ is at least $g$, this other path must contain at least $g-1$ edges. Denote by $m$ the number of edges on this path that do not belong the tree $T$ and by $m'$ the number of these edges on the path that were already bought by greedy. Note that $m'\leq \left\lceil m/2\right\rceil$ since $M'$ is a matching. Then we see that the cost of this other path must be at least
\begin{equation*}
    (g-1-m)+(m-m')\cdot (g/2)\geq  (g-1-m)+\left\lfloor m/2\right\rfloor\cdot g/2\geq g-2 > g/2.
\end{equation*}
Hence the unique shortest path is again to buy the single edge $\{s,t\}$ which proves the induction. Note that this also proves that all pairs in the instance have contraction exactly 1.

We can now lower bound the cost paid by greedy. Since the maximum degree in the graph is $3$, it must be that $|M|=\Omega (n)$. This means that greedy pays a cost $\Omega (n\log(n))$ because of pairs in $M$ (which all cost \textit{exactly} the same by construction). However, a better solution would be simply to buy the spanning tree of $G$ that costs $n-1$ by construction. Hence we get a competitive ratio of $\Omega(\log(n))$ for greedy (with any of our 3 contraction rules), such that there is only one cost class, and all pairs have contraction equal to 1.

What is intuitively happening in this example is that greedy is tricked into thinking that all the pairs should be in their own component, instead of buying one big spanning tree for all the terminals. It also highlights that greedy for Steiner Tree and greedy for Steiner Forest are indeed very different algorithms on instance where the optimum is a tree, even though they share the same name of ``greedy''.

\subsection{Proof of Lemma \ref{lem:AAB}}

Here we prove Lemma \ref{lem:AAB}. It will be helpful to recall Moore's bound. Recall that the girth of a graph is the length of the shortest cycle.

\begin{theorem}[Moore's bound, see \cite{bollobas2004extremal},\cite{alon2002moore}]
\label{thm:moore}
Every graph with at least $2n^{1+\frac{1}{p}}$ edges has girth at most $2p$. 
\end{theorem}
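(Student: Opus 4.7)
The plan is to prove Moore's bound by combining two standard ingredients: an averaging argument that extracts a subgraph of large minimum degree, and a BFS/ball-volume argument that lower bounds the number of vertices in a graph of large girth.

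First I would translate the edge bound into a minimum-degree bound on a subgraph. The hypothesis gives $|E(G)| \geq 2n^{1+1/p}$, so the average degree is at least $4n^{1/p}$. I would then invoke the classical lemma that any graph of average degree $d$ contains a subgraph $G'$ of minimum degree at least $d/2$; the proof is the one-line observation that the subgraph $G'$ maximizing the ratio $|E|/|V|$ cannot contain a vertex of degree strictly less than $|E(G')|/|V(G')| \geq |E(G)|/|V(G)|$, since deleting such a vertex would strictly increase the ratio. Applying this to our $G$, I obtain a subgraph $G'$ with minimum degree $d' \geq 2n^{1/p}$ and with $|V(G')| \leq n$. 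Moreover the girth of $G'$ is at least the girth of $G$, so it suffices to prove that $G'$ has girth at most $2p$.

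Next I would run the BFS/tree-volume argument on $G'$. Suppose for contradiction that $G'$ has girth strictly greater than $2p$. Fix any vertex $v \in V(G')$ and consider a BFS tree rooted at $v$ truncated at depth $p$. The girth hypothesis implies that this BFS tree is an actual tree (no two vertices at depth $\leq p$ are joined by a non-tree edge, otherwise they would close a cycle of length $\leq 2p$), and moreover every vertex at depth $i < p$ has at least $d' - 1$ children in the tree. Counting level by level gives
\begin{equation*}
|V(G')| \;\geq\; 1 + d' + d'(d'-1) + \cdots + d'(d'-1)^{p-1} \;\geq\; d'(d'-1)^{p-1}.
\end{equation*}

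Finally I would plug in $d' \geq 2n^{1/p}$, which gives $d' - 1 \geq n^{1/p}$ (for $n \geq 1$), and conclude
\begin{equation*}
|V(G')| \;\geq\; 2n^{1/p} \cdot \bigl(n^{1/p}\bigr)^{p-1} \;=\; 2n,
\end{equation*}
contradicting $|V(G')| \leq n$. Thus $G'$ (and hence $G$) has girth at most $2p$. The main technical point to get right is the averaging lemma extracting the min-degree subgraph, and the careful bookkeeping in the geometric sum; otherwise the argument is entirely routine and requires no new ideas beyond the textbook Moore bound proof.
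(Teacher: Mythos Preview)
The paper does not give its own proof of this theorem; it is cited from the literature (Bollob\'as; Alon--Hoory--Linial) and used as a black box in the proof of Lemma~\ref{lem:AAB}. Your argument is the standard textbook proof of the Moore bound and is correct.

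One minor imprecision worth noting: the parenthetical claim that no two vertices at depth $\leq p$ are joined by a non-tree edge is slightly too strong---two vertices both at depth exactly $p$ may well be adjacent, closing a cycle of length $2p+1$, which is permitted under the hypothesis girth $>2p$. This does not affect your level-by-level count, however, since that count only requires that every vertex at depth $i<p$ has at least $d'-1$ neighbors at depth $i+1$ and that no two depth-$i$ vertices share such a neighbor; both of these do follow from girth $>2p$, since a violation would close a cycle of length at most $2i+2\leq 2p$.
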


\begin{proof}[Proof of Lemma \ref{lem:AAB}]

Recall that we are interested in a subset of pairs $\calP'$ that all belong to a same cost class $\calP'$. That is, all the pairs in $\calP'$ cost the same value $c$. We will place dual balls as desired in the statement of Lemma \ref{lem:AAB}. To do this we will try to place a dual ball around the endpoint of a pair $p$ as it arrives. If this is not possible without intersecting previously placed balls, we then skip the pair $p$. It will be clear by construction that the dual solution will be feasible. As we construct the feasible dual solution $\calB$, we will maintain an auxiliary graph $G'=(V',E')$ which will be unweighted. The quantity of interest will be the number of edges in $E'$. The vertex set $V'$ will correspond to a subset of terminals that appear in the pairs in $\calP'$. 

We proceed as follows. When a pair $p=\{s,t\}$ belonging to  $\calP'$ arrives, we try to place a ball of radius
\begin{equation*}
    r=\frac{c}{8\log(|\calP'|)}
\end{equation*}
around either one of $s$ or $t$. If one of these two balls can be placed without intersecting balls previously placed in $\calB$ we place it. We also add its center (either $s$ or $t$) to the vertex set $V'$ of the auxiliary graph $G'$. Otherwise, if none of these balls can be placed without intersecting balls already placed in $\calB$ we do not add any ball. However, we identify one ball $B\in \calB$ that intersects with the ball of radius $r$ around $s$ and another ball $B'\in \calB$ for $t$. Let $s'$ and $t'$ be their centers. By construction we have that $s',t'\in V'$. We can add the edge $\{s',t'\}$ in the auxiliary graph. We have the following claim.

\begin{claim}
The girth of $G'$ is at least $2\log(|\calP'|)$.
\end{claim}
\begin{proof}[Proof of Claim]
Suppose that the claim does not hold, and consider a cycle of length $\ell<2\log(|\calP'|)$ in $G'$ as in Figure \ref{fig:girth}.
\begin{figure}
    \centering
    \includegraphics[scale=0.8]{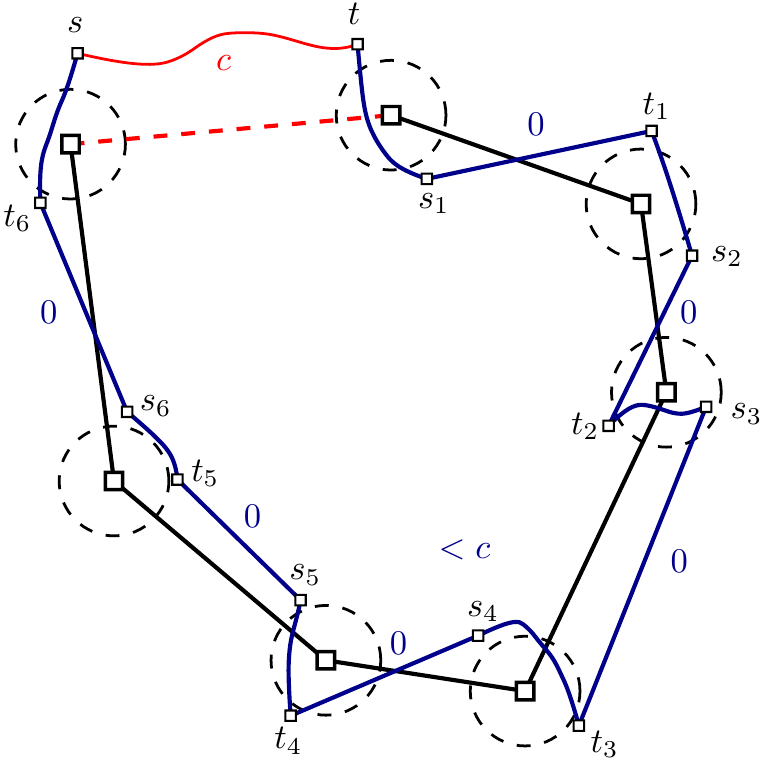}
    \caption{The girth argument.}
    \label{fig:girth}
\end{figure}
Consider the last edge that was added to the cycle. Suppose it was added because of terminal pair $\{s,t\}$ (in red on Figure \ref{fig:girth}). Each other edge in the cycle must have been created by a previous pair $\{s_m,t_m\}$ for $1\leq m < \ell$. By triangle inequality we have, for all $m< \ell$ that
\begin{equation*}
    d_G(t_m,s_{m+1})\leq d_G(t_m,u)+d_G(u,s_{m+1})\leq 4r,
\end{equation*}
where $u$ is the center of the ball that was close to both $t_m$ and $s_{m+1}$. Similarly, $d_G(t,s_1)\leq 4r$ and $d_G(t_{m-1},s)\leq 4r$. Since all pairs $\{s_m,t_m\}_{m\leq \ell-1}$ arrived before $\{s,t\}$, it must be that the shortest path in the current metric $G'$ (when the pair $\{s,t\}$ arrives) satisfies 
\begin{equation*}
    d_{G'}(s_m,t_m)=0
\end{equation*} (this holds for all three contraction rules). Hence, by going through the terminals $t,s_1,t_1,s_2,t_2,\ldots ,s_{\ell-1},t_{\ell-1},s$, greedy could have paid at most 
\begin{equation*}
    (4 r)\cdot \ell < (8 r)\cdot  \log(|\calP'|) \leq c,
\end{equation*}
which is a contradiction on the fact that greedy should always take the shortest path available. Hence the girth in $G'$ is at least $2\log(|\calP'|)$. We note that this whole argument also holds if we are in the case of Steiner Forest in decreasing metrics where additional edges are revealed over time. Indeed, these additional edges cannot make a shortest path longer. 
\end{proof}

Applying Moore's bound to the graph $G'$ (which has at most $|\calP'|$ vertices by construction) gives that $|E'| < 4 |V'|\leq 4|\calP'|$. Since the cardinality of $\calP'$ equals the number of vertices $|V'|$ in $G'$ plus the number of edges $|E'|$ in $G'$, and $|V'|=|\calB|$ property (b) of Lemma \ref{lem:AAB} follows. Properties (a), (c) and (d) follow by construction.
\end{proof}

\section*{Acknowledgments}
We are very grateful to our advisor Ola Svensson for insightful discussions on this problem.

\bibliographystyle{siamplain}
\bibliography{main.bib}

\end{document}